\documentclass[%
reprint,
superscriptaddress,
frontmatterverbose, 
preprintnumbers,
longbibliography,
amsmath,amssymb,
aps,
pra,
notitlepage,
nofootinbib,
twocolumn
]{revtex4-2}

\usepackage{lipsum}

\usepackage{graphicx}
\usepackage{dcolumn}
\usepackage{bm}
\usepackage{hyperref}
\usepackage{subfigure}
\hypersetup{
colorlinks=true,
linkcolor=blue,
filecolor=blue,
citecolor=blue,  
urlcolor=blue,
}

\newcommand{\mycomment}[1]{}

\usepackage[dvipsnames]{xcolor}

\usepackage{comment}

\usepackage{thmtools}
\usepackage{thm-restate}

\usepackage{enumerate} 

\usepackage{amssymb}
\usepackage{mathtools}

\usepackage{mathrsfs}
\usepackage{multirow}
\usepackage{bbm}

\usepackage[dvipsnames]{xcolor}

\definecolor{sanddune}{rgb}{0.59, 0.44, 0.09}
\definecolor{darkblue}{RGB}{0,0,102}
\definecolor{darkred}{rgb}{0.5,0.,0.}
\definecolor{BlueViolet}{RGB}{138,43,226}
\definecolor{SkyBlue}{RGB}{30,144,255}
\definecolor{DarkGreen}{RGB}{0,100,0}

\usepackage{amsthm}
\usepackage{amsmath}
\theoremstyle{plain}
\newtheorem{thm}{Theorem}
\newtheorem{lem}[thm]{Lemma}
\newtheorem{prop}[thm]{Proposition}
\newtheorem{cor}[thm]{Corollary}

\theoremstyle{definition}
\newtheorem{defn}{Definition}

\newtheorem{exmp}{Example}

\newtheorem{fact}{Fact}

\newtheorem*{rem}{Remark}

\newcommand{\ket}[1]{|#1\rangle}
\newcommand{\bra}[1]{\langle #1|}
\newcommand{\bracket}[2]{\langle #1|#2\rangle}
\newcommand{\ketbra}[2]{|#1\rangle\langle #2|}

\newcommand{\abs}[1]{\left|#1\right|}

\newcommand{\mc}{\mathcal}

\newcommand{\mbb}{\mathbb}

\newcommand{\stab}{\mathrm{STAB}}

\newcommand{\ba}{\begin{eqnarray}}
\newcommand{\ea}{\end{eqnarray}}

\DeclareMathOperator{\Tr}{Tr}

\newcommand{\mD}{\mathfrak{D}}

\newcommand{\diag}{\mathrm{diag}}

\newcommand{\lrom}{\mathrm{L}\mathcal{R}}

\newcommand{\zw}[1]{{#1}} 

\usepackage{makecell}
\usepackage{multirow}

\begin{document}
\newcommand{\onenorm}[1]{\left\| #1 \right\|_1}
\newcommand{\twonorm}[1]{\left\| #1 \right\|_2}

\newcommand{\ols}[1]{\mskip.5\thinmuskip\overline{\mskip-.5\thinmuskip {#1} \mskip-.5\thinmuskip}\mskip.5\thinmuskip} 
\newcommand{\olsi}[1]{\,\overline{\!{#1}}} 

\title{Entirely nonlocal quantum magic without entanglement}

\author{Fuchuan Wei}
\thanks{These authors contributed equally to this work.}
\affiliation{Yau Mathematical Sciences Center, Tsinghua University, Beijing 100084, China}

\author{Ruixia Wang}
\thanks{These authors contributed equally to this work.}
\affiliation{Beijing Key Laboratory of Fault-Tolerant Quantum Computing, Beijing Academy of Quantum Information Sciences, Beijing 100193, China}

\author{Yujia Zhang}
\affiliation{Beijing Key Laboratory of Fault-Tolerant Quantum Computing, Beijing Academy of Quantum Information Sciences, Beijing 100193, China}

\author{Huihui Li}
\affiliation{Yau Mathematical Sciences Center, Tsinghua University, Beijing 100084, China}

\author{Junfeng Li}
\affiliation{Yau Mathematical Sciences Center, Tsinghua University, Beijing 100084, China}

\author{Fei Yan}
\email{yanfei@baqis.ac.cn}
\affiliation{Beijing Key Laboratory of Fault-Tolerant Quantum Computing, Beijing Academy of Quantum Information Sciences, Beijing 100193, China}

\author{Zi-Wen Liu}
\email{zwliu0@tsinghua.edu.cn}
\affiliation{Yau Mathematical Sciences Center, Tsinghua University, Beijing 100084, China}

\date{\today}

\begin{abstract}
Nonstabilizerness, or magic, is an archetypal \emph{quantum} resource that is necessary for quantum computational advantage. Here we uncover a phenomenon seemingly at odds with the quantum nature of magic: entirely nonlocal magic (ENM)---magic present in joint correlations while absent locally---can exist
without entanglement.  We systematically study this separation and show that it is universal and operational: every magical state or channel can be encoded into and recovered from a separable ENM realization using only local stabilizer processing and classical communication. Building on this mechanism, we devise an activation key protocol that enables a provider to control user access to quantum computational power using a classical key. We further introduce magic secret sharing, a cooperative scheme that unlocks quantum computational power unavailable to any party
alone. On a superconducting quantum processor, we experimentally demonstrate activation key and network computing primitives, together with separable ENM state preparation and extraction protocols. Together, our results establish that magic can be classically activated, localized, and secret-shared without entanglement, providing new resource-control primitives for distributed quantum computation.
\end{abstract}

\pacs{}
\maketitle


\section{Introduction}

\zw{A central theme of  quantum information is the separation
between global capability and local access: a joint quantum system can
enable information processing tasks that no constituent subsystem can
perform alone.
Entanglement, the paradigmatic nonclassical feature of composite quantum systems, provides the canonical quantum mechanism behind this separation.}
\zw{Entanglement enables teleportation and nonlocal gates and underpins distributed quantum computation and quantum secret sharing~\cite{
grover1997quantumtelecomputation,Cirac1999Distributed,
Hillery1999secret,Main2025Distributed}.
Quantum data hiding provides another prominent manifestation of the same principle~\cite{Terhal2001Hiding}.
These paradigms naturally suggest an essential role of shared entanglement in distributing access to quantum resources across
separated parties.}

\zw{
Because quantum computation promises capabilities beyond classical reach, identifying and controlling the resources that enable this advantage is a central goal of quantum information science.
This has spurred extensive study of nonstabilizerness, or ``magic,'' a resource feature necessary for quantum computational advantage~\cite{
gottesman1998heisenberg,Aaronson2004Improved,
Bravyi2005universal,Veitch2014resource,Campbell2017Roads}. 
Beyond its computational role, magic has emerged as an important probe of many-body structure, dynamics, and critical phenomena, revealing aspects of quantum complexity not captured by entanglement~\cite{
White2021cft,Ellison2021symmetryprotected,Lorenzo2022stabilizer,
Liu2022manybody,Lami2023Nonstabilizerness,Tarabunga2023MBM,
Haug2023Scalable,Frau2024Nonstabilizerness,Tarabunga2024MPS,
Bejan2024Dynamical,Qian2024Augmenting,Niroula2024Phase,
Paviglianiti2025Estimating,Gu2025Separation,
wei2025longrangenonstabilizernessquantumcodes,Dowling2025Bridging,
Ding2025Evaluating,Falcao2025MBL,Turkeshi2025spreading,
Hoshino2026sreCFT,cao2026suddendeathentanglementrebirth,
dutta2026magicsecretsharingthreshold}.}

\zw{It is then natural to ask how magic is distributed and accessed across multipartite quantum systems, a question with direct relevance to quantum networking and distributed quantum computation.
In this work, we uncover a phenomenon that sharply challenges the presumed essential role of entanglement in quantumness distribution: magic can live solely in correlations  that are free of entanglement.}

More specifically, referring to globally present but locally absent magic as \emph{entirely nonlocal magic} (ENM), here we study ENM without entanglement (illustration in Fig.~\ref{fig:top}). In particular, we present systematic constructions of separable ENM and harness them to control when and where hidden magic becomes accessible.
We first show that any magical target state can be hidden in a separable ENM state and recovered exactly at a designated party: a classical label records a suitable Clifford randomization that makes both local marginals magic-free while retaining the correction needed for recovery.
Symmetric constructions remove this preferred direction and allow the target to be recovered at either party, albeit probabilistically.
For pure single-qubit targets, we further determine the optimal one-round recovery probability in the symmetric two-qubit setting.
We also identify a separable two-qubit ``golden'' ENM state that attains the maximum possible robustness of magic among all two-qubit states.
The same principle extends from states to dynamics, allowing any magical channel to be hidden in a separable ENM channel with magic-free marginal dynamics and recovered exactly using stabilizer operations and classical communication.
Leveraging these observations, we design a \emph{magic activation key} that controls access to quantum computational power: withholding the key leaves the effective dynamics magic-free, whereas revealing it restores the intended operation.
Building on the same principle, we develop \emph{magic secret sharing}, in which cooperative measurements and classical communication make globally encoded magic accessible at a node that cannot access it alone.

\zw{To substantiate our theoretical findings, we experimentally demonstrate representative ENM constructions and applications on a superconducting quantum processor, including representative ENM state preparation and extraction, and our magic activation key and secret sharing protocols.}

\begin{figure}[t]
\centering
\includegraphics[width=0.45\textwidth]{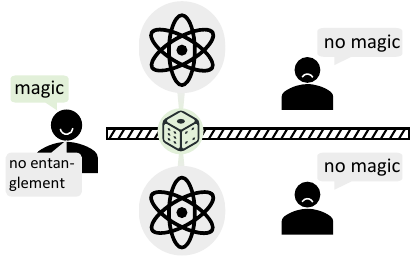}
\caption{Entirely nonlocal magic without entanglement.
In a separable bipartite state or channel, magic can be encoded entirely in correlations while both local reductions remain magic-free.
Cooperation using stabilizer operations and classical communication recovers the hidden resource.}
\label{fig:top}
\end{figure}

\section{Entirely nonlocal magic without entanglement}

\zw{We begin by developing a systematic framework for constructing separable ENM states and channels and recovering their encoded magic, thereby establishing  ENM as an operational resource.
We call the recovery one-way when it is directed deterministically to a fixed party, and two-way when either party may be selected as the recipient.
The framework is universal for arbitrary magical states and channels, while concrete realizations range from simple $T$-state encodings to an extremal separable ENM state attaining the maximum possible two-qubit magic.
All encoding and recovery procedures use only stabilizer processing and classical communication and require neither shared entanglement nor auxiliary magic.}

Formal definitions of ENM and separability, together with the stabilizer preliminaries, are provided in Methods.

\subsection{One-way separable ENM states}

\zw{Classical--quantum correlations provide a systematic route to one-way separable ENM states.
A suitable Clifford randomization, recorded in a classical label, makes both local marginals stabilizer while retaining the information needed to recover the target exactly.
The following theorem formalizes this reversible construction; see Appendix~\ref{app:proof_of_one_way} for the proof.}

\begin{thm}[One-way separable ENM states]\label{thm:separableENM}
Let $\sigma\notin\stab_n$.
Suppose a Clifford ensemble $\{p_i,W_i\}_{i=0}^{M-1}$ satisfies $\sum_ip_i W_i\sigma W_i^\dagger\in\stab_n$.
Then there exists a $(\lceil\log_2M\rceil+n)$-qubit separable ENM state $\rho_\sigma^{\rightarrow}$ and stabilizer protocols $\mc{E}_1$, $\mc{E}_2$, such that $\mc{E}_1(\sigma)=\rho_\sigma^{\rightarrow}$ and $\mc{E}_2(\rho_\sigma^{\rightarrow})=\sigma$.
\end{thm}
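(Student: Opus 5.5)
The natural candidate is the classical--quantum state
\begin{equation}
\rho_\sigma^{\rightarrow}=\sum_{i=0}^{M-1}p_i\,\ketbra{i}{i}_A\otimes W_i\sigma W_i^\dagger,
\end{equation}
where register $A$ holds $m=\lceil\log_2M\rceil$ qubits and the label $i$ is written in its computational basis. If $M<2^m$, the unused basis states simply carry zero weight. This state is manifestly separable, since it is a convex combination of product states $\ketbra{i}{i}\otimes W_i\sigma W_i^\dagger$. It has the required size $m+n$.

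For the ENM property I will check the two marginals and the global state separately. The $A$ marginal is $\sum_ip_i\ketbra{i}{i}$, a mixture of computational basis states and hence in $\stab_m$. The $B$ marginal is $\sum_ip_iW_i\sigma W_i^\dagger$, which is in $\stab_n$ by hypothesis. For global magic I argue by contradiction. Suppose $\rho_\sigma^{\rightarrow}$ were in $\stab_{m+n}$. Stabilizer operations (stabilizer channels) preserve $\stab$, and the decoding map $\mc{E}_2$ below is such an operation. Then $\sigma=\mc{E}_2(\rho_\sigma^{\rightarrow})$ would also be stabilizer, contradicting $\sigma\notin\stab_n$. So global magic follows from the recovery step, and that step is the one that needs to be set up carefully.

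For the maps, I will take the encoder $\mc{E}_1$ to be the following stabilizer protocol. Prepare $\ket{0}^{\otimes m}$. Sample $i$ with probability $p_i$, which is classical randomness and therefore allowed in stabilizer protocols. Write $\ket{i}$ into $A$ using Pauli $X$ gates. Apply the Clifford $W_i$ to the input. Equivalently, $\mc{E}_1(\sigma)=\sum_ip_i\ketbra{i}{i}\otimes W_i\sigma W_i^\dagger$. For the decoder $\mc{E}_2$, measure $A$ in the computational basis, which is a Pauli $Z$ measurement. On outcome $i$, apply the Clifford $W_i^\dagger$ to $B$, then discard $A$. This yields $\sum_ip_iW_i^\dagger W_i\sigma W_i^\dagger W_i=\sigma$ exactly. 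The decoder is one-way: it needs only classical communication of $i$ from $A$ to $B$. Since it consists of Pauli measurement, classically controlled Clifford gates and partial trace, it is a stabilizer protocol.

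The only delicate point is definitional rather than computational. I need that the paper's notion of stabilizer protocol, as defined in Methods, includes classical randomness, computational-basis preparation and measurement, classically controlled Cliffords and partial trace, and that such protocols map $\stab$ to $\stab$. This is standard, and it is what makes the contradiction argument for global magic valid. I would cite it as a known closure property of stabilizer operations.
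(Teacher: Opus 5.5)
Your proposal is correct and follows essentially the same route as the paper: the same classical--quantum state $\sum_i p_i\ketbra{i}{i}\otimes W_i\sigma W_i^\dagger$, the same checks on the marginals, the same sample-and-apply-$W_i$ encoder and measure-and-undo-$W_i^\dagger$ decoder, and the same argument that $\rho_\sigma^{\rightarrow}$ is globally nonstabilizer because a stabilizer protocol cannot map a stabilizer state to $\sigma\notin\stab_n$. One small point you could add: the hypothesis forces $M\ge 2$, since a single Clifford cannot map $\sigma\notin\stab_n$ into $\stab_n$, so the label register always has at least one qubit.
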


\zw{An explicit encoding scheme is
\begin{equation}
\rho_\sigma^{\rightarrow}
\coloneqq
\sum\nolimits_i p_i\ketbra{i}{i}_R\otimes W_i\sigma W_i^\dagger,
\label{eq:rho-def-main}
\end{equation}
where $R$ is an $m$-qubit classical label register with
$m=\lceil\log_2M\rceil$.
The construction is universal: for every nonstabilizer $\sigma$, the uniform ensemble of all $n$-qubit Pauli $Z$ strings completely dephases $\sigma$ in the computational basis, so the averaged state lies in $\stab_n$.
The following elementary example illustrates the resulting encoding and recovery:}

\begin{exmp}
\label{exmp:one-way-T-states}
Let $\ket{T^\perp}=Z\ket{T}$.
The state
\begin{equation}
\rho_T^{\rightarrow}\coloneqq \frac{1}{2}\ketbra{0}{0}\otimes\ketbra{T}{T}+\frac{1}{2}\ketbra{1}{1}\otimes\ketbra{T^\perp}{T^\perp},
\end{equation}
is a separable ENM state.
\end{exmp}

Measuring the first qubit and communicating the outcome $b\in\{0,1\}$ allows the second party to apply $Z^b$ and deterministically recover $\ket{T}$.
\zw{A single classical label bit can similarly mask two copies of the $T$ state:}  $\rho_{TT}^{\rightarrow}=\frac{1}{2}\sum_{b=0}^{1}\ketbra{b}{b}\otimes(Z^b\ketbra{T}{T}Z^b)^{\otimes 2}$. Once $b$ is revealed, $(Z^b)^{\otimes 2}$ recovers $\ket{T}^{\otimes 2}$.
More generally, reversibility under stabilizer protocols allows families of one-way separable ENM states to retain exponentially large magic; see Methods.
Because the key register stores only classical information, it need not be maintained in quantum memory, making $\rho_T^{\rightarrow}$ more robust under noise than its entangled analogue; see Appendix~\ref{app:Oneway_T_comparasion}.

\subsection{Two-way separable ENM states}

\zw{The deterministic one-way protocol designates a fixed recipient.
This directional restriction can nevertheless be removed for any target state: symmetrizing Eq.~\eqref{eq:rho-def-main} and adding flag qubits yields an embedding from which either party can recover $\sigma$ with probability $1/2$; see Appendix~\ref{app:flagged-two-way-embedding}.

A more physically transparent alternative dispenses with auxiliary flags altogether.
The state is formed by mixing the target branch $\sigma\otimes\sigma$ with a correlated stabilizer component chosen to neutralize magic in each local marginal while retaining globally recoverable magic in the joint state.
If the stabilizer component can be filtered out by a suitable local measurement, the successful outcome enables exact recovery of $\sigma$ on the other subsystem:}

\begin{thm}[Two-way separable ENM states]\label{thm:two_way_general}
Let $\sigma\notin\stab_n$.
If there exist $\tau\in\stab_n$, $p\in(0,1]$, and a stabilizer measurement $\Pi$ satisfying $p\sigma+(1-p)\tau\in\stab_n$ and $\Tr(\Pi\tau)=0<\Tr(\Pi\sigma)$, then
\begin{equation}
\rho^{\leftrightarrow}_{\sigma}\coloneqq p\,\sigma\otimes\sigma+(1-p)\,\tau\otimes\tau\label{eq:general-two-way-enm}
\end{equation}
is separable and ENM.
Local stabilizer operations and classical communication can extract $\sigma$ on either subsystem with probability $p_{\mathrm{succ}}=p\Tr(\Pi\sigma)$.
\end{thm}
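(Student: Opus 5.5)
The proof splits into three claims: $\rho^{\leftrightarrow}_\sigma$ is separable, it is ENM, and the extraction works. We take the definitions from Methods. \emph{ENM} means the global state is not in $\stab_{2n}$ while both reduced states lie in $\stab_n$. \emph{Separable} means a convex combination of product states.

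\textbf{Separability.} This is immediate. Equation~\eqref{eq:general-two-way-enm} is a convex combination, with weights $p$ and $1-p$, of the product states $\sigma\otimes\sigma$ and $\tau\otimes\tau$.

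\textbf{Magic-free marginals.} By the symmetry of the construction, both marginals equal $p\sigma+(1-p)\tau$. By hypothesis this lies in $\stab_n$.

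\textbf{Global magic.} This step carries the real content, and I would argue it through the extraction protocol rather than by a direct convexity argument. Let party $A$ measure the two-outcome stabilizer measurement $\{\Pi,\mathbb{1}-\Pi\}$ on its subsystem. Since $\Tr(\Pi\tau)=0$, the $\tau\otimes\tau$ branch never produces outcome $\Pi$. The probability of outcome $\Pi$ is therefore $p\Tr(\Pi\sigma)+(1-p)\cdot 0=p\Tr(\Pi\sigma)>0$. Conditioned on this outcome, the unnormalized post-measurement state of $B$ is $\Tr_A[(\Pi\otimes\mathbb{1})\rho^{\leftrightarrow}_\sigma]=p\Tr(\Pi\sigma)\,\sigma$, so $B$ holds exactly $\sigma$. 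Party $A$ announces the outcome by classical communication, and $B$ keeps its state on success. Exchanging the roles of $A$ and $B$ gives the same protocol in the other direction, which establishes the extraction claim with $p_{\mathrm{succ}}=p\Tr(\Pi\sigma)$.

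Now suppose, for contradiction, that $\rho^{\leftrightarrow}_\sigma\in\stab_{2n}$. Write it as a mixture of stabilizer states, each of which stays stabilizer under a stabilizer measurement, partial trace, and postselection on a nonzero-probability outcome. Then the normalized conditional state $\sigma$ would be a mixture of stabilizer states, i.e.\ $\sigma\in\stab_n$, contradicting the hypothesis. Hence $\rho^{\leftrightarrow}_\sigma$ is ENM.

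\textbf{Main obstacle.} The delicate point is the closure property used in the contradiction. Postselected stabilizer measurements map stabilizer states to stabilizer states, but only branches with nonzero weight should be renormalized. The cleanest route is to apply the projector to each pure stabilizer component $\ketbra{s_k}{s_k}$ of a putative decomposition. Each term $(\Pi\otimes\mathbb{1})\ketbra{s_k}{s_k}(\Pi\otimes\mathbb{1})$ is either zero or proportional to a stabilizer state, by the standard Gottesman--Knill update rule. Here "stabilizer measurement" should be interpreted as a projector onto a stabilizer code space, e.g.\ a Pauli-eigenspace projector, as defined in Methods. The partial trace of a stabilizer state is a stabilizer mixed state. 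Therefore the normalized conditional state $\sigma$ is a convex combination of stabilizer states, which gives the contradiction.
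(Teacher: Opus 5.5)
Your proposal is correct and follows essentially the same route as the paper. Separability and the common marginal $p\sigma+(1-p)\tau$ are read off directly. Measuring the effect $\Pi$ on one side removes the $\tau\otimes\tau$ branch and leaves exactly $\sigma$ with probability $p\Tr(\Pi\sigma)$. Global non-stabilizerness then follows by contradiction, since a postselected stabilizer measurement on a state in $\stab_{2n}$ would leave a stabilizer conditional state. Your decomposition into pure stabilizer components, each sent to zero or to a stabilizer state by a Pauli-eigenspace projector and then partially traced, spells out the closure property that the paper asserts in one line.
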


\begin{figure*}[t]
\centering
\includegraphics[width=0.97\textwidth]{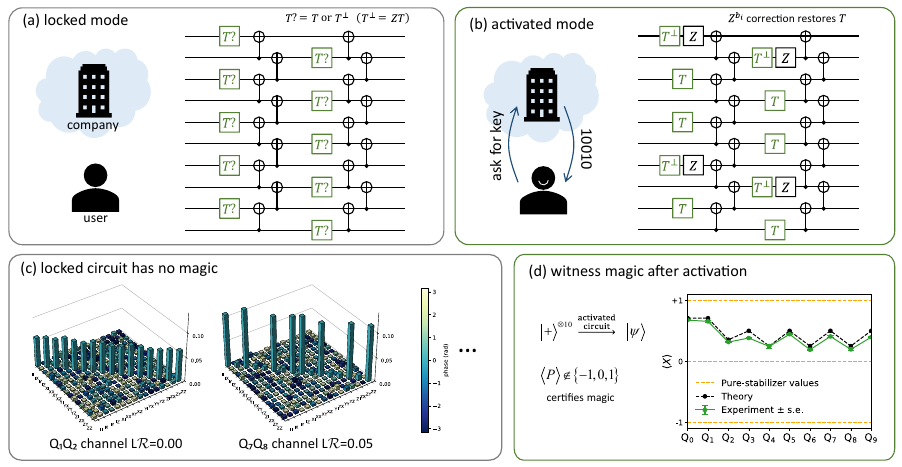}
\caption{
Magic activation key.
A classical secret key switches a Clifford+$T$ processor between a magic-free locked mode and n activated mode \zw{that restores the intended computation}, without quantum communication or shared entanglement.
(a) In the locked mode, each protected $T$ gate use implements either $T$ or $T^\perp\coloneqq ZT$ \zw{according to the key unknown to the user}.
Without the key, hidden randomization (possibly correlated across gates) makes user's effective protected dynamics a stabilizer channel.
(b) Once the key bit $b_i$ is revealed, the user applies the Pauli correction $Z^{b_i}$ after the corresponding protected gate, restoring the intended $T$ operation.
The figure shows a two-gate mask: one key bit makes the same hidden choice, $T$ or $T^\perp$, at two protected  locations, and applying the corresponding $Z^{b_i}$ correction after each location restores both $T$ gates.
(c) Experimental characterization of the displayed 10-qubit brickwall circuit. 
Selected adjacent two-qubit induced channels in the locked mode have near-zero Choi-state log-robustness of magic, $\lrom(\cdot)\coloneqq\mathrm{ln}(\mc{R}(\cdot))$, consistent with the predicted magic-free dynamics on the probed subsystems. See Methods for the faithfulness and scope of this diagnostic.
Complete process-tomography results for all nine adjacent pairs are reported in Appendix~\ref{app:activation key-tomography}.
(d) \zw{After activation, the measured $\langle X\rangle$ values for the output obtained from $\ket{+}^{\otimes 10}$ agree with the theoretical predictions for the restored circuit, providing an operational signature of the recovered $T$ operations.}
}
\label{fig:activationkey}
\end{figure*}

The proof is given in Appendix~\ref{app:proof_of_two_way}.
For a pure single-qubit target, this neutralization mechanism has a simple Bloch-sphere picture: mixing the two-copy target branch with a pair of stabilizer states opposite to the closest stabilizer direction places each local marginal exactly on the boundary of the stabilizer octahedron, and measuring along the same direction then filters out the stabilizer branch and reveals the target state on the other side.
The following theorem further shows that this construction is optimal (complete proof in Appendix~\ref{app:two_way_optimal_success_prob}):

\begin{thm}[Optimal two-qubit separable ENM state]\label{thm:two_way_optimal}
\zw{Let $\ket{\psi}$ be a pure state satisfying
$\ketbra{\psi}{\psi}\notin\stab_1$ with Bloch vector $\mathbf t$, }
and let $\ket{\phi}$ be a stabilizer state maximizing $|\bracket{\phi}{\psi}|^2$, with orthogonal complement $\ket{\phi^\perp}$.
Then
\begin{equation}
\rho_\psi^\leftrightarrow \coloneqq \frac{\|\mathbf t\|_1-1}{\|\mathbf t\|_1+1} \ketbra{\phi^\perp}{\phi^\perp}^{\otimes 2} + \frac{2}{\|\mathbf t\|_1+1} \ketbra{\psi}{\psi}^{\otimes 2}
\end{equation}
is a two-way separable ENM state and attains the optimal one-round local stabilizer extraction probability $p_{\mathrm{success}}=\frac{1+\|\mathbf t\|_\infty}{1+\|\mathbf t\|_1}$ among all permutation-symmetric two-qubit separable ENM states.
\end{thm}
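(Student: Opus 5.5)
The proof has two parts: showing that $\rho_\psi^\leftrightarrow$ is a two-way separable ENM state with the stated success probability (achievability), and then an upper bound valid for every permutation-symmetric separable ENM state (optimality). For both parts I would first use single-qubit Clifford symmetry, which preserves $\stab_1$, separability and the class of local stabilizer protocols. It lets me assume that $\mathbf t=(t_x,t_y,t_z)$ lies in the closed positive octant with $t_z=\|\mathbf t\|_\infty$. Then the closest stabilizer state is $\ket{\phi}=\ket 0$ and $\ket{\phi^\perp}=\ket 1$. Since $\ketbra{\psi}{\psi}\notin\stab_1$ and $\stab_1$ is the octahedron $|r_x|+|r_y|+|r_z|\le 1$, we have $\|\mathbf t\|_1>1$, so both weights are valid.

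\emph{Achievability.} Separability is immediate because $\rho_\psi^\leftrightarrow$ is a mixture of product states. Its marginals are equal, with Bloch vector $\mathbf r=p\mathbf t+(1-p)(0,0,-1)$ where $p=2/(\|\mathbf t\|_1+1)$. One checks $r_x+r_y+r_z=p(\|\mathbf t\|_1+1)-1=1$ and that all components of $\mathbf r$ are nonnegative, since $r_z=p(1+t_z)-1\ge0$ follows from $t_z\ge \|\mathbf t\|_1/3$ and $\|\mathbf t\|_1\le\sqrt3$. Hence each marginal lies on a facet of the octahedron and is stabilizer. This is exactly Theorem~\ref{thm:two_way_general} with $\sigma=\ketbra{\psi}{\psi}$, $\tau=\ketbra11$ and $\Pi=\ketbra00$. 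That theorem gives global magic, and it also gives extraction on either side with probability $p\,|\langle 0|\psi\rangle|^2=p(1+t_z)/2=(1+\|\mathbf t\|_\infty)/(1+\|\mathbf t\|_1)$.

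\emph{Optimality.} Take any permutation-symmetric separable $\rho$ with common stabilizer marginal $\rho_A=\rho_B$ (Bloch vector $\mathbf r\in\stab_1$). Take any one-round protocol in which $A$ performs a stabilizer measurement, communicates, and $B$ applies a stabilizer correction. The correction is Clifford-covariant and can be absorbed, so success means that some effect $E$ in $A$'s stabilizer POVM yields $\Tr_A[(E\otimes I)\rho]=q\,\ketbra{\psi}{\psi}$ with success probability $q$. I would first reduce to $E$ being a multiple of a single Pauli projector, say $\ketbra{s}{s}$ for a stabilizer state $\ket s$. The justification is convexity: a stabilizer POVM's successful effects are nonnegative combinations of stabilizer projectors, and both the conditional-state equation and $q$ are linear in $E$.

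Two constraints then follow. First, $q=\Tr(E\rho_A)\le \langle s|\rho_A|s\rangle$, which ties $q$ to a component of $\mathbf r$. Second, $\rho_B-q\ketbra{\psi}{\psi}\ge0$ with $\rho_B=\rho_A$, i.e.\ $\mathbf r=q\mathbf t+(1-q)\mathbf w$ with $\|\mathbf w\|_2\le1$. Combining these with the facet inequality $r_x+r_y+r_z\le1$ gives a small linear program in $(q,\mathbf w,s)$. For $s=\ket0$ the first constraint reads $r_z\ge 2q-1$. I then want to show that the facet constraint forces $q(\|\mathbf t\|_1+1)\le 1+t_z$, and that choosing $s$ along $x$ or $y$ only weakens the bound because $t_z$ is maximal.

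\emph{Main obstacle.} The naive relaxation $\|\mathbf w\|_2\le1$ is too weak, since $\mathbf w$ could have large negative $x,y$ components. So I expect the crux to be using the structure of the failure branch more fully. By symmetry and separability, the state $B$ holds after $A$'s complementary outcome is itself constrained, because $\rho=\rho^{\mathrm{swap}}$ means the same measurement applied on $B$ must be consistent. My first attempt would be to apply $E$ on both sides, write $\rho$'s separable decomposition restricted to the $\ket s$/$\ket{s^\perp}$ blocks, and show that the component of $\rho_B$ not explained by $q\ketbra\psi\psi$ must be supported so that $\mathbf w$ has $w_x,w_y\ge0$ (up to the reflection fixing the $z$-axis). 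The facet inequality then yields the claimed bound, and equality is achieved by $\mathbf w=(0,0,-1)$.
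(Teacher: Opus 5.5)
Your achievability argument is correct and matches the paper's: Clifford normalization, the check $\|\mathbf r\|_1=1$ with $r_z\ge 0$, then Theorem~\ref{thm:two_way_general}. The optimality argument, however, has a genuine gap at its decisive step.

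\textbf{The planned constraints are too weak.} You intend to combine $\mathbf r=q\mathbf t+(1-q)\mathbf w$ with $\|\mathbf w\|_2\le 1$, the sign condition $w_x,w_y\ge 0$, the facet inequality, and $q\le(1+r_z)/2$. Together these only give $q\le 2/(2+t_x+t_y)$. That value is attained within your relaxation, and it is strictly larger than $(1+t_z)/(1+\|\mathbf t\|_1)$ for every nonstabilizer $\psi$. Concretely, take $\mathbf t=(1/\sqrt2,0,1/\sqrt2)$. The point $q=4/(4+\sqrt2)\approx 0.739$, $\mathbf w=(0,0,2\sqrt2-3)$, $\mathbf r=(q/\sqrt2,0,2q-1)$ satisfies all of your constraints with $\|\mathbf r\|_1=1$. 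The true optimum is $1/\sqrt2$.

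\textbf{The equality case is inconsistent.} For $\mathbf w=(0,0,-1)$ one has $r_z=q(1+t_z)-1<2q-1$, which violates $q\le(1+r_z)/2$ for every $q>0$. You have conflated the mixing weight $p=2/(1+\|\mathbf t\|_1)$ with the success probability $q=p(1+t_z)/2$. In the optimal state the residual is $\rho_B-q\ketbra{\psi}{\psi}=(p-q)\ketbra{\psi}{\psi}+(1-p)\ketbra{1}{1}$, which still contains a copy of $\psi$. This forced $\psi$-weight in the failure branch is exactly what a marginal-level relaxation cannot see.

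\textbf{The missing idea} is the rigidity step in Lemma~\ref{lem:fpsi_closed_form}. Your instinct to use the separable decomposition is right, but you plan to extract too little from it. Symmetrize a separable decomposition,
$\rho=\sum_\ell p_\ell\tfrac12\bigl(\ketbra{a_\ell b_\ell}{a_\ell b_\ell}+\ketbra{b_\ell a_\ell}{b_\ell a_\ell}\bigr)$.
The conditional state is then $\sum_\ell\tfrac{p_\ell}{2}\bigl(|\bracket{s}{a_\ell}|^2\ketbra{b_\ell}{b_\ell}+|\bracket{s}{b_\ell}|^2\ketbra{a_\ell}{a_\ell}\bigr)$. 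It must be proportional to the pure state $\ketbra{\psi}{\psi}$, and $\bracket{s}{\psi}\neq 0$. Hence every term is either $\psi\otimes\psi$ or $s^\perp\otimes s^\perp$. The feasible set therefore collapses to
$\rho(\lambda)=\lambda\ketbra{\psi}{\psi}^{\otimes 2}+(1-\lambda)\ketbra{s^\perp}{s^\perp}^{\otimes 2}$, with $q=\lambda|\bracket{s}{\psi}|^2$.
With $\mathbf s$ the Bloch vector of $\ket{s}$, the single condition $\|\lambda\mathbf t-(1-\lambda)\mathbf s\|_1\le 1$ then gives the sharp bound. For badly oriented $s$ it forces $\lambda=0$. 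Your sign condition on $\mathbf w$ follows from this structure, but it discards the quantitative link between the success and failure branches.

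\textbf{Two earlier reductions are only asserted.}
\begin{itemize}
\item Absorbing $B$'s correction requires showing that a $\mathrm{CSP}_1$ map outputting a pure nonstabilizer state acts as a Clifford unitary in every extremal branch. The paper uses Lemma~\ref{lemma:CPS1} for this. Only then is the conditional state pure, so that $\rho$ can be replaced by $(U\otimes U)\rho(U\otimes U)^\dagger$.
\item Linearity in $E$ is not convexity. The weights of stabilizer projectors in an effect $E\le\mbb{I}$ can sum to more than one, e.g.\ $E=\ketbra{0}{0}+\ketbra{1}{1}$. You must also show that the successful weight is at most one. The paper does this using $\sum_i q_i=1$ in the $\mathrm{CSP}_1$ decomposition. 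It adds that the two outcomes of a Pauli measurement cannot both succeed, since their conditional operators on $B$ sum to $\rho_B$, which is not proportional to $\ketbra{\psi}{\psi}$.
\end{itemize}
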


\begin{exmp}
For $\ket{\psi}=\ket{T}$, Theorem~\ref{thm:two_way_optimal} gives
\begin{equation}
\rho_T^{\leftrightarrow}\coloneqq(3-2\sqrt{2})\ketbra{-}{-}^{\otimes 2}+\,\bigl(1-(3-2\sqrt{2})\bigr)\ketbra{T}{T}^{\otimes 2}.
\end{equation}
Measuring either qubit in the $X$ basis and postselecting on the $+$ outcome prepares $\ket{T}$ on the other qubit with probability $1/\sqrt2$.
\end{exmp}

Hence, producing $N$ copies of $\ket{T}$ consumes, in expectation, $\sqrt2\,N$ copies of $\rho_T^{\leftrightarrow}$.
The analogous two-way $F$-type state extracts $\ket{F}$ with probability $1/\sqrt3$; see Example~\ref{exmp:two-way-F-state} in Appendix~\ref{app:two_way_optimal_success_prob}.

A particularly notable two-qubit example combines bidirectional magic extraction with extremal global magic: the ``golden" state $\rho_{\mathrm{golden}}$ (see Methods) is separable and ENM, yet attains a robustness of magic (RoM) of
$\mc{R}(\rho_{\mathrm{golden}})=\sqrt{5}$, the maximum RoM attainable by any two-qubit state~\cite{Howard2017application}.

\zw{These states provide complementary experimental benchmarks for bidirectional recovery and extremal global magic.
We experimentally prepare the two-way $T$- and $F$-type ENM states and $\rho_{\mathrm{golden}}$ and demonstrate magic extraction to either party on a superconducting quantum processor.
We also prepare the golden state and certify that its global log-RoM is close to the two-qubit maximum $\ln\sqrt{5}$, while both marginals remain almost stabilizer; see Appendix~\ref{app:state-experiments}.}

\subsection{Separable ENM channels}

\zw{Magic can be hidden not only in a state but also in the action of a device.
To promote the label-and-correction mechanism of the one-way state construction to control over quantum operations, we extend it to channels: any magical (formally, non-CSP) target channel $\mc{E}$ can be reversibly embedded, via stabilizer superchannels, into an LOSR---and hence separable---ENM channel $\mc{F}_{\mc{E}}$ with magic-free marginal dynamics for both parties; see Definition~\ref{def:ENM_channel} and Theorem~\ref{thm:separableENMchannel} in Methods.}

Operationally, one applies a random Clifford correction after $\mc{E}$ and stores its label in a classical register.
Forgetting the label averages away the locally accessible magic, whereas reading it and undoing the correction recovers $\mc{E}$ exactly.
This reversible masking mechanism provides the operating principle for the activation key protocol developed next.

\begin{figure}[t]
\centering
\includegraphics[width=0.48\textwidth]{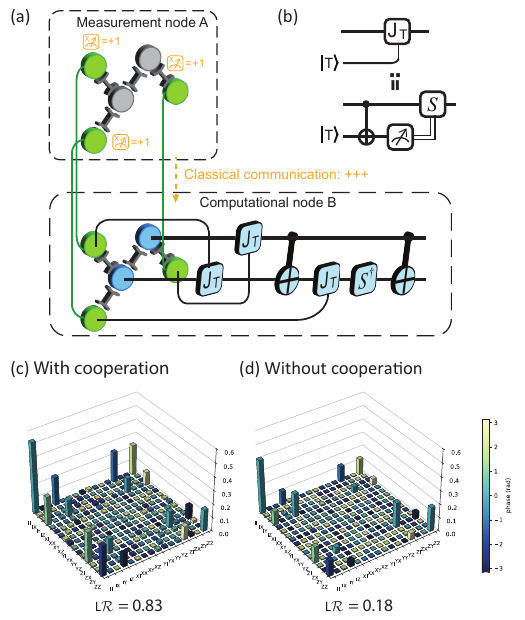}
\caption{
Magic secret sharing with two-way $T$-type ENM links.
(a) Three separable ENM links (green) connect measurement node A to computational node B.
Measuring the link qubits at A in the $X$ basis and postselecting on $X=+1$ for all three, with the outcomes communicated to B, localizes three $\ket{T}$ states at B.
These states are consumed to implement $\mathrm{C}S$ on two target qubits at B.
(b) The standard $T$-state injection gadget $J_T$.
(c,d) Experimental process tomography yields reconstructed Choi-state log-RoM values of $0.83$ with cooperation and $0.18$ without cooperation, which are close to ideal values.
}
\label{fig:two_way_experiment}
\end{figure}

\section{Magic activation key for quantum computational power}\label{sec:activation key}

\zw{The separation between global magic and local access realized by separable ENM channels enables practical applications in quantum resource control.
As a first example, we develop a classical activation key protocol for protected non-Clifford operations.
For each protected use, a random classical label serves as the key: withholding it leaves the user with key-averaged magic-free dynamics, whereas revealing it specifies the stabilizer correction needed to recover the intended operation exactly.
We instantiate this protocol for Clifford+$T$ circuits, requiring neither quantum communication nor shared entanglement.}

Fault-tolerant quantum architectures  combine a high-fidelity Clifford backbone with magic resources supplied by e.g.~distillation or cultivation~\cite{Bravyi2005universal,Bravyi2012distillation,Campbell2017Roads,gidney2024magicstatecultivationgrowing}.
Because magicless computation is efficiently classically simulable, controlling access to magic provides a natural way to meter a processor's quantum computational capability.

\zw{For each protected $T$ gate, the device uses a uniformly random bit $b_i$ and implements $Z^{b_i}T$ while keeping $b_i$ hidden from the user.
Writing $[U]$ for the channel $[U](\rho)=U\rho U^\dagger$ and defining $T^\perp\coloneqq ZT$, averaging over the hidden bit gives the effective channel available without the key:
\begin{equation}
\frac{1}{2}[T]+\frac{1}{2}[T^\perp]
=
\frac{1}{2}[\mbb{I}]+\frac{1}{2}[Z],
\label{eq:locked-T-main}
\end{equation}
which is complete dephasing in the computational basis and hence a stabilizer channel.
Once $b_i$ is released, the user applies $Z^{b_i}$ and recovers the intended $T$ gate.
Withholding or releasing the key therefore switches the user's effective processor between a magic-free mode and the intended Clifford+$T$ computation.
See Methods for the complete operational setting, including key management and an access-metering model analogous to software licensing.}

{Formally, \zw{when considered jointly with the classical key register}, each masked gate is an LOSR and hence separable ENM channel: the classical key does not create magic but helps unlock magic encoded in the channel's global correlations.
Activation uses only classical communication and stabilizer corrections, requiring neither quantum communication nor shared entanglement.
Moreover, one classical bit can mask a pair of $T$ gates within an otherwise stabilizer circuit, even when they occur in different circuit layers or are causally connected (see Methods and Appendix~\ref{app:ActivationKey}).
The same activation key protocol extends to non-Clifford controlled-phase rotations used in the quantum Fourier transform underlying Shor's algorithm~\cite{Shor1994Algorithms}, as well as Pauli rotations that form standard building blocks of variational NISQ circuits such as QAOA~\cite{Cerezo2021variational,Bharti2022NISQ}.
See Methods for the general construction and Appendix~\ref{app:activation key-gate-hiding} for explicit derivations.}

\zw{We experimentally implement both operating modes on a superconducting quantum processor
(Fig.~\ref{fig:activationkey}(c, d)).
In the locked mode, process tomography of the key-averaged induced channels on all nine adjacent qubit pairs gives reconstructed Choi log-RoM values near zero.
This provides a local experimental diagnostic, while CSP of the ideal full ten-qubit channel follows analytically.
In the activated mode, the revealed key restores the intended $T$ gates, and the measured single-qubit $\langle X\rangle$ values for the input $\ket{+}^{\otimes 10}$ agree well with the theory (see Methods).}

\section{Magic secret sharing through cooperative localization}

\zw{Distributed protocols commonly rely on entanglement to make information or operations accessible collectively but not to any party alone, as in distributed computation, data hiding, and secret sharing~\cite{
grover1997quantumtelecomputation,Cirac1999Distributed,
Main2025Distributed,Terhal2001Hiding,DiVincenzo2002data,
Hillery1999secret,Gottesman2000Theory,Zhang2005Multiparty}.
This has motivated a sustained search for quantum phenomena and operational advantages that do not require entanglement~\cite{
Bennett1999nonlocality,GUO2003QSSnoENT,
Braun2018Quantumenhanced,Halder2019Strong}.
Our recoverable separable ENM states bring this possibility to computational power: each node is locally magic-free, yet a stabilizer measurement at one node, followed by communication of its outcome, can localize the magic encoded in their joint correlations at another node.
Classical communication therefore unlocks, rather than supplies, the encoded magic.
We call this cooperative resource-access mechanism \emph{magic secret sharing}.}

\zw{In this resource-access sense, an ENM link plays a role analogous to that of an EPR link.
An EPR pair supports teleportation and nonlocal gates, whereas an ENM link has a more specialized function: it controls where a prescribed nonstabilizer resource becomes available without giving either endpoint locally accessible magic.
Accordingly, $\rho_T^{\rightarrow}$ acts as a deterministic directed access link, whereas $\rho_T^{\leftrightarrow}$ provides probabilistic access in either direction.}

We demonstrate this mechanism by compiling the non-Clifford gate
$\mathrm{C}S=\diag(1,1,1,i)$ using three two-way $T$-type ENM links
(Fig.~\ref{fig:two_way_experiment}).
Measurements in the $X$ basis at one end, followed by communication of the successful outcomes, localize three $\ket{T}$ states at the other end for magic state injection.
If the measurements are not performed, or if their outcomes are ignored, the receiving block retains its stabilizer marginal and the induced process remains a stabilizer channel.
The non-Clifford operation therefore appears only in a classically selected branch of an otherwise magic-free local process.

\zw{We implement this protocol on a superconducting quantum processor using two spatially separated five-qubit blocks connected by three two-way $T$-type ENM links.
Conditioned on the successful upper-block outcomes, the three extracted $\ket{T}$ states supply the magic state injections needed to implement $\mathrm{C}S$ on the two remaining lower-block qubits.
Process tomography shows a clear contrast: the reconstructed Choi state log-RoM is high for the cooperative branch but remains close to zero for the no-cooperation control, demonstrating that cooperation makes the encoded magic locally accessible (see Fig.~\ref{fig:two_way_experiment} and Methods for experimental details).
Together, the two experiments realize complementary forms of ENM-based resource control: a classical key determines whether magic is accessible, while cooperation determines where it is localized.}

\section{Discussion}

\zw{Magic supplies quantum computational power, yet our work shows that entanglement is not required to distribute access to it: the resource can reside entirely in separable correlations, globally present but absent from each local party.
Our reversible ENM embeddings show that this separation is universal across magic-bearing states and channels.
By harnessing this mechanism, we develop magic activation key and  secret sharing protocols that place access to nonclassical computational power under classical or cooperative control.
Furthermore, we experimentally realize representative ENM states and ENM-based resource-control protocols on a superconducting quantum processor.
ENM thereby provides a new foundation for controlling access to quantum computational power in distributed settings.
}

These findings open a broader program of computational power manipulation in quantum networks.
The multipartite constructions in Methods show that extensive magic can be hidden from every party's local view with small cost, motivating broader families of many-body ENM states and channels with scalable preparation and recovery.
\zw{A natural complementary direction is to allow each local view to contain a classically simulable amount of magic and investigate how our results extend to this relaxed setting.
For practical applications, it will be important to characterize the trade-offs among key length, classical communication cost, recovery probability, and robustness to noise in ENM embedding and recovery protocols.
Taken together, our results reveal a broader design principle for distributed quantum computation: separable correlations can hide computational power, while classical information and cooperation determine when, where, and by whom it can be accessed.}


\section*{Methods}

\subsection*{Key preliminaries and definitions}

The $n$-qubit Clifford group $\mathcal{C}_n$ is the subgroup of the $n$-qubit unitary group generated by Hadamard, phase, and CNOT gates. States generated by applying an \(n\)-qubit Clifford unitary to \(\ket{0}^{\otimes n}\) are called pure stabilizer states. The convex hull of the \(n\)-qubit pure stabilizer states, denoted by \(\stab_n\), is taken as the free set in the resource theory of magic. We further write \(\stab=\bigcup_{n\in\mathbb{Z}_{\geq 1}}\stab_n\). Computational protocols composed of stabilizer state preparations, Clifford unitaries, Pauli measurements, and classical feedforward can be efficiently simulated on a classical computer in time polynomial in the number of qubits and the protocol size.

\begin{defn}\label{def:ENM_state}
Consider systems $A \cong (\mathbb{C}^2)^{\otimes n}$ and $B \cong (\mathbb{C}^2)^{\otimes m}$.
A bipartite state $\rho$ acting on $A\otimes B$ is called \emph{locally magic-free} across $A:B$ if its marginals satisfy $\rho_A\in\stab_n$ and $\rho_B\in\stab_m$.
It is called an \emph{entirely nonlocal magic} (ENM) state\footnote{For $\ket{\psi}$ on $A\otimes B$ and a magic measure $\mc{M}$, the term ``non-local magic" has recently also been used to refer to $\min_{U_A,U_B}\mc{M}(U_A\otimes U_B\ket{\psi})$~\cite{cao2025gravitationalbackreactionmagical,Qian2025nonlocal,robin2025antiflatnessnonlocalmagictwoparticle,ahmad2025experimentaldemonstrationnonlocalmagic}, which is a different notion from ENM.
We use ``entirely" to distinguish the two notions and to emphasize that every local reduced state is magic-free.} if it is locally magic-free and globally nonstabilizer, i.e., $\rho\notin\stab_{n+m}$.
\end{defn}

A bipartite state $\rho$ is called \emph{separable} if it can be written as a convex combination of product states,
\begin{equation}
\rho = \sum\nolimits_{i} p_i\, \omega_i^{A} \otimes \omega_i^{B},
\end{equation}
where $\{p_i\}$ is a classical probability distribution.
A \emph{separable ENM state} is a state that is both separable and ENM.

\subsection*{Exponential robustness of magic in separable ENM states}

The reversible one-way construction in Theorem~\ref{thm:separableENM} can hide an exponential amount of magic without entanglement.
\begin{cor}
There exists a family $\{\rho_n\}$ of bipartite separable ENM states on $2n$ qubits, with $n$ qubits per party, whose RoM satisfies
\begin{equation}
\mc{R}(\rho_n)\geq\left(\frac{4}{\ln 2}-o(1)\right)\frac{2^n}{n^2}=2^{n-2\log_2 n+O(1)}.
\end{equation}
\end{cor}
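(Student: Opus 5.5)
The plan is to instantiate Theorem~\ref{thm:separableENM} with the universal Pauli-$Z$ dephasing ensemble and a highly magical target, and then use monotonicity of robustness of magic (RoM) under stabilizer protocols.

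\emph{Construction.} Fix an $n$-qubit pure state $\sigma=\ketbra{\psi_n}{\psi_n}\notin\stab_n$, to be chosen later. Take the ensemble $\{2^{-n},Z^{\mathbf z}\}_{\mathbf z\in\{0,1\}^n}$, where $Z^{\mathbf z}=Z^{z_1}\otimes\cdots\otimes Z^{z_n}$. It has $M=2^n$ elements, so the label register has exactly $\lceil\log_2M\rceil=n$ qubits. Averaging over it completely dephases $\sigma$ in the computational basis, so the average lies in $\stab_n$. Theorem~\ref{thm:separableENM} then gives
\begin{equation}
\rho_n=2^{-n}\sum\nolimits_{\mathbf z}\ketbra{\mathbf z}{\mathbf z}\otimes Z^{\mathbf z}\sigma Z^{\mathbf z},
\end{equation}
a separable ENM state on $n+n$ qubits. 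It also supplies a stabilizer protocol $\mathcal E_2$ (measure the label, apply $Z^{\mathbf z}$, discard the label) with $\mathcal E_2(\rho_n)=\sigma$. Since RoM is monotone under stabilizer protocols, $\mathcal R(\rho_n)\ge\mathcal R(\sigma)$. It therefore suffices to exhibit $n$-qubit pure states with $\mathcal R(\psi_n)\ge(4/\ln2-o(1))2^n/n^2$.

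\emph{RoM versus stabilizer fidelity.} Let $F(\psi)=\max_{\phi}|\langle\phi|\psi\rangle|^2$, where the maximum is over pure stabilizer states $\phi$. Take an optimal decomposition $\psi=(1+r)\rho_+-r\rho_-$ with $\rho_\pm\in\stab_n$ and $\mathcal R=1+2r$. Taking the expectation in $\ket{\psi}$ gives
\begin{equation}
1=(1+r)\langle\psi|\rho_+|\psi\rangle-r\langle\psi|\rho_-|\psi\rangle\le(1+r)F.
\end{equation}
Hence $\mathcal R(\psi)\ge 2/F(\psi)-1$.

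\emph{Existence of low-fidelity states (main step).} Choose $\psi_n$ Haar-randomly. For a fixed $\phi$, $\Pr[|\langle\phi|\psi\rangle|^2\ge t]=(1-t)^{2^n-1}$. A union bound over the stabilizer states, whose number is $|S_n|=2^n\prod_{k=1}^n(2^k+1)=2^{n^2/2+O(n)}$, gives
\begin{equation}
\Pr[F\ge t]\le |S_n|\,e^{-t(2^n-1)}.
\end{equation}
Choose $t=\bigl(\ln|S_n|+1\bigr)/(2^n-1)=\bigl(\tfrac{\ln2}{2}n^2+O(n)\bigr)2^{-n}$. The probability above is then $<1$, so some $\psi_n$ has $F(\psi_n)<t$. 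Consequently
\begin{equation}
\mathcal R(\psi_n)\ge \frac{2}{t}-1=\Bigl(\frac{4}{\ln2}-o(1)\Bigr)\frac{2^n}{n^2}.
\end{equation}
For large $n$ this bound exceeds $1$, so automatically $\psi_n\notin\stab_n$, and $2^{n-2\log_2n+O(1)}$ is the same bound rewritten.

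The main obstacle is getting the constant $4/\ln2$ rather than $2/\ln2$. This requires two things. First, one must use the sharper inequality $\mathcal R\ge 2/F-1$ instead of the naive $\mathcal R\ge1/F$. Second, the count $\ln|S_n|=\tfrac{\ln2}{2}n^2+O(n)$ must be tracked carefully so that the lower-order terms only affect the $o(1)$.
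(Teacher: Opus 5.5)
Your proof is correct and follows the paper's argument. Both proofs use Theorem~\ref{thm:separableENM} with the uniform Pauli-$Z$ ensemble, the bound $\mc{R}\ge 2/F-1=2^{\mD_{\min}+1}-1$, and recovery of $\sigma_n$ by a stabilizer protocol; the only difference is that you derive $F(\psi_n)<\bigl(\tfrac{\ln 2}{2}n^2+O(n)\bigr)2^{-n}$ yourself via a Haar union bound, whereas the paper imports the equivalent bound $\mD_{\min}\ge n-2\log_2 n+1-\log_2\ln 2-o(1)$ from the proof of Theorem~4 in Ref.~\cite{Liu2022manybody}.
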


\begin{proof}
The proof of Theorem~4 in Ref.~\cite{Liu2022manybody} gives an $n$-qubit pure state $\sigma_n$ satisfying
\begin{equation}
\mD_{\min}(\sigma_n)\geq n-2\log_2 n+1-\log_2(\ln 2)-o(1),
\end{equation}
where $\mD_{\min}=-\log_2 F_{\stab}$ for pure states.
This gives $\mc{R}(\sigma)\geq2^{\mD_{\min}(\sigma)+1}-1$.
Applying Theorem~\ref{thm:separableENM} to the uniform ensemble of Pauli-$Z$ strings yields a $2n$-qubit separable ENM state $\rho_n$ and deterministic stabilizer protocols satisfying $\mc{E}_1(\sigma_n)=\rho_n$ and $\mc{E}_2(\rho_n)=\sigma_n$.
Thus $\mc{R}(\rho_n)=\mc{R}(\sigma_n)$, which proves the claim.
\end{proof}

For comparison, Ref.~\cite{Liu2022manybody} bounds the maximum RoM over all $2n$-qubit states between $\Omega(2^{2n}/n^2)$ and $O(2^{2n})$.
We conjecture that every bipartite separable ENM state with $n$ qubits per party has RoM at most $O(2^n)$.

\subsection*{Golden ENM state}

We call the following state ``golden'' because the minimum amount of stabilizer noise required to erase its magic is $(\sqrt{5}-1)/2\approx0.618$, the inverse of the golden ratio.
For a single-qubit Bloch vector $(x,y,z)$ with $x^2+y^2+z^2\leq 1$, write
\begin{equation}
\sigma(x,y,z)\coloneqq\tfrac{1}{2}\left(\mbb{I}+xX+yY+zZ\right).
\end{equation}

\begin{exmp}[Golden ENM state]\label{exmp:golden-enm}
The two-qubit state
\begin{equation}
\begin{aligned}
\rho_{\mathrm{golden}}\coloneqq{}&
\tfrac{1}{2}
\sigma\!\bigl(\tfrac{1}{\sqrt{5}},\tfrac{1}{\sqrt{5}},
\tfrac{\sqrt{3}}{\sqrt{5}}\bigr)
\otimes
\sigma\!\bigl(\tfrac{1}{\sqrt{3}},\tfrac{1}{\sqrt{3}},
\tfrac{1}{\sqrt{3}}\bigr)\\
&+\tfrac{1}{2}
\sigma\!\bigl(\tfrac{1}{\sqrt{5}},\tfrac{1}{\sqrt{5}},
-\tfrac{\sqrt{3}}{\sqrt{5}}\bigr)
\otimes
\sigma\!\bigl(-\tfrac{1}{\sqrt{3}},-\tfrac{1}{\sqrt{3}},
-\tfrac{1}{\sqrt{3}}\bigr)
\end{aligned}
\label{eq:golden-state}
\end{equation}
is separable and ENM, and attains the maximum possible two-qubit robustness of magic $\mc{R}(\rho_{\mathrm{golden}})=\sqrt{5}$.
\end{exmp}

The two states appearing on subsystem $B$ in this decomposition are antipodal face states, and hence are orthogonal pure states with maximal one-qubit RoM.

The local marginals of $\rho_{\mathrm{golden}}$ are
\begin{equation}
(\rho_{\mathrm{golden}})_A
=\sigma\!\bigl(\tfrac{1}{\sqrt{5}},\tfrac{1}{\sqrt{5}},0\bigr),
\qquad
(\rho_{\mathrm{golden}})_B
=\tfrac{\mbb{I}}{2}.
\label{eq:golden-marginals}
\end{equation}
Both marginals lie in $\stab_1$.

Moreover, measuring either qubit in the $Z$ basis and communicating the outcome localizes the state's magic onto the other qubit, yielding a distillable noisy face magic state.
The exact RoM certificate and full operational details are given in Appendix~\ref{app:golden-details}.

\subsection*{ENM channels without entanglement}

To extend ENM from states to dynamics, we regard completely stabilizer-preserving (CSP) maps as magic-free channels and separable maps as the entanglement-free class.

A CPTP map $\mc{E}:L((\mbb{C}^2)^{\otimes n})\rightarrow L((\mbb{C}^2)^{\otimes m})$ is called completely stabilizer-preserving (CSP)~\cite{Seddon2019quantifying,Seddon2021Quantifying,Heimendahl2022axiomatic} if, for every ancillary $k$-qubit system,
\begin{equation}
(\mc{E}\otimes\mc{I}_k)(\stab_{n+k})\subset\stab_{m+k},
\end{equation}
where $\mc{I}_k$ is the identity channel on the ancilla.
We use $\mathrm{CSP}$ to denote the set of all such maps.

\begin{defn}[ENM channel]\label{def:ENM_channel}
Consider systems $A \cong (\mathbb{C}^2)^{\otimes n}$ and $B \cong (\mathbb{C}^2)^{\otimes m}$.
A bipartite completely positive trace-preserving (CPTP) map (i.e., a quantum channel) $\mc{E}$ acting on $A\otimes B$ is called \emph{locally magic-free} across $A:B$ if its reduced CPTP maps
\begin{align}
\mc{E}_A:&\rho_A\mapsto\Tr_B\!\left[\mc{E}\!\left(\rho_A\otimes\tfrac{\mbb{I}_{B}}{2^m}\right)\right],\\
\mc{E}_B:&\rho_B\mapsto\Tr_A\!\left[\mc{E}\!\left(\tfrac{\mbb{I}_{A}}{2^n}\otimes\rho_B\right)\right]
\end{align}
belong to $\mathrm{CSP}$.
It is called an \emph{ENM channel} if it is locally magic-free but globally non-CSP, i.e., $\mc{E}\notin\mathrm{CSP}$.
\end{defn}

A bipartite CPTP map $\mc{E}$ on $A\otimes B$ is called \emph{separable}~\cite{watrous2018theory} if there exist CP maps $\{\Phi_j\}_{j=1}^J$ on $A$ and $\{\Psi_j\}_{j=1}^J$ on $B$ such that
\begin{equation}
\mc{E}=\sum\nolimits_{j=1}^J\Phi_j\otimes\Psi_j.
\end{equation}
Such a channel maps every separable input state to a separable output state.
The construction below in fact satisfies the stronger LOSR (local operations and shared randomness) condition: it is a convex combination of product channels coordinated by shared classical randomness.

Channel-state duality gives a useful equivalent characterization.
Suppose $\mc{E}$ is a CPTP map acting on $A\otimes B$, and let $\Lambda_{\mc{E}}\coloneqq(\mc{I}\otimes\mc{E})(\ketbra{\Phi^+}{\Phi^+})$ denote its normalized Choi state~\cite{watrous2018theory}, where $\ket{\Phi^+}=d^{-1/2}\sum_{x=0}^{d-1}\ket{x}\ket{x}$ and $d=\dim(A\otimes B)$.
Let $A'$ and $B'$ denote the reference systems corresponding to $A$ and $B$, respectively.

\begin{fact}\label{fact:enm-channel-choi}
$\mc{E}$ is a separable ENM channel iff $\Lambda_{\mc{E}}$ is a separable ENM state with respect to the partition $A'A:B'B$.
\end{fact}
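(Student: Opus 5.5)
The plan is to prove Fact~\ref{fact:enm-channel-choi} by establishing three separate equivalences through the normalized Choi state $\Lambda_{\mc{E}}$ on $A'A:B'B$. These concern separability, local magic-freeness of the marginals, and global magic. Combining them gives the statement. Throughout we use $\ket{\Phi^+}_{A'B'AB}=\ket{\Phi^+}_{A'A}\otimes\ket{\Phi^+}_{B'B}$, up to reordering of tensor factors.

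\textbf{Separability.} If $\mc{E}=\sum_j\Phi_j\otimes\Psi_j$ with CP maps $\Phi_j,\Psi_j$, then
\[
\Lambda_{\mc{E}}=\sum_j(\mc{I}_{A'}\otimes\Phi_j)(\Phi^+_{A'A})\otimes(\mc{I}_{B'}\otimes\Psi_j)(\Phi^+_{B'B}).
\]
Each factor is positive semidefinite, so after normalizing we obtain a separable state across $A'A:B'B$. Conversely, suppose $\Lambda_{\mc{E}}=\sum_j P_j\otimes Q_j$ with $P_j,Q_j\ge0$. Each positive operator $P_j$ on $A'A$ is the (unnormalized) Choi operator of a CP map $\Phi_j$ on $A$, and similarly $Q_j$ gives $\Psi_j$ on $B$. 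By linearity and injectivity of the Choi map, $\mc{E}=\sum_j\Phi_j\otimes\Psi_j$. This is the standard Choi characterization of separable maps (Watrous).

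\textbf{Local magic-freeness.} First compute the marginal $\Tr_{B'B}\Lambda_{\mc{E}}$. Tracing out $B'$ replaces $\Phi^+_{B'B}$ by $\mbb{I}_B/2^m$ on the input, so this marginal equals $(\mc{I}_{A'}\otimes\Tr_B\circ\mc{E})(\Phi^+_{A'A}\otimes\mbb{I}_B/2^m)=\Lambda_{\mc{E}_A}$. Likewise the $A'A$-complementary marginal equals $\Lambda_{\mc{E}_B}$. It then suffices to invoke the known Choi criterion for CSP maps: a CPTP map is CSP iff its normalized Choi state lies in $\stab$ (Seddon–Campbell). One direction is immediate, since $\Phi^+$ is a stabilizer state and CSP maps preserve $\stab$ with an ancilla. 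The other direction is the nontrivial ingredient. If $\Lambda\in\stab$, then the channel can be implemented as a gate-teleportation-style stabilizer protocol: apply a Bell measurement between the input and $A'$, then a Pauli correction. This uses postselection-free Pauli twirling, because the Pauli byproducts commute through appropriately, or equivalently it follows from the known result that stabilizer-Choi channels are CSP. I will cite this as known rather than reprove it.

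\textbf{Global magic.} Applying the same Choi–CSP criterion to $\mc{E}$ itself shows that $\mc{E}\notin\mathrm{CSP}$ iff $\Lambda_{\mc{E}}\notin\stab_{2(n+m)}$.

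Combining the three equivalences yields the Fact. The main subtlety I expect is the CSP$\Leftrightarrow$stabilizer-Choi equivalence, specifically the direction that a stabilizer Choi state implies CSP. I would handle it by citing the literature, namely Seddon and Campbell's characterization. If that citation is unavailable, the fallback is to show directly that $(\mc{E}\otimes\mc{I}_k)(\rho)$ is a Bell-measurement-conditioned Pauli-corrected output of $\Lambda_{\mc{E}}\otimes\rho$, and that the averaged outcome reduces to stabilizer operations.
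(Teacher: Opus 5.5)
Your proposal is correct and follows the paper's proof essentially step for step. The marginals of $\Lambda_{\mc{E}}$ on $A'A:B'B$ are the Choi states of $\mc{E}_A$ and $\mc{E}_B$. The Choi criterion for CSP maps (Seddon--Campbell, Lemma~4.2) then handles both the local and the global magic conditions. The Choi characterization of separable maps (Watrous, Prop.~6.22) handles separability.

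One caution concerns your aside on the direction ``stabilizer Choi state $\Rightarrow$ CSP''. The Bell-measurement-plus-Pauli-correction sketch, and the outcome-averaging fallback, do not work for a general channel. The Pauli byproduct acts before $\mc{E}$, so it can be undone only when $\mc{E}$ is Pauli-covariant. Averaging over outcomes merely gives $\mc{E}$ applied to a maximally mixed input. The correct fallback is to postselect on the $\ket{\Phi^+}$ outcome, whose probability is independent of the input. Postselected stabilizer projections, followed by normalization and partial trace, keep states in $\stab$.
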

The proof is given in Appendix~\ref{app:enm-channel-choi}.

The following result is the channel analogue of Theorem~\ref{thm:separableENM}.
A superchannel maps quantum channels to quantum channels~\cite{Chiribella2008Transforming,Gour2019Comparison}, and we call it a stabilizer superchannel when it can be implemented using stabilizer protocols.

\begin{thm}[One-way separable ENM channel]\label{thm:separableENMchannel}
Let $\mc{E}$ be a non-CSP $n$-qubit channel.
Suppose a Clifford ensemble $\{p_i,W_i\}_{i=0}^{M-1}$ satisfies $\sum_i p_i[W_i]\circ\mc{E}\in\mathrm{CSP}$.
Set $m=\lceil\log_2M\rceil$ and $X(i)\coloneqq\bigotimes_{j=1}^mX^{i_j}$, where $(i_1,\cdots,i_m)$ is the $m$-bit expansion of $i$.
Then the $(m+n)$-qubit channel
\begin{equation}\label{eq:defF}
\mc{F}\coloneqq \sum_i p_i\,[X(i)]\otimes\big([W_i]\circ\mc{E}\big)
\end{equation}
is LOSR and ENM.
Moreover, there exist stabilizer superchannels $\Theta_1,\Theta_2$ such that $\Theta_1(\mc{E})=\mc{F}$ and $\Theta_2(\mc{F})=\mc{E}$.
\end{thm}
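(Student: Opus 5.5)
The plan is to verify the three claims in turn: the LOSR structure, local magic-freeness of both reduced maps, and global non-CSP. Global non-CSP will not be checked directly. Instead it will follow from the recovery superchannel $\Theta_2$, exactly as magic of $\rho_\sigma^{\rightarrow}$ follows from $\mc{E}_2$ in Theorem~\ref{thm:separableENM}.

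\emph{LOSR.} Eq.~\eqref{eq:defF} already displays $\mc{F}$ as a convex combination, with weights $p_i$, of product channels $[X(i)]\otimes([W_i]\circ\mc{E})$ across $R:S$. Here $R$ is the $m$-qubit register and $S$ the $n$-qubit system. Each factor is CPTP, so $\mc{F}$ is LOSR with shared randomness $i$, and in particular separable.

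\emph{Local magic-freeness.} Using trace preservation of each factor, the reduced maps of Definition~\ref{def:ENM_channel} are
\begin{equation}
\mc{F}_R(\rho_R)=\sum\nolimits_i p_i X(i)\rho_R X(i),\qquad
\mc{F}_S(\rho_S)=\sum\nolimits_i p_i W_i\,\mc{E}(\rho_S)\,W_i^\dagger .
\end{equation}
The map $\mc{F}_R$ is a mixture of Pauli unitary channels. Each Clifford unitary channel is CSP, since tensoring with $\mc{I}_k$ gives a Clifford that maps stabilizer states to stabilizer states. The set $\mathrm{CSP}$ is convex, so $\mc{F}_R\in\mathrm{CSP}$. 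The map $\mc{F}_S$ equals $\sum_ip_i[W_i]\circ\mc{E}$, which is CSP by hypothesis.

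\emph{Superchannels.} For $\Theta_1$, we generate $i$ with probability $p_i$ from classical randomness, allowed in stabilizer protocols (e.g.\ by measuring $\ket{+}$ states and classical post-processing). We then apply $X(i)$ on $R$, the black box $\mc{E}$ on $S$, and then $W_i$ on $S$. Averaging over $i$ yields exactly $\mc{F}$.

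For $\Theta_2$, given a box implementing $\mc{F}$ and an input $\rho$ on $S$, we prepare $R$ in $\ket{0}^{\otimes m}$ and feed $\ketbra{0}{0}^{\otimes m}\otimes\rho$ into $\mc{F}$. Since $X(i)\ket{0}^{\otimes m}=\ket{i}$, the output is $\sum_ip_i\ketbra{i}{i}_R\otimes W_i\mc{E}(\rho)W_i^\dagger$. The labels $i$ are distinct $m$-bit strings, so this remains valid when $M$ is not a power of two. We then measure $R$ in the $Z$ basis, which is a Pauli measurement, and on outcome $i$ apply the Clifford correction $W_i^\dagger$ to $S$ by classical feedforward. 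Finally we discard $R$. The result is $\mc{E}(\rho)$ for every $\rho$, and the same holds with an arbitrary reference system attached, so $\Theta_2(\mc{F})=\mc{E}$.

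\emph{Global non-CSP.} Suppose $\mc{F}\in\mathrm{CSP}$. Then $\Theta_2(\mc{F})$ is a composition of CSP pieces: stabilizer preparation, $\mc{F}$, a Pauli measurement, classically controlled Cliffords, and a partial trace. CSP is closed under composition, tensoring with identity, convex mixtures, and classically controlled Clifford feedforward, so $\Theta_2(\mc{F})$ would be CSP. This contradicts $\mc{E}\notin\mathrm{CSP}$. Hence $\mc{F}$ is ENM.

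The step needing the most care is this closure argument, namely that a stabilizer superchannel maps CSP channels to CSP channels. I expect this to be the main obstacle, and I would handle it by writing $\Theta_2(\mc{F})$ explicitly as a concatenation of CSP maps. The measure-and-correct step is
\begin{equation}
\omega_{RS}\mapsto\sum\nolimits_i W_i^\dagger\bra{i}\omega_{RS}\ket{i}W_i ,
\end{equation}
which I would show preserves stabilizer states when tensored with any ancilla. I would argue this because it is a $Z$-basis Pauli measurement followed by Cliffords conditioned on the classical outcome.
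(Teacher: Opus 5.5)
Your proposal is correct and follows the paper's proof essentially step for step. You read off LOSR from Eq.~\eqref{eq:defF}, compute the reduced maps $\sum_i p_i[X(i)]$ and $\sum_i p_i[W_i]\circ\mc{E}$, use the same encoder $\Theta_1$ and measure-and-correct decoder $\Theta_2$, and obtain global non-CSP by contradiction through $\Theta_2$. The only difference is that the paper simply asserts that a stabilizer superchannel maps CSP channels to CSP channels, whereas you plan to check this closure explicitly for the measure-and-correct step; that is a harmless and welcome extra detail.
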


The label marginal of $\mc{F}$ is $\sum_i p_i[X(i)]\in\mathrm{CSP}$, whereas its system marginal is the CSP average assumed in the theorem.
Initializing the label register in $\ket{0}^{\otimes m}$, reading its computational-basis output and applying $W_i^\dagger$ recovers $\mc{E}$ exactly; the complete proof is given in Appendix~\ref{app:proof-separable-enm-channel}.

Let $\mathcal{P}_n^+\coloneqq\{I,X,Y,Z\}^{\otimes n}$ denote the set of
$n$-qubit Pauli operators with phase $+1$.
Uniformly averaging all output Pauli corrections gives
\begin{equation}\label{eq:universal-channel-pauli-twirl}
\frac{1}{4^n}\sum_{P\in\mathcal{P}_n^+}[P]\circ\mc{E}=\mc{D},\quad\mc{D}(\rho)=\Tr(\rho)\frac{\mbb{I}}{2^n}.
\end{equation}
Therefore, every non-CSP $n$-qubit channel admits a reversible LOSR ENM embedding on $3n$ qubits in total.
More economical Clifford ensembles, when available, reduce the size of the classical label register.

\subsection*{Activation key protocol and extensions}\label{sec:methods-activation key}

We detail the activation key protocol introduced in Sec.~\ref{sec:activation key}, including its channel formulation, key compression by correlated randomization, and extensions beyond $T$ gates.

\paragraph{Operational setting and key management.}
Consider a logical Clifford+$T$ processor whose protected $T$ gates are masked using classical key information held by the provider.
This mechanism parallels software licensing: by releasing key bits on demand, the provider can grant metered access to the device's magic operations, for example on a pay-per-use or subscription basis.

Suppose the processor has a lifetime capacity of at most $N$ protected logical $T$-gate operations.
In the elementary scheme, the provider samples a uniformly random bit string
\begin{equation}
\mathbf{b}=(b_1,\cdots,b_N)\in\{0,1\}^N,\quad\mathbf{b}\sim\mathrm{Unif}(\{0,1\}^N)
\end{equation}
and loads it into trusted storage accessible to the device but not the user.
The key remains available to the device but hidden from the user, for instance through a hardware security module.
At the $i$-th protected $T$ gate, the hardware applies $Z^{b_i}$ immediately after $T$.
If the provider releases $b_i$, the user applies the same Pauli correction and recovers the intended gate.

\paragraph{ENM channel interpretation of the activation key protocol.}
Let $K$ be a one-qubit key register and $S$ the system register, and let $T^\perp\coloneqq ZT$.
Define
\begin{equation}
\mc{F}_T\coloneqq\frac{1}{2}[\mbb{I}_K\otimes T_S]+\frac{1}{2}[X_K\otimes T^\perp_S].\label{eq:activation-ENM-channel}
\end{equation}
$\mc{F}_T$ is LOSR and hence separable.
Its reduced channels on the key and system registers are
\begin{align}
(\mc{F}_T)_K&=\frac{1}{2}[\mbb{I}]+\frac{1}{2}[X]\in\mathrm{CSP},\label{eq:activation key-reduced-K}\\
(\mc{F}_T)_S&=\frac{1}{2}[\mbb{I}]+\frac{1}{2}[Z]\in\mathrm{CSP}.\label{eq:activation key-reduced-S}
\end{align}
The stabilizer decoding protocol in Theorem~\ref{thm:separableENMchannel} recovers the $T$ gate from $\mc{F}_T$, so $\mc{F}_T\notin\mathrm{CSP}$.
Therefore, $\mc{F}_T$ is a separable ENM channel.

Operationally, initialize the key register in $\ket{0}$.
For every input state $\rho$ on $S$,
\begin{align}
&\mc{F}_T\left(\ket{0}\!\bra{0}_K\otimes\rho_S\right)\\
=&\frac{1}{2}\ket{0}\!\bra{0}_K\otimes T\rho T^\dagger+\frac{1}{2}\ket{1}\!\bra{1}_K\otimes T^\perp\rho(T^\perp)^\dagger.
\end{align}

Thus, in this operational realization, $K$ carries a uniformly random classical bit $b$, while the system undergoes $Z^bT$.
If $b$ is withheld, the user's effective channel is the complete dephasing channel in Eq.~\eqref{eq:activation key-reduced-S}. If $b$ is revealed, the user can apply $Z^b$ and recover the intended $T$ gate.
Because $K$ is initialized in $\ket{0}$ and remains diagonal in the computational basis, the key register and its processing can be implemented entirely using classical storage, control, and communication.
This classical procedure implements the stabilizer decoder of Theorem~\ref{thm:separableENMchannel}; see Appendix~\ref{app:activation key-single-gate-decoder}.

\paragraph{Key compression in states and channels.}

A single key bit can lock/unlock more than one magic gate.
This compression already appears at the state level: with $\ket{T^\perp}=Z\ket{T}$, the state
\begin{equation}
\rho_{TT}^{\rightarrow}\coloneqq\frac{1}{2}\ketbra{0}{0}\otimes\ketbra{TT}{TT}+\frac{1}{2}\ketbra{1}{1}\otimes\ketbra{T^\perp T^\perp}{T^\perp T^\perp}
\end{equation}
is a one-way separable ENM state.
Its key marginal is $\mbb{I}_2/2$, and its data marginal $\frac{1}{2}\ketbra{TT}{TT}+\frac{1}{2}\ketbra{T^\perp T^\perp}{T^\perp T^\perp}$ is a stabilizer state.

Measuring the key and applying $Z^b$ to both data qubits recovers $\ket{TT}$ deterministically, so one classical bit locks/unlocks two $T$ states.

The same compression extends to channels.
Although the elementary construction assigns one bit to each protected $T$-gate location, two $T$ gates can share a single bit through the correlated randomization
\begin{equation}
\mc{F}_{T,T}\coloneqq\frac{1}{2}[\mbb{I}\otimes T^{\otimes 2}]+\frac{1}{2}[X\otimes(T^\perp)^{\otimes 2}].\label{eq:activation-two-T-channel}
\end{equation}
The reduced channel on the two system qubits is
\begin{equation}\label{eq:activation-two-T-reduced}
\frac{1}{2}[T^{\otimes 2}]+\frac{1}{2}[(T^\perp)^{\otimes 2}],
\end{equation}
which is CSP (see Appendix~\ref{app:activation key-gate-hiding}).

Operationally, when the common key bit is hidden, the two protected $T$ gates are jointly replaced by the CSP channel in Eq.~\eqref{eq:activation-two-T-reduced}.
When the bit is released, applying the same Pauli correction $Z^b$ after each protected gate restores both gates exactly.

By Proposition~\ref{prop:Choi_stab_thus_no_capacity} in Appendix~\ref{app:ActivationKey}, the locked circuit remains CSP when the two locked $T$ gates are located in an arbitrary stabilizer circuit, even if they occur in different layers or are causally connected.

\paragraph{Extensions beyond $T$ gates.}

The activation key construction extends beyond $T$ gates.
In general, if $U$ is an $n$-qubit unitary diagonal in the computational basis, then a complete $Z$ twirl gives
\begin{equation}
\frac{1}{2^n}\sum_{\mathbf b\in\{0,1\}^n}[Z(\mathbf b)]\circ[U]=\frac{1}{2^n}\sum_{\mathbf b\in\{0,1\}^n}[Z(\mathbf b)],\label{eq:diagonal-gate-activation}
\end{equation}
where $Z(\mathbf b)\coloneqq\bigotimes_{j=1}^n Z^{b_j}$.
The right-hand side is complete computational basis dephasing.
Thus, $n$ classical bits are sufficient to hide any diagonal unitary $U$ whose channel $[U]$ is non-CSP in a separable ENM channel.

Important gates often require fewer bits.
For example:
\begin{align}
\frac{1}{2}[\mathrm{CC}Z]+\frac{1}{2}[(Z\otimes\mbb{I}\otimes\mbb{I})\mathrm{CC}Z]&\in\mathrm{CSP},\\
\frac{1}{2}[\mathrm{C}S]+\frac{1}{2}[(Z\otimes\mbb{I})\mathrm{C}S]&\in\mathrm{CSP},\\
\frac{1}{2}[\mathrm{C}R(\theta)]+\frac{1}{2}[\mathrm{C}Z\cdot\mathrm{C}R(\theta)]&\in\mathrm{CSP},\\
\left(\frac{1}{2}[\mbb{I}]+\frac{1}{2}[P]\right)\circ [R_P(\theta)]&\in\mathrm{CSP},\label{eq:Pauli-rotation-activation}
\end{align}
where $\mathrm{C}R(\theta)=\diag(1,1,1,e^{i\pi\theta})$ is the controlled-phase rotation appearing in quantum Fourier transform circuits and hence in Shor's algorithm~\cite{Shor1994Algorithms}, and $R_P(\theta)\coloneqq e^{-i\theta P/2}$ is a rotation generated by an arbitrary Pauli string $P$ and often appears in NISQ algorithms~\cite{Cerezo2021variational,Bharti2022NISQ}.
The detailed derivations are collected in Appendix~\ref{app:activation key-gate-hiding}.

\subsection*{Multipartite separable ENM states}

Relative to a specified partition $A_1:\cdots:A_n$, a multipartite ENM state is globally nonstabilizer while every single-party marginal is stabilizer.
We give two fully separable ENM $n$-qubit families.

Let $\ket{\psi}$ be a pure one-qubit nonstabilizer state with Bloch vector $\mathbf t$, and let $\ket{\phi}$ be a stabilizer state maximizing $|\bracket{\phi}{\psi}|^2$.
For $n\ge2$, define
\begin{equation}\rho_\psi^{(n)}\coloneqq \frac{\|\mathbf t\|_1-1}{\|\mathbf t\|_1+1}\ketbra{\phi^\perp}{\phi^\perp}^{\otimes n}+\frac{2}{\|\mathbf t\|_1+1}\ketbra{\psi}{\psi}^{\otimes n}.\label{eq:multipartite-rho-psi}\end{equation}
It has stabilizer one-qubit marginals, while projecting any party onto $\ket{\phi}$ prepares $\ket{\psi}^{\otimes(n-1)}$ with probability $(1+\|\mathbf t\|_\infty)/(1+\|\mathbf t\|_1)$.

For $n\ge3$, a stronger family, whose one- and two-qubit reductions are all magic-free, is
\begin{equation}
\rho_T^{(n)}
\coloneqq
\frac{1}{2}\ketbra{T}{T}^{\otimes n}
+\frac{1}{2}\ketbra{T^\perp}{T^\perp}^{\otimes n}.
\label{eq:multipartite-rho-T}
\end{equation}
Every one-qubit marginal is maximally mixed, whereas every two-qubit marginal is an equal mixture of the stabilizer states
$(\ket{00}+i\ket{11})/\sqrt{2}$ and
$(\ket{01}+\ket{10})/\sqrt{2}$.
Measuring any two qubits in the $X$ basis yields the $++$ outcome with probability $3/8$.
Conditioned on this outcome, the reduced state of each unmeasured qubit has Bloch vector $(2/3,2/3,0)$, enabling the Bravyi--Kitaev distillation~\cite{Bravyi2005universal}.

For the stabilizer R\'enyi entropy (SRE) $M_\alpha$~\cite{Lorenzo2022stabilizer}, both families obey $M_\alpha(\rho^{(n)})=nM_\alpha(\ketbra{\chi}{\chi})+O(1)$ for $1/2\leq\alpha<1$, where $(\rho^{(n)},\ket{\chi})$ denotes either $(\rho_\psi^{(n)},\ket{\psi})$ or $(\rho_T^{(n)},\ket{T})$.
Since $M_\alpha\leq2\log_2\mc{R}$, this also certifies exponentially growing RoM; details are given in Appendix~\ref{app:multipartite-scaling}.

\subsection*{Activation key experiment}

We implement the brickwall circuit in Fig.~\ref{fig:activationkey} on a superconducting quantum processor.
In the locked mode, a key assignment $b\in\{0,1\}^5$ is sampled uniformly and determines which intended $T$ gates are replaced by $T^\perp$.
Let $U_b$ denote the corresponding ideal key-dependent circuit unitary.
For a selected adjacent two-qubit subsystem $S$, let $E$ denote the remaining eight qubits.
We initialize $E$ in $\ket{+}^{\otimes 8}$ and trace it out after applying the circuit.
The ideal key-averaged induced channel is \begin{equation}
\mc{E}_S(\rho_S)
=
\Tr_E\!\left[
\mathbb{E}_{b\sim\mathrm{Unif}(\{0,1\}^5)}
\left[
U_b
\left(
\rho_S\otimes\ketbra{+}{+}^{\otimes 8}
\right)
U_b^\dagger
\right]
\right].
\end{equation}
For each selected adjacent pair, we collect process-tomography data for all 32 key assignments with equal weight, average the data before reconstruction, and evaluate the log-RoM of the resulting Choi state.
This Choi-state diagnostic is faithful for each induced channel: $\lrom(\Lambda_{\mc{E}_S})=0\Longleftrightarrow\Lambda_{\mc{E}_S}\in\stab\Longleftrightarrow\mc{E}_S\in\mathrm{CSP}$.
The first equivalence follows from the faithfulness of RoM, and the second from the Choi characterization of CSP channels~\cite[Lemma~4.2]{Seddon2019quantifying}.
Across all nine adjacent pairs, the reconstructed induced channels exhibit the expected locked-mode signature, with log-RoM consistently close to zero.
Together, these measurements support the predicted magic-free local dynamics across the device.
This local validation complements the analytic full-channel guarantee: Proposition~\ref{prop:Choi_stab_thus_no_capacity} establishes that the ideal key-averaged ten-qubit channel is CSP.
Figure~\ref{fig:activationkey}(c) shows two representative pairs, while the complete results are reported in Appendix~\ref{app:activation key-tomography}.

In the activated mode, the revealed key restores every intended $T$ gate.
Starting from $\ket{+}^{\otimes 10}$, we measure $\langle X\rangle$ on each qubit.
The measured values, shown in Fig.~\ref{fig:activationkey}(d), agree well with the theoretical predictions.

\subsection*{Magic secret sharing experiment}

We implement the ENM-based magic-secret-sharing protocol in Fig.~\ref{fig:two_way_experiment} on a superconducting quantum processor.
Two spatially separated five-qubit subregions of the chip form an upper block and a lower block.
Three cross-block qubit pairs are prepared as two-way $T$-type ENM links, with one qubit from each pair in each block.
Because each link has stabilizer single-qubit marginals, neither block has local access to magic. The computational resource is encoded in the correlations between the blocks.

To activate the shared resource, we measure the three upper-block link qubits in the $X$ basis and postselect on the $+1$ outcome for each measurement.
After these outcomes are communicated, each corresponding lower-block qubit is known to be in $\ket{T}$.
The three extracted states are then consumed by the standard magic-state-injection circuit in Fig.~\ref{fig:two_way_experiment}(b).
The injection circuit thereby realizes a $\mathrm{C}S$ gate on the two remaining lower-block qubits using only stabilizer operations.

To isolate the role of cooperation, we leave the upper-block link qubits unmeasured and run the same injection circuit in the lower block.
The three lower-block link qubits therefore enter the circuit in their stabilizer marginals rather than as localized $\ket{T}$ states, so the ideal control implements a stabilizer channel.
Quantum process tomography yields a reconstructed Choi-state RoM close to the stabilizer value of $1$, consistent with this prediction (Fig.~\ref{fig:two_way_experiment}(d)).
Together with the activated result, this control supports the central mechanism: the lower block has no locally accessible magic on its own, whereas conditioning on the communicated upper-block outcomes localizes the magic encoded in the cross-block correlations.

\begin{acknowledgments}
We thank Huiping Lin, Zhenhuan Liu, Huikai Xu, Zijian Zhang for valuable discussions.
F.W.\ and Z.-W.L.\ are supported in part by NSFC under Grant No.~12475023, Dushi Program, and a startup funding from YMSC.
F.W.\ is supported by the Shuimu Tsinghua Scholar Program.
R.W., Y.Z., and F.Y.\ are supported by the National Natural Science Foundation of China (Grants No.\,12404558, No.\,12322413, No.\,92476206) and Beijing Natural Science Foundation (Grants No.\,JQ25014).
\end{acknowledgments}


%


\onecolumngrid
\appendix

\section{Magic resource-theoretic definitions}

\subsection{Robustness of magic and dual witnesses}

For an $n$-qubit state $\rho$, its robustness of magic (RoM) is
\begin{equation}
\mc{R}(\rho)\coloneqq\min_{\sigma,\tau\in\stab_n}\left\{2a+1\,\middle|\,\rho=(a+1)\sigma-a\tau,\ a\ge0\right\}.\label{eq:RoM_primal_app}
\end{equation}
RoM is faithful: $\mc{R}(\rho)=1$ if and only if $\rho\in\stab_n$; it is monotone under trace-preserving stabilizer operations; it is invariant under tensoring with stabilizer states, $\mc{R}(\rho\otimes\sigma)=\mc{R}(\rho)$ for $\sigma\in\stab$; it is convex in the usual sense.
Operationally, quasiprobability simulation of Clifford circuits assisted by a nonstabilizer state $\rho$ has sampling overhead scaling as $\mc{R}(\rho)^2$~\cite{Howard2017application}.

The dual form of Eq.~\eqref{eq:RoM_primal_app} is:
\begin{equation}\label{eq:RoM_dual}
\begin{aligned}
\mc{R}(\rho)=\textbf{\text{max}}~ & \Tr(\rho A)~~~~~~~~~~\text{over Hermitian matrices }A,\\
\textbf{s.t.}~&\left|\Tr\left(\phi A\right)\right| \leq 1~~\text{for all }\phi\in\stab_n .
\end{aligned}
\end{equation}
Any feasible observable $A$ gives a lower bound $\mc{R}(\rho)\ge\Tr(\rho A)$; if this matches an independent upper bound, $A$ certifies the exact RoM.

For one-qubit states we use the Bloch parametrization
\begin{equation}
\sigma(x,y,z)=\frac{1}{2}\left(\mbb{I}+xX+yY+zZ\right).
\end{equation}
The one-qubit stabilizer polytope is the octahedron
\begin{equation}
\stab_1=\left\{\sigma(x,y,z):|x|+|y|+|z|\le1\right\},\label{eq:stab1-octahedron}
\end{equation}
and the one-qubit RoM is
\begin{equation}
\mc{R}\bigl(\sigma(x,y,z)\bigr)=\max\{1,|x|+|y|+|z|\}.\label{eq:single-qubit-RoM}
\end{equation}

\subsection{Stabilizer protocols and CSP maps}

By a \emph{deterministic stabilizer protocol} we mean a CPTP map built from stabilizer state preparation, Clifford unitaries, Pauli measurements, classical randomness and feedforward, and discarding subsystems.
Postselection is not included in this definition, unless explicitly stated, so that the resulting map is trace-preserving.

A quantum channel $\mc{E}:L((\mbb{C}^2)^{\otimes n})\to L((\mbb{C}^2)^{\otimes m})$ is completely stabilizer-preserving (CSP) if
\begin{equation}
(\mc{E}\otimes\mc{I}_k)(\stab_{n+k})\subseteq \stab_{m+k}
\end{equation}
for every $k$-qubit ancillary system.
Every deterministic stabilizer protocol is CSP.
For one-qubit maps, the extremal structure of $\mathrm{CSP}_1$ used in the optimality proof is stated in Lemma~\ref{lemma:CPS1}.

For a channel $\mc{E}$, let
\begin{equation}
\Lambda_{\mc{E}}\coloneqq(\mc{I}\otimes\mc{E})(\ketbra{\Phi^+}{\Phi^+}),\quad\ket{\Phi^+}=2^{-n/2}\sum_{x\in\{0,1\}^n}\ket{xx},
\end{equation}
be its normalized Choi state.

A CPTP map is CSP if and only if its Choi state is a stabilizer state~\cite[Lemma~4.2]{Seddon2019quantifying}.

\section{Separable ENM states: constructions, recovery, and optimality}\label{app:separable-ENM-state-details}

\subsection{Proof of Theorem~\ref{thm:separableENM}}
\label{app:proof_of_one_way}

Let $m=\lceil \log_2 M\rceil$ and introduce an ancilla register $R\cong (\mathbb{C}^2)^{\otimes m}$ with computational basis $\{\ket{i}\}_{i=0}^{2^m-1}$.
Define the separable state
\begin{equation}\label{eq:rho-def}
\rho_\sigma^{\rightarrow} \coloneqq \sum_{i=0}^{M-1} p_i\, \ketbra{i}{i}_{R}\otimes W_i \sigma W_i^\dagger.
\end{equation}
On the ancilla register, $(\rho_\sigma^{\rightarrow})_R = \sum_{i=0}^{M-1} p_i \ketbra{i}{i}_R$ is diagonal in the computational basis, thus a stabilizer state; on the $n$-qubit system, $(\rho_\sigma^{\rightarrow})_{\mathrm{sys}} = \sum_{i=0}^{M-1} p_i W_i \sigma W_i^\dagger$, which belongs to $\stab$ by hypothesis.

Define $\mc{E}_1$ as: sample $i$ with probability $p_i$ using classical randomness; prepare the $m$-qubit stabilizer state $\ket{i}_R$; apply the Clifford $W_i$ to the system controlled on the classical value $i$.
Then $\mc{E}_1(\sigma)=\rho_\sigma^{\rightarrow}$.

Define $\mc{E}_2$ as: measure $R$ in the computational basis obtaining outcome $i$; apply the Clifford $W_i^\dagger$ to the system conditioned on $i$; discard $R$.
Then $\mc{E}_2(\rho_\sigma^{\rightarrow})=\sigma$.
Because $\mc{E}_1$ and $\mc{E}_2$ are stabilizer protocols, the construction satisfies $\rho_\sigma^{\rightarrow}\in\stab_{m+n}$ if and only if $\sigma\in\stab_n$.
The hypothesis $\sigma\notin\stab_n$ therefore makes $\rho_\sigma^{\rightarrow}$ globally nonstabilizer; together with the stabilizer marginals above, this proves that it is ENM.

\begin{rem}
For any $n$-qubit state $\sigma$, take $M=2^n$, $p_{\mathbf b}=2^{-n}$, and $W_{\mathbf b}=Z(\mathbf b)\coloneqq Z^{b_1}\otimes\cdots\otimes Z^{b_n}$.
Then
\begin{equation}
\frac{1}{2^n}\sum_{\mathbf b\in\{0,1\}^n}Z(\mathbf b)\sigma Z(\mathbf b)
\end{equation}
is diagonal in the computational basis and hence belongs to $\stab_n$.
Thus Theorem~\ref{thm:separableENM} gives a reversible $(n+n)$-qubit one-way separable ENM embedding for every $\sigma\notin\stab_n$.
\end{rem}

\subsection{One-way $T$ state: comparison with entangled encoding}\label{app:Oneway_T_comparasion}

We consider two bipartite one-way ENM realizations of the single-qubit magic state $\ket{T}$: the first is separable, while the second is maximally entangled,
\begin{align}
\rho_T^{\rightarrow}=&\frac{1}{2}\ketbra{0}{0}\otimes\ketbra{T}{T}+\frac{1}{2}\ketbra{1}{1}\otimes\ketbra{T^\perp}{T^\perp},\\
\ket{\Phi_T}=&\tfrac{1}{\sqrt{2}}(\ket{0T}+\ket{1T^\perp}).
\end{align}
Both states have the reduced density matrix $\mbb{I}_2/2$ on each qubit.
Moreover, they have the same value under any magic monotone $\mc{M}$: $\mc{M}(\rho_T^{\rightarrow})=\mc{M}(\ketbra{\Phi_T}{\Phi_T})$.

These two realizations behave differently under noise.
The state $\rho_T^{\rightarrow}$ is a classical--quantum state: its first qubit only stores a classical bit in the computational basis, indicating whether the second qubit is prepared in $\ket{T}$ or $\ket{T^\perp}$.
Therefore, one may store this classical bit without exposing it to quantum noise, and only the second qubit needs to be kept in quantum memory.
By contrast, both qubits of $\ket{\Phi_T}$ must be stored coherently and are therefore both subject to noise.
To compare these two situations, let $\mc{E}_\lambda$ be the single-qubit depolarizing channel, $\mc{E}_\lambda(\sigma) \coloneqq (1-\lambda)\sigma+\lambda\frac{\mbb{I}_2}{2}$, then we have:

\begin{prop}
For $0<\lambda<1-\frac{1}{\sqrt{2}}$, we have (see Fig.~\ref{fig:noisy})
\begin{equation}
\mc{R}\big(\mc{I}\otimes\mc{E}_\lambda\left(\rho_T^{\rightarrow}\right)\big)>\mc{R}\big(\mc{E}_\lambda\otimes\mc{E}_\lambda(\ketbra{\Phi_T}{\Phi_T})\big).
\end{equation}
\end{prop}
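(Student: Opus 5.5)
The plan is to compute the left-hand side exactly by reducing it to a single-qubit quantity, and then to prove a strictly smaller upper bound on the right-hand side by exhibiting an explicit stabilizer affine decomposition.

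\emph{Left-hand side.} The noisy separable state is
\begin{equation}
\mc{I}\otimes\mc{E}_\lambda(\rho_T^{\rightarrow})=\tfrac12\ketbra{0}{0}\otimes\mc{E}_\lambda(\ketbra{T}{T})+\tfrac12\ketbra{1}{1}\otimes Z\mc{E}_\lambda(\ketbra{T}{T})Z,
\end{equation}
because $\mc{E}_\lambda$ commutes with $Z$. It therefore has exactly the form of Eq.~\eqref{eq:rho-def} with $\sigma=\mc{E}_\lambda(\ketbra{T}{T})$. The stabilizer maps $\mc{E}_1,\mc{E}_2$ from the proof of Theorem~\ref{thm:separableENM} send $\sigma$ to this state and back. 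Monotonicity of RoM in both directions then gives
\begin{equation}
\mc{R}\big(\mc{I}\otimes\mc{E}_\lambda(\rho_T^{\rightarrow})\big)=\mc{R}(\sigma).
\end{equation}
The Bloch vector of $\sigma$ is $(1-\lambda)(1/\sqrt2,1/\sqrt2,0)$, so Eq.~\eqref{eq:single-qubit-RoM} yields $\mc{R}(\sigma)=\sqrt2(1-\lambda)$. This exceeds $1$ precisely when $\lambda<1-1/\sqrt2$, which explains the range in the statement.

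\emph{Right-hand side, reduction.} First I will simplify the entangled state. We have $\ket{\Phi_T}=(\mbb{I}\otimes T)\ket{S}$ with $\ket{S}=(\ket{0+}+\ket{1-})/\sqrt2$, and $\ket{S}$ is local-Clifford equivalent to $\ket{\Phi^+}$. Depolarizing noise on both halves of a maximally entangled state equals depolarizing noise of strength $\lambda'=1-(1-\lambda)^2$ on one half. Hence
\begin{equation}
\mc{E}_\lambda\otimes\mc{E}_\lambda(\ketbra{\Phi_T}{\Phi_T})=(1-\lambda')\ketbra{\Phi_T}{\Phi_T}+\lambda'\,\mbb{I}_4/4,
\end{equation}
which is, up to local Cliffords, the Choi state of the channel $[T]\circ\mc{E}_{\lambda'}$.

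\emph{Right-hand side, upper bound.} Plain convexity, $\mc{R}\le(1-\lambda')\sqrt2+\lambda'$, is too weak: it exceeds $\sqrt2(1-\lambda)$ on the relevant range. I will instead exploit the symmetries of this state to reduce the primal program Eq.~\eqref{eq:RoM_primal_app} to a few variables. The relevant symmetries are the Clifford operations fixing it: the diagonal Pauli-type pairs that stabilize $\ket{S}$ and commute appropriately with $T$, together with the swap $X\leftrightarrow Y$ induced by $S H S^\dagger$-type maps. I will then guess an explicit decomposition $\rho=(a+1)\sigma_+-a\sigma_-$ with symmetric two-qubit stabilizer mixtures $\sigma_\pm$. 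The natural candidates are built from $\ketbra{\Phi_T}{\Phi_T}$'s nearest stabilizer states, namely Bell-type states rotated by $\mbb{I}$ or $S$ on the second qubit, with $\sigma_-$ carrying the negative weight on the opposite octant. This should give $\mc{R}\le g(\lambda)$ for an explicit function $g$; I will check that $g(\lambda)<\sqrt2(1-\lambda)$ on $(0,1-1/\sqrt2)$ and that $g(\lambda)\to\sqrt2$ as $\lambda\to0$. A matching dual witness from Eq.~\eqref{eq:RoM_dual} would identify the exact value for the figure, but only the upper bound is needed for the inequality.

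The main obstacle is this step: finding a decomposition of the entangled noisy state whose cost beats $\sqrt2(1-\lambda)$ for every $\lambda$ in the range, especially near the endpoints. At $\lambda=0$ the two sides coincide, and near $1-1/\sqrt2$ the left side approaches $1$. If an analytic guess fails, I would solve the symmetry-reduced LP over the 60 two-qubit stabilizer states numerically, read off the optimal support, and then verify the resulting piecewise-analytic bound by hand.
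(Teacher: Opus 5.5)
Your left-hand side is correct and is essentially the paper's argument. The paper writes the noisy state as $\mathrm{C}Z\bigl(\tfrac{\mbb{I}}{2}\otimes\mc{E}_\lambda(\ketbra{T}{T})\bigr)\mathrm{C}Z$ instead of invoking $\mc{E}_1,\mc{E}_2$. Your reduction of the right-hand side to $(1-\lambda')\ketbra{\Phi_T}{\Phi_T}+\lambda'\mbb{I}_4/4$ with $1-\lambda'=(1-\lambda)^2$ also matches. You are right that the inequality needs an upper bound on this state's RoM, and that plain convexity is too weak: it exceeds $\sqrt2(1-\lambda)$ by $\lambda[(2-\sqrt2)+(\sqrt2-1)\lambda]$. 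But that upper bound is the entire content of the proposition, and your proposal does not supply it. The symmetry reduction, the pair $\sigma_\pm$ that is never written down, and the numerical LP fallback are a search plan, not a proof. You also do not identify what any improvement over convexity must achieve: part of the noise $\lambda'\mbb{I}_4/4$ has to cancel part of the negative weight in the decomposition of $\ketbra{\Phi_T}{\Phi_T}$, rather than being charged at full cost $\lambda'$.

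The missing idea is that $\ket{\Phi_T}=\mathrm{C}Z(\ket{+}\otimes\ket{T})$ and $\mathrm{C}Z$ fixes $\mbb{I}_4/4$. The state is therefore Clifford-equivalent to
\[
(1-\lambda')\ketbra{+}{+}\otimes\ketbra{T}{T}+\lambda'\tfrac{\mbb{I}_4}{4}
=\ketbra{+}{+}\otimes\Bigl[(1-\lambda')\ketbra{T}{T}+\tfrac{\lambda'}{4}\mbb{I}\Bigr]+\tfrac{\lambda'}{4}\ketbra{-}{-}\otimes\mbb{I}.
\]
Half of the noise now lands on the magic branch. The bracket is $(1-\tfrac{\lambda'}{2})$ times a one-qubit state with Bloch vector $\frac{1-\lambda'}{1-\lambda'/2}\bigl(\tfrac{1}{\sqrt2},\tfrac{1}{\sqrt2},0\bigr)$, whose RoM is $\max\bigl\{1,\sqrt2(1-\lambda')/(1-\tfrac{\lambda'}{2})\bigr\}$ by Eq.~\eqref{eq:single-qubit-RoM}. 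Convexity, invariance under tensoring with $\ketbra{\pm}{\pm}$, and $\mc{R}(\mbb{I}/2)=1$ then give
\[
\mc{R}\bigl(\mc{E}_\lambda\otimes\mc{E}_\lambda(\ketbra{\Phi_T}{\Phi_T})\bigr)\le\max\Bigl\{1,\ \sqrt2(1-\lambda)^2+\lambda-\tfrac{\lambda^2}{2}\Bigr\}.
\]
Moreover,
\[
\sqrt2(1-\lambda)-\bigl[\sqrt2(1-\lambda)^2+\lambda-\tfrac{\lambda^2}{2}\bigr]=\lambda\bigl[(\sqrt2-1)-(\sqrt2-\tfrac12)\lambda\bigr]>0
\]
for $0<\lambda<\frac{\sqrt2-1}{\sqrt2-1/2}\approx0.45$, which contains the stated range, and $1<\sqrt2(1-\lambda)$ there as well. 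This proves the claim. In your terms, the positive part of this decomposition is the Bell-type states $\mathrm{C}Z\ket{+}\ket{+_y}$ and $\mathrm{C}Z\ket{+}\ket{+}$ you guessed, with $\ket{\pm_y}=(\ket{0}\pm i\ket{1})/\sqrt2$. The negative part is $\mathrm{C}Z\ket{+}\ket{-}$ and $\mathrm{C}Z\ket{+}\ket{-_y}$, and this is what the $\ketbra{+}{+}$-half of the noise partially absorbs. The bound coincides with $\Tr(\rho A)$ for the paper's witness $A$, so it is the exact RoM. The paper itself only exhibits that numerically found dual witness, which alone certifies a lower bound. This primal decomposition is therefore exactly the half of the certificate your inequality uses.
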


\begin{proof}
Since $\ket{T^\perp}=Z\ket{T}$ and $\mc{E}_\lambda$ commutes with single-qubit unitaries, we have
\begin{align}
(\mc{I}\otimes\mc{E}_\lambda)(\rho_T^{\rightarrow})&=(\mc{I}\otimes\mc{E}_\lambda)\left(\frac{1}{2}\ketbra{0}{0}\otimes\ketbra{T}{T}+\frac{1}{2}\ketbra{1}{1}\otimes\ketbra{T^\perp}{T^\perp}\right)\\
&=\frac{1}{2}\ketbra{0}{0}\otimes\mc{E}_\lambda(\ketbra{T}{T})+\frac{1}{2}\ketbra{1}{1}\otimes\mc{E}_\lambda(\ketbra{T^\perp}{T^\perp})\\
&=\frac{1}{2}\ketbra{0}{0}\otimes\mc{E}_\lambda(\ketbra{T}{T})+\frac{1}{2}\ketbra{1}{1}\otimes Z\mc{E}_\lambda(\ketbra{T}{T})Z\\
&=\mathrm{C}Z\left(\frac{\mbb{I}_2}{2}\otimes \mc{E}_\lambda(\ketbra{T}{T})\right)\mathrm{C}Z.
\end{align}
Hence, by the Clifford invariance of the RoM,
\begin{equation}
\mc{R}\big((\mc{I}\otimes\mc{E}_\lambda)(\rho_T^{\rightarrow})\big)=\mc{R}\big(\mc{E}_\lambda(\ketbra{T}{T})\big).
\end{equation}
Since the RoM of a 1-qubit state can be computed by
\begin{equation}
\mc{R}\left(\frac{\mbb{I}+r_xX+r_yY+r_zZ}{2}\right)=\max\{1,|r_x|+|r_y|+|r_z|\},
\end{equation}
we obtain
\begin{equation}
\mc{R}\big(\mc{E}_\lambda(\ketbra{T}{T})\big)=
\begin{cases}
\sqrt{2}(1-\lambda),&\text{when }0\le\lambda\le1-\frac{1}{\sqrt{2}},\\
1,&\text{when }1-\frac{1}{\sqrt{2}}\le\lambda\le1.
\end{cases}
\end{equation}

Now note that $\ket{\Phi_T} = (\mbb{I}_2\otimes TH)\ket{\Phi^+}$, where $\ket{\Phi^+}=\frac{1}{\sqrt{2}}(\ket{00}+\ket{11})$.
Therefore,
\begin{align}
(\mc{E}_\lambda\otimes\mc{E}_\lambda)(\ketbra{\Phi_T}{\Phi_T})&=(\mbb{I}_2\otimes TH)(\mc{E}_\lambda\otimes\mc{E}_\lambda)(\ketbra{\Phi^+}{\Phi^+})(\mbb{I}_2\otimes TH)^\dagger\\
&=(1-\lambda)^2\ketbra{\Phi_T}{\Phi_T}+(2\lambda-\lambda^2)\frac{\mbb{I}_4}{4}.
\end{align}
Hence
\begin{align}
\mc{R}\big((\mc{E}_\lambda\otimes\mc{E}_\lambda)(\ketbra{\Phi_T}{\Phi_T})\big)
&=
\max\left\{
\Tr\left[
\left(
(1-\lambda)^2\ketbra{\Phi_T}{\Phi_T}
+(2\lambda-\lambda^2)\frac{\mbb{I}_4}{4}
\right)A
\right],
1
\right\}\\
&=
\max\left\{
(1-\lambda)^2\sqrt{2}+\frac{2\lambda-\lambda^2}{2},
1
\right\}\\
&=
\begin{cases}
(\sqrt{2}-\tfrac{1}{2})\lambda^2+(1-2\sqrt{2})\lambda+\sqrt{2},
&\text{when }0\le\lambda\le1-\tfrac{1}{\sqrt{2\sqrt{2}-1}},\\
1,
&\text{when }1-\tfrac{1}{\sqrt{2\sqrt{2}-1}}\le\lambda\le1.
\end{cases}
\end{align}
Here
\begin{equation}
A\coloneqq\frac{1}{2}\big(\mbb{I}\otimes\mbb{I}+X\otimes\mbb{I}-Y\otimes X+Z\otimes X+Y\otimes Y+Z\otimes Y+\mbb{I}\otimes Z-X\otimes Z\big)
\end{equation}
is an optimal solution of the dual problem \eqref{eq:RoM_dual}, found numerically, and satisfies
\begin{equation}
\bra{\Phi_T}A\ket{\Phi_T}=\sqrt{2}.
\end{equation}

The decay profile of the RoM of these two states under noise is shown in Fig.~\ref{fig:noisy}.
\end{proof}

\begin{figure}[t]
\centering
\includegraphics[width=0.43\textwidth]{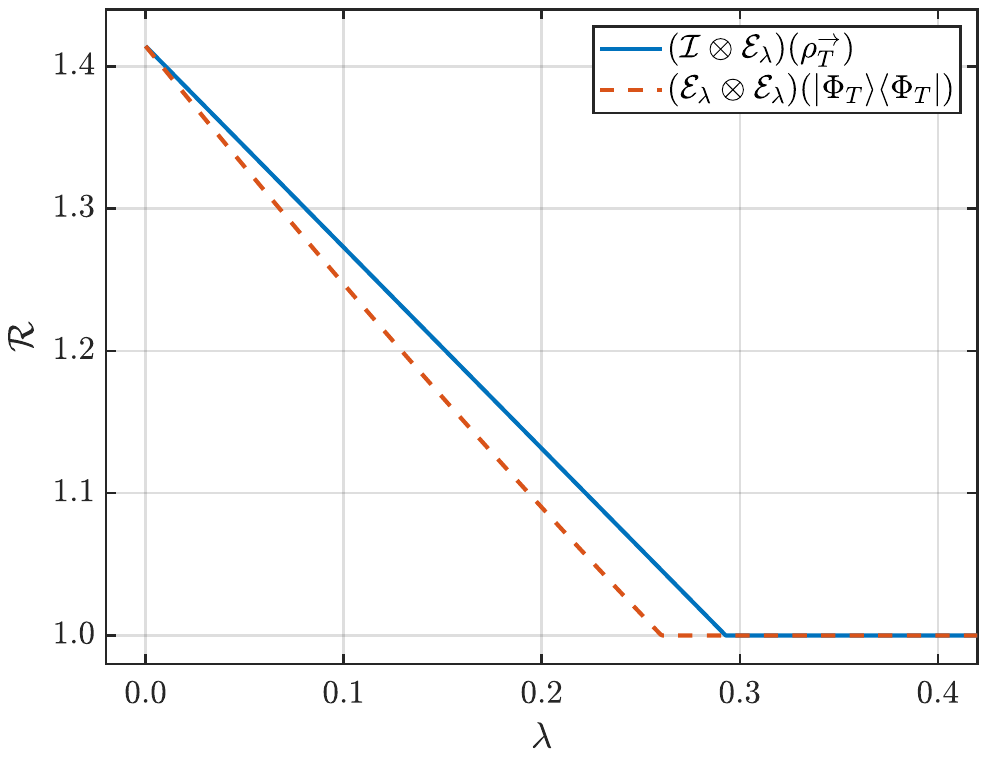}
\caption{Comparison of the RoM of $(\mc{I}\otimes\mc{E}_\lambda)(\rho_T^{\rightarrow})$ and $(\mc{E}_\lambda\otimes\mc{E}_\lambda)(\ketbra{\Phi_T}{\Phi_T})$.
The former is shown by the blue solid line and decays linearly until reaching the threshold $1-\tfrac{1}{\sqrt{2}}$.
The latter is shown by the orange dotted line and decays quadratically until reaching the threshold $1-\tfrac{1}{\sqrt{2\sqrt{2}-1}}$.}
\label{fig:noisy}
\end{figure}

\subsection{Flagged two-way embedding}\label{app:flagged-two-way-embedding}

\begin{thm}[Universal flagged two-way embedding]\label{thm:universal-two-way}
Let $\sigma\notin\stab_n$ and let $\{p_i,W_i\}_{i=0}^{M-1}$ be a Clifford ensemble satisfying $\sum_i p_iW_i\sigma W_i^\dagger\in\stab_n$.
Set $m=\lceil\log_2M\rceil$, $q=\max\{m,n\}$, and $\rho_\sigma^{\rightarrow}\coloneqq\sum_i p_i\ketbra{i}{i}\otimes W_i\sigma W_i^\dagger$.
Let $\widetilde{\rho}_{\sigma,A'B'}^{\rightarrow}$ be obtained by zero-padding its $m$-qubit key and $n$-qubit data registers locally to $q$ qubits.
For flag qubits $a,b$ and the swap $\mathsf{S}_{A'B'}$, define
\begin{equation}
\Omega_{\sigma}^{\leftrightarrow}\coloneqq\frac{1}{2}\left(\ketbra{01}{01}_{ab}\otimes\widetilde{\rho}_{\sigma,A'B'}^{\rightarrow}+\ketbra{10}{10}_{ab}\otimes\mathsf{S}_{A'B'}\widetilde{\rho}_{\sigma,A'B'}^{\rightarrow}\mathsf{S}_{A'B'}\right).
\end{equation}
Then $\Omega_{\sigma}^{\leftrightarrow}$ is permutation symmetric, separable, and ENM on $aA'|bB'$; a deterministic stabilizer protocol maps $\sigma$ to $\Omega_{\sigma}^{\leftrightarrow}$, and either party can recover $\sigma$ with probability $1/2$ using local stabilizer operations and classical communication.
\end{thm}

\begin{proof}
By Theorem~\ref{thm:separableENM}, $\rho_\sigma^{\rightarrow}$ is separable, has stabilizer marginals, and admits deterministic stabilizer encoding and decoding.
Local zero-padding preserves these properties.
Hence both flagged branches of $\Omega_\sigma^{\leftrightarrow}$ are separable, and their equal mixture is permutation symmetric.
Each marginal is a convex mixture of the two padded marginals tagged by computational-basis flag states, so both belong to $\stab_{q+1}$.

To encode $\sigma$, sample the orientation uniformly and run the corresponding padded one-way encoder.
To recover on $bB'$, measure $a$ and accept the outcome $0$, which occurs with probability $1/2$; the padded one-way decoder then recovers $\sigma$ after discarding the padding.
Interchanging the parties gives the reverse protocol.
All steps use local stabilizer operations and one-way classical communication.
\end{proof}

\subsection{Proof of Theorem~\ref{thm:two_way_general}}
\label{app:proof_of_two_way}

The state $\rho^{\leftrightarrow}_{\sigma}$ is separable, and both of its marginals equal $p\sigma+(1-p)\tau\in\stab_n$.

To extract $\sigma$ on subsystem $B$, perform the stabilizer measurement effect $\Pi$ on subsystem $A$.
The unnormalized post-measurement state on subsystem $B$ is
\begin{equation}
p\,\Tr(\Pi\sigma)\,\sigma+(1-p)\,\Tr(\Pi\tau)\,\tau=p\,\Tr(\Pi\sigma)\,\sigma,
\end{equation}
where we used $\Tr(\Pi\tau)=0$.
Thus, conditioned on the successful outcome, the remaining subsystem is exactly $\sigma$, and the success probability is $p\Tr(\Pi\sigma)$.
If $\rho^{\leftrightarrow}_{\sigma}$ were globally stabilizer, this postselected stabilizer measurement would leave a stabilizer state on the unmeasured subsystem, contradicting $\sigma\notin\stab_n$.
Hence $\rho^{\leftrightarrow}_{\sigma}$ is globally nonstabilizer and therefore ENM.
By symmetry, the same protocol extracts $\sigma$ on subsystem $A$ by measuring subsystem $B$.

\subsection{Two-way separable ENM states: optimal extraction}\label{app:two_way_optimal_success_prob}

The two-way $T$-type state given in the main text together with the following $F$-type state provide a concrete entry point to the general proof.

\begin{exmp}[Two-way $F$-type separable ENM state]\label{exmp:two-way-F-state}
For the face state $\ket{F}$ with Bloch vector $(1,1,1)/\sqrt3$, Theorem~\ref{thm:two_way_optimal} gives
\begin{equation}
\rho_F^{\leftrightarrow}\coloneqq(2-\sqrt{3})\ketbra{1}{1}^{\otimes 2}+\,\bigl(1-(2-\sqrt{3})\bigr)\ketbra{F}{F}^{\otimes 2}.
\end{equation}
Measuring either qubit in the $Z$ basis and postselecting the $0$ outcome prepares $\ket{F}$ on the other qubit with probability $1/\sqrt3$.
Thus, producing $N$ copies of $\ket{F}$ consumes, in expectation, $\sqrt3\,N$ copies of $\rho_F^{\leftrightarrow}$.
\end{exmp}

Here, mixing in the $\ketbra{1}{1}^{\otimes 2}$ branch places each marginal on the boundary of the stabilizer octahedron, while the $Z$-basis measurement filters this branch out.
We now show that the same geometric construction works for every pure single-qubit target and attains the optimal one-round extraction probability.

\subsubsection{Achievability}

\begin{proof}[Proof of achievability in Theorem~\ref{thm:two_way_optimal}]
Let $\mathbf s_\phi$ denote the Bloch vector of the stabilizer state $\ket{\phi}$.
Then
\begin{equation}
|\bracket{\phi}{\psi}|^2=\frac{1+\mathbf s_\phi\cdot\mathbf t}{2}.
\end{equation}
Since $\ket{\phi}$ maximizes this fidelity, we have
\begin{equation}
|\bracket{\phi}{\psi}|^2=\frac{1+\|\mathbf t\|_\infty}{2}.
\end{equation}
Let
\begin{equation}
p=\frac{2}{\|\mathbf t\|_1+1},\quad\Pi=\ketbra{\phi}{\phi}.
\end{equation}
The state $p\ketbra{\psi}{\psi}+(1-p)\ketbra{\phi^\perp}{\phi^\perp}$ has Bloch vector
\begin{equation}
\mathbf r=\frac{2\mathbf t-(\|\mathbf t\|_1-1)\mathbf s_\phi}{\|\mathbf t\|_1+1}.
\end{equation}
Since $\ket{\phi}$ maximizes the fidelity, there is a coordinate $j$ such that $|t_j|=\|\mathbf t\|_\infty$ and $\mathbf s_\phi=\operatorname{sgn}(t_j)\hat{\mathbf j}$.
For this coordinate,
\begin{equation}
r_j=\frac{\operatorname{sgn}(t_j)\bigl(2|t_j|-\|\mathbf t\|_1+1\bigr)}{\|\mathbf t\|_1+1}.
\end{equation}
Moreover,
\begin{equation}
2|t_j|-\|\mathbf t\|_1+1\geq1-\frac{\|\mathbf t\|_1}{3}>0,
\end{equation}
where we used $|t_j|\geq \|\mathbf t\|_1/3$ and $\|\mathbf t\|_1\leq \sqrt3$.
Hence the $j$-th component of $\mathbf r$ has the same sign as $t_j$.
For the remaining coordinates $\ell\neq j$, we have $r_\ell=2t_\ell/(\|\mathbf t\|_1+1)$.
Therefore
\begin{equation}
\|\mathbf r\|_1=\frac{2\sum_{\ell\neq j}|t_\ell|+\bigl(2|t_j|-\|\mathbf t\|_1+1\bigr)}{\|\mathbf t\|_1+1}=\frac{\|\mathbf t\|_1+1}{\|\mathbf t\|_1+1}=1.
\end{equation}
By the one-qubit stabilizer octahedron criterion, $p\ketbra{\psi}{\psi}+(1-p)\ketbra{\phi^\perp}{\phi^\perp}\in\stab_1$.
Moreover, $\Tr(\Pi\ketbra{\phi^\perp}{\phi^\perp})=0$.
Therefore Theorem~\ref{thm:two_way_general} shows that $\rho_\psi^\leftrightarrow$ is a two-way separable ENM state and that the local measurement $\Pi$ extracts $\ket{\psi}$ on either subsystem with success probability
\begin{equation}
p\,\Tr(\Pi\ketbra{\psi}{\psi})=\frac{2|\bracket{\phi}{\psi}|^2}{\|\mathbf t\|_1+1}=\frac{1+\|\mathbf t\|_\infty}{1+\|\mathbf t\|_1}.
\end{equation}
\end{proof}

\subsubsection{Optimality}

Let $\mathrm{CSP}_1$ denote the set of all CSP maps on 1 qubit.
The following lemma characterizes the structure of $\mathrm{CSP}_1$.
For one qubit, this class coincides with deterministic stabilizer protocols~\cite{Heimendahl2022axiomatic}.

\begin{lem}[Theorem~5 in~\cite{Heimendahl2022axiomatic}]\label{lemma:CPS1}
Let $\mc{E}$ be an extremal element of the convex set $\mathrm{CSP}_1$, then either $\mc{E}(\cdot)=U\cdot U^\dagger$ for some Clifford unitary $U$, or
\begin{equation}
\mc{E}(\cdot)=\sum_{s\in\{-1,1\}}U_s\Pi_{s}^P\cdot\Pi_{s}^P U_s^\dagger,
\end{equation}
where $U_{-1}, U_1$ are Clifford unitaries, $P \in \{X, Y, Z\}$ is a Pauli operator, and the projectors are defined as $\Pi_{s}^P \coloneqq \frac{\mathbb{I} + sP}{2}$.
\end{lem}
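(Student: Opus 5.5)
The plan is to pass to Choi states and turn the statement into a vertex-enumeration problem for a finite polytope.

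By the Choi characterization quoted above~\cite[Lemma~4.2]{Seddon2019quantifying}, a one-qubit CPTP map $\mc{E}$ is CSP iff $\Lambda_{\mc{E}}\in\stab_2$. Trace preservation is the affine constraint $\Tr_{\mathrm{out}}\Lambda_{\mc{E}}=\mbb{I}/2$. Hence $\mathrm{CSP}_1$ is affinely isomorphic to
\begin{equation}
\mathcal{K}\coloneqq\stab_2\cap\{\Lambda:\Tr_{\mathrm{out}}\Lambda=\mbb{I}/2\}.
\end{equation}
$\stab_2$ is a polytope with $60$ vertices: $36$ product and $24$ maximally entangled pure stabilizer states. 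The reference-marginal constraint has codimension $3$, since it fixes the three reference Bloch coordinates. So $\mathcal{K}$ is a polytope, and each vertex of $\mathcal{K}$ lies in the relative interior of a face of $\stab_2$ of dimension at most $3$. Concretely, every extremal $\mc{E}$ has a Choi state that is a convex combination of at most four affinely independent pure two-qubit stabilizer states whose weighted reference marginals sum to $\mbb{I}/2$.

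The case analysis then runs on the pure states in this minimal decomposition.
\begin{itemize}
\item Each maximally entangled stabilizer state already has reference marginal $\mbb{I}/2$ and is the Choi state of a Clifford unitary, so it lies in $\mathcal{K}$ by itself. A vertex of $\mathcal{K}$ whose support contains one is therefore that single state, which gives the unitary case.
\item Otherwise all terms are products $\ket{a_k}\ket{u_k}$ with $a_k,u_k$ stabilizer. The constraint is $\sum_k w_k\mathbf a_k=0$, where $\mathbf a_k$ are octahedron vertices.
\end{itemize}
In the product case, affine independence, combined with the requirement that the point not be a proper mixture of two points of $\mathcal{K}$, should force the support to be a single antipodal pair $\pm\hat{\mathbf P}$ with weights $1/2$. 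This is exactly $\tfrac12\ketbra{s}{s}\otimes\ketbra{u_s}{u_s}$ summed over $s=\pm$, the Choi state of the map $\sum_s U_s\Pi^P_s\cdot\Pi^P_sU_s^\dagger$ with $U_s\ket{0}=\ket{u_s}$ after the standard transpose convention.

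The main obstacle is excluding supports that mix different axes. An example is weights on $+\hat x,+\hat y,+\hat z,-(\hat x+\hat y+\hat z)$-type configurations; these cannot occur literally, since octahedron vertices only give antipodal cancellations along axes, but mixed-axis sums such as $\tfrac14(\hat x-\hat x+\hat y-\hat y)$ do occur. I would handle this by decomposing any feasible weight vector into antipodal pairs. Because the $\mathbf a_k$ lie on coordinate axes, $\sum_kw_k\mathbf a_k=0$ splits coordinatewise, so the weights on $+\hat P$ and on $-\hat P$ are equal for each $P$. The Choi state is then a convex combination of two-term measure-and-prepare Choi states, one per axis, each in $\mathcal{K}$. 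Extremality thus forces a single axis.

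Within a single axis there may be several output states attached to $+\hat P$. Splitting these as a mixture over outputs again exhibits a convex decomposition inside $\mathcal{K}$, so extremality forces one output per sign. Mixed supports containing both entangled and product terms were already excluded by the first bullet.

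As a sanity check, I would finally verify the count of vertices modulo the Clifford action, $\mc{E}\mapsto[V]\circ\mc{E}\circ[W]$. If the combinatorial argument proves slippery, I would fall back on this symmetry-reduced finite enumeration, checked against~\cite{Heimendahl2022axiomatic}.
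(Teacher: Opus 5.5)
Your argument is correct, and it necessarily takes a different route from the paper, because the paper gives no proof of this lemma: it imports the result as Theorem~5 of~\cite{Heimendahl2022axiomatic}. Your version is a short, self-contained Choi-polytope argument. It uses only two ingredients: the Choi criterion of~\cite[Lemma~4.2]{Seddon2019quantifying}, which the paper already relies on, and the fact that every pure two-qubit stabilizer state is either a product of one-qubit stabilizer states or has the form $(\mbb{I}\otimes C)\ket{\Phi^+}$ with $C$ Clifford. The two load-bearing observations are both sound:
\begin{itemize}
\item If a maximally entangled term $\phi_1$ appears with weight $w_1$, the remainder $(\Lambda-w_1\phi_1)/(1-w_1)$ again has reference marginal $\mbb{I}/2$ and so lies in $\mathcal{K}$. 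Extremality then gives $\Lambda=\phi_1$.
\item One-qubit stabilizer Bloch vectors lie on the coordinate axes, so $\sum_k w_k\mathbf a_k=0$ splits axis by axis. Each per-axis group therefore renormalizes to an element of $\mathcal{K}$.
\end{itemize}
The citation buys brevity, together with the companion fact that $\mathrm{CSP}_1$ coincides with deterministic stabilizer protocols. Your proof buys self-containedness and shows that no vertex enumeration is needed, so your symmetry-reduced fallback is unnecessary.

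Three small points to tidy up.
\begin{enumerate}
\item The face-dimension/Carath\'eodory reduction to at most four affinely independent terms is never actually used. Both observations apply to an arbitrary decomposition of $\Lambda_{\mc E}$ into pure stabilizer states, so you can drop that step.
\item ``Extremality forces a single axis'' should read ``extremality forces $\Lambda_{\mc E}$ to equal one of the per-axis components.'' Components for different axes can coincide; for example, $\tfrac{\mbb{I}}{2}\otimes\ketbra{u}{u}$ arises from every axis. The corrected statement still yields the required form.
\item In the final identification, $U_s$ must map the $s$-eigenstate of $P$ (the range of $\Pi^P_s$) to $\ket{u_s}$, not $\ket{0}$ to $\ket{u_s}$. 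Also, with $\Lambda_{\mc E}=(\mc{I}\otimes\mc{E})(\ketbra{\Phi^+}{\Phi^+})$ the transpose turns $Y$ into $-Y$; this only swaps the labels $s=\pm1$. Transitivity of the Clifford group on the six stabilizer states then supplies $U_{\pm1}$.
\end{enumerate}
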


\begin{prop}\label{prop:two_way_optimality}
Let $\ket{\psi}$ be a one-qubit pure state, with Bloch vector $\mathbf t=(t_x,t_y,t_z)\in\mathbb{R}^3$ (i.e. $\ketbra{\psi}{\psi}=\frac{1}{2}\bigl(\mbb{I}+t_x X+t_y Y+t_z Z\bigr)$).
Suppose a $2$-qubit state $\rho$ on $A\otimes B$ satisfies
\begin{enumerate}
\item permutation symmetric, i.e., $\mathrm{SWAP}\cdot\rho\cdot\mathrm{SWAP}=\rho$;
\item separable;
\item the reduced density matrix $\rho_A\in\stab_1$ (which implies $\rho_B\in\stab_1$ since $\rho_A=\rho_B$);
\item there exist $\mc{P}_A,\mc{P}_B\in\mathrm{CSP}_1$, such that the post-measurement state
\begin{equation}\label{eq:postselect_psi}
\bra{0}\otimes\mbb{I}\,\mc{P}_A\otimes\mc{P}_B(\rho)\,\ket{0}\otimes\mbb{I}\propto\ketbra{\psi}{\psi}.
\end{equation}
\end{enumerate}
Then the success probability satisfies
\begin{equation}\label{eq:succ_bound_general}
\Tr(\bra{0}\otimes\mbb{I}\,\mc{P}_A\otimes\mc{P}_B(\rho)\,\ket{0}\otimes\mbb{I})\le \frac{1+\|\mathbf t\|_\infty}{1+\|\mathbf t\|_1},
\end{equation}
where $\|\mathbf t\|_1:=|t_x|+|t_y|+|t_z|$ and $\|\mathbf t\|_\infty:=\max\{|t_x|,|t_y|,|t_z|\}$.
\end{prop}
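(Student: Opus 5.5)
The plan is to strip away the two CSP maps, first $\mc{P}_B$ and then $\mc{P}_A$, until the problem reduces to a statement about a single stabilizer effect acting on a symmetric separable decomposition of $\rho$. The success probability then factors as (weight of the $\ket{\psi}\otimes\ket{\psi}$ branch) $\times$ (acceptance probability of $\ket{\psi}$). These two factors will be bounded separately by $2/(1+\|\mathbf t\|_1)$ and $(1+\|\mathbf t\|_\infty)/2$.

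\emph{Step 1: reduce $\mc{P}_A$ to an effect.} Set $E\coloneqq\mc{P}_A^\dagger(\ketbra{0}{0})$, so that $0\le E\le\mbb{I}$. The unnormalized output on $B$ is $\mc{P}_B(\omega)$, where $\omega\coloneqq\Tr_A[(E\otimes\mbb{I})\rho]$, and the success probability is $q=\Tr\omega$. Since $E$ is linear in $\mc{P}_A$, Lemma~\ref{lemma:CPS1} lets us bound $\Tr(E\psi)$ by checking extremal maps only. For a Clifford $U$, $E=U^\dagger\ketbra{0}{0}U$ is a pure stabilizer projector. For a measure-and-prepare map, $E=a\Pi^P_++b\Pi^P_-$ with $a,b\in[0,1]$. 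In either case
\begin{equation}
\Tr(E\,\ketbra{\psi}{\psi})\le\max_{\phi\in\stab_1\text{ pure}}|\bracket{\phi}{\psi}|^2=\frac{1+\|\mathbf t\|_\infty}{2},
\end{equation}
and the same bound holds for convex combinations.

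\emph{Step 2: reduce $\mc{P}_B$ to a Clifford.} Decompose $\mc{P}_B=\sum_j\mu_j\mc{Q}_j$ into extremal CSP maps. Because $\ketbra{\psi}{\psi}$ is pure, each term with $\mc{Q}_j(\omega)\neq0$ must itself be proportional to $\ketbra{\psi}{\psi}$. A measure-and-prepare map outputs a mixture of stabilizer states, so it cannot produce the nonstabilizer pure $\ketbra{\psi}{\psi}$ (if $\psi$ is a stabilizer state the bound is trivial anyway). Hence every contributing $\mc{Q}_j$ is a Clifford $U_j$ with $\omega\propto U_j^\dagger\ketbra{\psi}{\psi}U_j$. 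A Clifford permutes the Bloch coordinates up to sign, so $\|\cdot\|_1$ and $\|\cdot\|_\infty$ are unchanged. We may therefore replace $\psi$ by $U_j^\dagger\psi U_j$ and assume $\omega=q\,\ketbra{\psi}{\psi}$.

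\emph{Step 3: use separability and symmetry.} Write $\rho=\sum_k\lambda_k\alpha_k\otimes\beta_k$ with pure $\alpha_k,\beta_k$. Because $\rho$ is symmetric, the swapped decomposition $\sum_k\lambda_k\beta_k\otimes\alpha_k$ represents the same state.
\begin{itemize}
\item Purity of $\omega=\sum_k\lambda_k\Tr(E\alpha_k)\beta_k$ forces $\beta_k=\psi$ whenever $\Tr(E\alpha_k)>0$.
\item Applying the same argument to the swapped decomposition gives $\Tr_B[(\mbb{I}\otimes E)\rho]=q\psi$. It then forces $\alpha_k=\psi$ whenever $\Tr(E\beta_k)>0$.
\end{itemize}
If $\Tr(E\psi)=0$ then $q=0$ and there is nothing to prove. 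Otherwise, every term contributing to $q$ has $\alpha_k=\beta_k=\psi$. Hence $q=w\,\Tr(E\psi)$, where $w$ is the total weight of the $\psi\otimes\psi$ terms.

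\emph{Step 4: bound $w$ using the stabilizer marginal.} The marginal has the form $\rho_B=w\psi+(1-w)\tau$ for some state $\tau$, and it lies in $\stab_1$. Test it against the octahedron witness $W=\sum_{\ell}\operatorname{sgn}(t_\ell)\sigma_\ell$, which satisfies $\Tr(W\phi)\in[-1,1]$ on all states of $\stab_1$, while $\Tr(W\psi)=\|\mathbf t\|_1$. This gives $w\|\mathbf t\|_1-(1-w)\le1$, i.e. $w\le2/(1+\|\mathbf t\|_1)$. Combining with Step 1 yields Eq.~\eqref{eq:succ_bound_general}.

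I expect Step 3 to be the delicate part. One must check that swap-symmetry legitimately lets us use the reversed decomposition, and that the case analysis with $\Tr(E\psi)>0$ closes. Zero-probability terms and non-unique decompositions are harmless because only the positivity pattern of the products $\Tr(E\alpha_k)$ and $\Tr(E\beta_k)$ matters. Step 2 also needs care: one must verify that no mixed output of a measure-and-prepare branch can contribute to a pure nonstabilizer output.
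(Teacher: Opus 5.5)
\textbf{There is a genuine gap: Steps 1 and 4 each rest on an inequality that is false as stated.} Steps 2 and 3 are fine.

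In Step 1, a measure-and-prepare branch gives $E=a\Pi^P_++b\Pi^P_-$, and then $\Tr(E\ketbra{\psi}{\psi})$ can be as large as $1$. For example, the reset channel $\rho\mapsto\Tr(\rho)\ketbra{0}{0}$ is CSP (measure $Z$, then apply $\mbb{I}$ or $X$) and has $E=\mbb{I}$.

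In Step 4, the inequality $w\|\mathbf t\|_1-(1-w)\le1$ requires $\Tr(W\tau)\ge-1$. This holds on $\stab_1$ but not for an arbitrary state $\tau$. For the orthogonal state $\tau=\ketbra{\psi^\perp}{\psi^\perp}$ one gets $\Tr(W\tau)=-\|\mathbf t\|_1<-1$. The stabilizer-marginal condition then only gives $w\le(1+\|\mathbf t\|_1)/(2\|\mathbf t\|_1)$, which exceeds $2/(1+\|\mathbf t\|_1)$.

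This is not harmless slack. Take $\rho=w\ketbra{\psi}{\psi}^{\otimes 2}+(1-w)\ketbra{\psi^\perp}{\psi^\perp}^{\otimes 2}$ with that maximal $w$, and the effect $E=\ketbra{\psi}{\psi}$. This satisfies conditions 1--3 and the postselection requirement, and succeeds with probability $(1+\|\mathbf t\|_1)/(2\|\mathbf t\|_1)$. For $\ket{T}$ that is about $0.854>1/\sqrt2$. The only thing excluding it is that this $E$ cannot arise from a CSP $\mc{P}_A$. So the CSP structure must be used to pin down $\tau$, not merely to bound $\Tr(E\ketbra{\psi}{\psi})$.

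\textbf{The missing idea is a rank argument, and your Step 3 almost contains it.}
\begin{enumerate}
\item If $\Tr(E\ketbra{\psi}{\psi})>0$, Step 3 actually shows that every term other than $\psi\otimes\psi$ has $\Tr(E\alpha_k)=\Tr(E\beta_k)=0$. Hence $(1-w)\tau$ is supported on $\ker E$.
\item Condition 3 excludes $w=1$, since that would give $\rho_A=\ketbra{\psi}{\psi}\notin\stab_1$. So $E$ has rank one: $E=c\ketbra{e}{e}$ with $0<c\le1$.
\item Write $E=\sum_i q_iE_i$ over the extremal branches of $\mc{P}_A$. Each $q_iE_i\le E$ must be proportional to $\ketbra{e}{e}$. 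Every $E_i$ is either a pure stabilizer projector or $a\Pi^P_++b\Pi^P_-$, so $\ket{e}=\ket{S}$ must be a stabilizer state, and therefore $\tau=\ketbra{S^\perp}{S^\perp}$.
\item Now $\Tr(E\ketbra{\psi}{\psi})=c|\bracket{S}{\psi}|^2\le(1+\|\mathbf t\|_\infty)/2$ and $\Tr(W\tau)\ge-1$. Both of your bounds become valid, and their product gives the claim.
\end{enumerate}

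In the paper this role is played by the observation that a measurement branch of $\mc{P}_A$ can contribute through at most one outcome. The two conditional states on $B$ sum to the stabilizer marginal $\rho_B$, which cannot be proportional to a nonstabilizer pure state. This reduces $\mc{P}_A$ to projection onto a single stabilizer state $\ket{S}$. The SDP lemma then forces $\rho=w\ketbra{\psi}{\psi}^{\otimes 2}+(1-w)\ketbra{S^\perp}{S^\perp}^{\otimes 2}$.

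Once repaired, your route bounds $w$ with the octahedron witness separately from the fidelity factor. That is a shorter alternative to the paper's piecewise evaluation of $f_\psi(\ket{S})$.
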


Here $\mc{P}_B$ should be understood as the allowed local stabilizer post-processing on the unmeasured output qubit.
Equivalently, condition~4 can be written as
\begin{equation*}
\mc{P}_B\!\left(\bra{0}\otimes\mbb{I}\,\mc{P}_A\otimes\mc{I}(\rho)\,\ket{0}\otimes\mbb{I}\right)\propto\ketbra{\psi}{\psi}.
\end{equation*}
We place $\mc{P}_B$ before the projection in \eqref{eq:postselect_psi} only as a compact notation.

\begin{proof}[Proof of Proposition~\ref{prop:two_way_optimality}]
If $\ket{\psi}$ is a stabilizer state, then $\|\mathbf t\|_\infty=\|\mathbf t\|_1=1$, and the right-hand side of \eqref{eq:succ_bound_general} is $1$.
The claim is then trivial.
Hence we assume below that $\ket{\psi}\notin\stab_1$.

\textbf{Claim.} If there exists a two-qubit state $\rho$ and stabilizer protocol $\mc{P}_A\otimes \mc{P}_B$ satisfying 1--4, then there exist a 2-qubit state $\rho'$ satisfying 1--3, and a 1-qubit stabilizer state $\ket{S^*}$, such that
\begin{equation}
\bra{S^*}\otimes\mbb{I}\,\rho'\,\ket{S^*}\otimes\mbb{I}\propto\ketbra{\psi}{\psi},
\end{equation}
and has a measurement success probability
\begin{equation}
\Tr(\bra{S^*}\otimes\mbb{I}\,\rho'\,\ket{S^*}\otimes\mbb{I})\ge\Tr(\bra{0}\otimes\mbb{I}\,\mc{P}_A\otimes\mc{P}_B(\rho)\,\ket{0}\otimes\mbb{I}).
\end{equation}

\begin{proof}[Proof of claim]
Since $\mc{P}_B$ is trace-preserving, we know
\begin{equation}
\frac{\bra{0}\otimes\mbb{I}\,\mc{P}_A\otimes\mc{P}_B(\rho)\,\ket{0}\otimes\mbb{I}}{\Tr(\bra{0}\otimes\mbb{I}\,\mc{P}_A\otimes\mc{P}_B(\rho)\,\ket{0}\otimes\mbb{I})}=\mc{P}_B\left(\sigma\right),
\end{equation}
where
\begin{equation}
\sigma\coloneqq\frac{\bra{0}\otimes\mbb{I}\,\mc{P}_A\otimes\mc{I}(\rho)\,\ket{0}\otimes\mbb{I}}{\Tr(\bra{0}\otimes\mbb{I}\,\mc{P}_A\otimes\mc{I}(\rho)\,\ket{0}\otimes\mbb{I})}
\end{equation}
is a single qubit state.
Hence $\mc{P}_B(\sigma)=\ketbra{\psi}{\psi}$.

By Lemma~\ref{lemma:CPS1}, we can write $\mc{P}_B(\sigma)$ as
\begin{equation}
\mc{P}_B(\sigma)=\sum_{i=1}^{N_U}p_i\,U^{(i)}\sigma U^{(i)\dagger}+\sum_{i=N_U+1}^{N_U+N_P}p_i\sum_{s\in\{-1,1\}}U^{(i)}_s\Pi_s^{P^{(i)}}\sigma\Pi_s^{P^{(i)}} U^{(i)\dagger}_s.
\end{equation}
Here, $\{p_i\}_{i=1}^{N_U+N_P}$ is a probability distribution with $p_i>0$ for all $i$; $U^{(i)}$, $U^{(i)}_s$ are Clifford unitaries; $P^{(i)}$'s are Pauli operators; $\Pi_s^{P^{(i)}}=\frac{\mbb{I}+sP^{(i)}}{2}$ are projectors.
$\ketbra{\psi}{\psi}$ is a pure state, thus an extremal point in the convex set of 1-qubit density matrices.
Since $U^{(i)}\sigma U^{(i)\dagger}$ and $\sum_{s\in\{-1,1\}}U^{(i)}_s\Pi_s^{P^{(i)}}\sigma\Pi_s^{P^{(i)}} U^{(i)\dagger}_s$ are all valid density matrices, and their convex combination equals  $\ketbra{\psi}{\psi}$, we know all of them must equal $\ketbra{\psi}{\psi}$.
Suppose there are no unitary terms ($N_U=0$). Then $\mc{P}_B(\sigma)\in\stab$, which is not possible since $\ketbra{\psi}{\psi}\notin\stab$.
Therefore $N_U>0$, and $U^{(1)}\sigma U^{(1)\dagger}=\ketbra{\psi}{\psi}$.
This means $\sigma$ is a pure state.
We take
\begin{equation}
\rho'\coloneqq \big(U^{(1)}\otimes U^{(1)}\big)\rho \big(U^{(1)\dagger}\otimes U^{(1)\dagger}\big),
\end{equation}
which satisfies 1--3 in the proposition's statement.

Again we write
\begin{equation}
\mc{P}_A(\cdot)=\sum_{i=1}^{M_U}q_i\,V^{(i)}\cdot V^{(i)\dagger}+\sum_{i=M_U+1}^{M_U+M_P}q_i\sum_{s\in\{-1,1\}}V^{(i)}_s\Pi_s^{Q^{(i)}}\cdot\Pi_s^{Q^{(i)}}V^{(i)\dagger}_s .
\end{equation}
For any 1-qubit stabilizer state $\ket{S}$, define
\begin{equation}
R_S:=\bra{S}\otimes\mbb{I}\,\rho\,\ket{S}\otimes\mbb{I}.
\end{equation}
Let $\ket{S_i}:=V^{(i)\dagger}\ket{0}$, and let $\ket{Q_s^{(i)}}$ be the stabilizer state satisfying $\Pi_s^{Q^{(i)}}=\ketbra{Q_s^{(i)}}{Q_s^{(i)}}$.
Set
\begin{equation}
c_{i,s}:=\abs{\bra{0}V_s^{(i)}\ket{Q_s^{(i)}}}^2\in[0,1].
\end{equation}
Then
\begin{equation}\label{eq:2_qubit_optimal_PA_decomposition}
\bra{0}\otimes\mbb{I}\,\mc{P}_A\otimes\mc{I}(\rho)\,\ket{0}\otimes\mbb{I}=\sum_{i=1}^{M_U}q_i R_{S_i}+\sum_{i=M_U+1}^{M_U+M_P}q_i\sum_{s\in\{-1,1\}}c_{i,s}R_{Q_s^{(i)}}.
\end{equation}
Since the left-hand side is proportional to $\sigma$, and $\sigma$ is pure, every nonzero positive summand on the right-hand side is proportional to $\sigma$.

Define
\begin{equation}
M:=
\max_{\substack{\ket{S}\in\stab_1\\ R_S\neq0,\;R_S\propto\sigma}}
\Tr(R_S).
\end{equation}

For each unitary branch, the contribution to the trace is at most $q_iM$.
For each measurement branch, the two projectors $\ketbra{Q_+^{(i)}}{Q_+^{(i)}}$ and $\ketbra{Q_-^{(i)}}{Q_-^{(i)}}$ form a Pauli measurement, so
\begin{equation}
R_{Q_+^{(i)}}+R_{Q_-^{(i)}}=\rho_B.
\end{equation}
At most one of the two weighted terms $c_{i,s}R_{Q_s^{(i)}}$ can be nonzero.
Indeed, if both were nonzero, then both $R_{Q_+^{(i)}}$ and $R_{Q_-^{(i)}}$ would be proportional to $\sigma$, and hence $\rho_B\propto\sigma$, contradicting $\rho_B\in\stab_1$ and $\sigma\notin\stab_1$.
Thus each measurement branch also contributes at most $q_iM$ to the trace.
Therefore
\begin{equation}
\Tr\!\left(\bra{0}\otimes\mbb{I}\,\mc{P}_A\otimes\mc{I}(\rho)\,\ket{0}\otimes\mbb{I}\right)\le\sum_{i=1}^{M_U+M_P}q_iM=M.
\end{equation}
Choose a stabilizer state $\ket{S^{*}_{\mathrm{temp}}}$ attaining this maximum $M$.
Then
\begin{equation}
R_{S^{*}_{\mathrm{temp}}}\propto\sigma,\quad\Tr(R_{S^{*}_{\mathrm{temp}}})\ge\Tr\!\left(\bra{0}\otimes\mbb{I}\,\mc{P}_A\otimes\mc{I}(\rho)\,\ket{0}\otimes\mbb{I}\right).
\end{equation}

Now take
\begin{equation}
\ket{S^*}\coloneqq U^{(1)}\ket{S^{*}_{\mathrm{temp}}},
\end{equation}
we can conclude that
\begin{align}
(\bra{S^*}\otimes\mbb{I})\,\rho'\,(\ket{S^*}\otimes\mbb{I})=&(\bra{S^{*}_{\mathrm{temp}}}\otimes U^{(1)})\,\rho\,(\ket{S^{*}_{\mathrm{temp}}}\otimes U^{(1)\dagger})\\
={}&U^{(1)}\Big[(\bra{S^{*}_{\mathrm{temp}}}\otimes\mbb{I})\,\rho\,(\ket{S^{*}_{\mathrm{temp}}}\otimes\mbb{I})\Big]U^{(1)\dagger}\\
\propto{}&U^{(1)}\sigma U^{(1)\dagger}=\ketbra{\psi}{\psi}.
\end{align}
and the success probability
\begin{align}
\Tr(\bra{S^*}\otimes\mbb{I}\,\rho'\,\ket{S^*}\otimes\mbb{I})=&\Tr(\bra{S^{*}_{\mathrm{temp}}}\otimes\mbb{I}\,\rho\,\ket{S^{*}_{\mathrm{temp}}}\otimes\mbb{I})\\
\ge&\Tr(\bra{0}\otimes\mbb{I}\,\mc{P}_A\otimes\mc{P}_B(\rho)\,\ket{0}\otimes\mbb{I}),
\end{align}
where the last line uses that $\mc{P}_B$ is trace-preserving.
\end{proof}

Therefore, we know the maximum success probability is upper bounded by
\begin{equation}\label{eq:succ_le_max_f}
\Tr(\bra{0}\otimes\mbb{I}\,\mc{P}_A\otimes\mc{P}_B(\rho)\,\ket{0}\otimes\mbb{I})\le\max_{\ketbra{S}{S}\in\stab} f_\psi(\ket{S}),
\end{equation}
where the maximum is taken over $1$-qubit stabilizer states $\{\ket{0},\ket{1},\ket{+},\ket{-},\ket{y+},\ket{y-}\}$, and
\begin{align}
f_\psi(\ket{S})\coloneqq\textbf{max}_{\rho}~ &\Tr(\bra{S}\otimes\mbb{I}\,\rho\,\ket{S}\otimes\mbb{I})\\
\textbf{ s.t. } & \rho\ge0,\Tr(\rho)=1,\\
&\mathrm{SWAP}\cdot\rho\cdot\mathrm{SWAP} = \rho,\\
& \Tr_B(\rho)\in\stab_1,\\
&\rho^{\mathrm{T}_B}\ge0,\\
& \bra{S}\otimes\mbb{I}\,\rho\,\ket{S}\otimes\mbb{I}= \ketbra{\psi}{\psi}\Tr(\bra{S}\otimes\mbb{I}\,\rho\,\ket{S}\otimes\mbb{I}).\label{eq:postselect_constraint_psi}
\end{align}
This is an SDP that admits an analytical solution.

\begin{lem}[Closed form of $\max_{\ketbra{S}{S}}f_\psi(\ket{S})$]\label{lem:fpsi_closed_form}
Let $\ket{\psi}\notin\stab_1$ be pure with Bloch vector $\mathbf t=(t_x,t_y,t_z)$.
Let $\ket{S}$ be a stabilizer state with Bloch vector $\mathbf s=(s_x,s_y,s_z)$.
Let $k\in\{x,y,z\}$ be the unique nonzero coordinate index of $\mathbf s$.
Then
\begin{equation}\label{eq:fpsi_piecewise}
f_\psi(\ket{S})=
\begin{cases}
0, & \displaystyle\sum_{j\neq k}|t_j|>1+s_k t_k,\\[0.8em]
\dfrac{1+s_k t_k}{1+s_k t_k+\sum_{j\neq k}|t_j|}, & \displaystyle\sum_{j\neq k}|t_j|\le 1+s_k t_k.
\end{cases}
\end{equation}
This implies
\begin{equation}\label{eq:fpsi_max_closed}
\max_{\ketbra{S}{S}\in\stab} f_\psi(\ket{S})=\dfrac{1+\|\mathbf t\|_\infty}{1+\|\mathbf t\|_1}.
\end{equation}
\end{lem}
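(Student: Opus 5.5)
By Clifford symmetry we may rotate $\ket{S}$ to $\ket{0}$, so $k=z$ and $s_k=1$; a general $\ket{S}$ is then handled by replacing $t_k$ with $s_kt_k$ and relabelling the other two coordinates, since single-qubit Cliffords permute and sign-flip Bloch coordinates. This conjugation preserves every constraint of the SDP defining $f_\psi$: SWAP symmetry, the PPT condition, and membership $\rho_A\in\stab_1$. With $\ket{S}=\ket{0}$, the postselection constraint says $\bra{0}_A\rho\ket{0}_A=c\,\ketbra{\psi}{\psi}$ with $c=f$. We write $\rho$ in the block form $\rho=\sum_{a,b}\ketbra{a}{b}\otimes\rho_{ab}$ with $\rho_{00}=c\ketbra{\psi}{\psi}$.

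\textbf{Upper bound.} Positivity of $\rho$ with $\rho_{00}$ rank one forces $\rho_{01}=\ket{\psi}\bra{v}$ for some vector $v$, since the range of $\rho_{01}$ lies in the range of $\rho_{00}$. Next, SWAP symmetry relates the $A$-blocks to the $B$-blocks. Taking $\bra{0}_B\cdot\ket{0}_B$ of $\rho$ gives $\bra{0}\rho_{ab}\ket{0}$, which must equal the $(a,b)$ entry of $c\ketbra{\psi}{\psi}$ read on $A$. In particular $\bra{0}\rho_{11}\ket{0}=c\,|\bracket{1}{\psi}|^2=c(1-t_z)/2$.

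The key step is the marginal $\rho_A=\Tr_B\rho$. Its $z$-component is $\Tr\rho_{00}-\Tr\rho_{11}=c-(1-c)=2c-1$. Its $x$ and $y$ components come from $2\Tr\rho_{01}$, which equals $2c\,\bra{0}\psi\rangle$-type quantities by symmetry: $\Tr_B\rho$ must equal $\Tr_A\rho$, and $\Tr_A\rho=\rho_{00}+\rho_{11}$. Hence the cleanest route is to use $\rho_B=\rho_{00}+\rho_{11}=c\ketbra{\psi}{\psi}+\rho_{11}\in\stab_1$. I would apply the octahedron criterion $|r_x|+|r_y|+|r_z|\le1$ to $\rho_B$ together with $\rho_A=\rho_B$. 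The $z$-coordinate of $\rho_A$ is $2c-1$. The $x,y$ coordinates of $\rho_B$ are $c(t_x,t_y)$ plus the contribution of $\rho_{11}$, and the entry constraint $\bra{0}\rho_{11}\ket{0}=c(1-t_z)/2$ pins the $z$-component of $\rho_B$ as $c(1+t_z)/2\cdot2-1$ consistently.

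The hard part is controlling the off-diagonal $x,y$ contribution of $\rho_{11}$, which could cancel $c(t_x,t_y)$. Here I expect the PPT (or symmetric-separability) constraint to enter. It should force $\rho_{11}$'s coherence $\bra{0}\rho_{11}\ket{1}$ to be tied to $\bra{1}\rho_{01}$-type entries, which by symmetry equal $c\,\bracket{0}{\psi}\bracket{\psi}{1}$-terms. The aim is to show that the minimal achievable $|r_x|+|r_y|$ is $c(|t_x|+|t_y|)$ and that the $z$-coordinate satisfies $|r_z|\ge 1-c(1+t_z)$. Combining gives $c(|t_x|+|t_y|)+1-c(1+t_z)\le1$ only if we also match signs; solving the resulting linear inequality, which I will set up carefully as $c(1+t_z+|t_x|+|t_y|)\le 1+t_z$, yields the bound $c\le(1+t_z)/(1+t_z+|t_x|+|t_y|)$. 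For infeasibility, when $|t_x|+|t_y|>1+t_z$, I would show that any feasible $\rho$ with $c>0$ violates the octahedron inequality outright, giving $f=0$.

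\textbf{Achievability.} I would take $\rho=p\ketbra{\psi}{\psi}^{\otimes2}+(1-p)\ketbra{1}{1}^{\otimes2}$, exactly as in the achievability proof of Theorem~\ref{thm:two_way_optimal}, with $p$ chosen so the marginal lies on the octahedron face. This works precisely when $p\le1$ is attainable with the coordinate sign condition, i.e. when $\sum_{j\ne k}|t_j|\le1+t_k$, and it reproduces the stated value.

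Finally, maximizing over the six $\ket{S}$ gives $(1+s_kt_k)/(1+\|\mathbf t\|_1+s_kt_k-|t_k|)$. This is increasing in $s_kt_k$, so the maximum is attained at $s_kt_k=|t_k|=\|\mathbf t\|_\infty$, giving \eqref{eq:fpsi_max_closed}; feasibility there holds by the inequality $2|t_j|-\|\mathbf t\|_1+1>0$ already shown in the achievability proof.
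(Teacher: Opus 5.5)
\textbf{Verdict: there is a genuine gap in the upper bound, which is the main content of the lemma.}

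Your Clifford reduction to $\ket{S}=\ket{0}$, your achievability construction, and your final maximization over the six stabilizer states are fine and agree with the paper. The problem is in the upper bound.

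\textbf{What is missing or wrong.} The step where PPT or separability must control the free coherence $\bra{0}\rho_{11}\ket{1}$ is only announced (``I expect'', ``it should force''). It is never carried out, and the infeasibility branch is only asserted. Moreover, the targets you set would not give the lemma even if you proved them.
\begin{itemize}
\item You computed yourself that the $z$-coordinate of the marginal is exactly $2c-1$, since $\Tr\rho_{00}=c$. So your later expression $c(1+t_z)-1$ is wrong, and the bound $|r_z|\ge 1-c(1+t_z)$ fails for $0<c<1/2$.
\item The transverse part is not $c(|t_x|+|t_y|)$ at its minimum. With that value and $r_z=2c-1$, the octahedron condition only gives $c\le 2/(2+|t_x|+|t_y|)$. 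This is strictly weaker than the claimed $(1+t_z)/(1+t_z+|t_x|+|t_y|)$ whenever $t_z<1$.
\item Your inequality $c(|t_x|+|t_y|)+1-c(1+t_z)\le 1$ reduces, for $c>0$, to the $c$-independent condition $|t_x|+|t_y|\le 1+t_z$. The bound $c(1+t_z+|t_x|+|t_y|)\le 1+t_z$ does not follow from it.
\end{itemize}
The underlying confusion is between the success probability $c$ and the weight $w=2c/(1+t_z)$ of the $\ket{\psi\psi}$ branch.

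\textbf{How the paper closes it.} For two qubits, PPT is equivalent to separability. SWAP symmetry then gives a decomposition into symmetrized pure product terms. Since $\bracket{S}{\psi}\neq 0$, the requirement that $\bra{S}\rho\ket{S}$ be rank one forces every term to be $\ket{\psi\psi}$ or $\ket{S^\perp S^\perp}$. Hence every feasible state has the form $w\ketbra{\psi}{\psi}^{\otimes 2}+(1-w)\ketbra{S^\perp}{S^\perp}^{\otimes 2}$. Its objective is $w(1+s_kt_k)/2$ and its marginal Bloch vector is $w\mathbf t-(1-w)\mathbf s$. The octahedron condition then yields both branches directly.

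\textbf{How your block approach can be finished.} Write $\psi_b=\bracket{b}{\psi}$ and $\rho_{01}=\ket{\psi}\bra{v}$. SWAP symmetry gives $\bar v_0=c\bar\psi_1$, which leaves only $v_1$ free. The principal submatrix of $\rho^{\mathrm{T}_B}$ on $\ket{00},\ket{01},\ket{10}$ is
\[
\begin{pmatrix}
c|\psi_0|^2 & c\bar\psi_0\psi_1 & c\psi_0\bar\psi_1\\
c\psi_0\bar\psi_1 & c|\psi_1|^2 & \psi_0\bar v_1\\
c\bar\psi_0\psi_1 & \bar\psi_0 v_1 & c|\psi_1|^2
\end{pmatrix}.
\]
Its Schur complement with respect to the top-left entry has zero diagonal. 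Positivity therefore forces the off-diagonal entry to vanish, giving $v_1=c\psi_1^2/\psi_0$. Hence $\mathbf r=(wt_x,wt_y,2c-1)$ with $w=2c/(1+t_z)$. The condition $\|\mathbf r\|_1\le 1$ then gives:
\begin{itemize}
\item for $c\ge 1/2$, the bound $c\le (1+t_z)/(1+t_z+|t_x|+|t_y|)$;
\item for $0<c<1/2$, the condition $|t_x|+|t_y|\le 1+t_z$.
\end{itemize}
Together these are exactly the piecewise formula.
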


\begin{proof}
We solve the SDP defining $f_\psi(\ket{S})$.

Let $\sigma_B:=\bra{S}\otimes \mbb{I}\,\rho\,\ket{S}\otimes \mbb{I}$.
Constraint \eqref{eq:postselect_constraint_psi} makes $\sigma_B$ rank one and proportional to $\ketbra{\psi}{\psi}$.
Since this is a two-qubit optimization, the PPT constraint is equivalent to separability.
Together with the SWAP symmetry, this allows us to write a symmetric separable decomposition
\begin{equation}
\rho=\sum_\ell p_\ell\,\frac{\ket{a_\ell b_\ell}\!\bra{a_\ell b_\ell}+\ket{b_\ell a_\ell}\!\bra{b_\ell a_\ell}}{2},\quad p_\ell\ge0,\ \sum_\ell p_\ell=1.
\end{equation}
Then
\begin{equation}
\sigma_B=\sum_\ell p_\ell\,\frac{|\langle S|a_\ell\rangle|^2\ketbra{b_\ell}{b_\ell}+|\langle S|b_\ell\rangle|^2\ketbra{a_\ell}{a_\ell}}{2}.
\end{equation}
A positive mixture is rank one only if every pure state appearing with nonzero weight is identical to $\ket{\psi}$.
Since $\ket{\psi}\notin\stab_1$, each term in the above decomposition is therefore of one of two types: either it contributes, in which case $\ketbra{a_\ell}{a_\ell}=\ketbra{b_\ell}{b_\ell}=\ketbra{\psi}{\psi}$, or it does not contribute, in which case $\langle S|a_\ell\rangle=\langle S|b_\ell\rangle=0$, i.e. $\ketbra{a_\ell}{a_\ell}=\ketbra{b_\ell}{b_\ell}=\ketbra{S^\perp}{S^\perp}$, where $\ket{S^\perp}$ is the stabilizer state orthogonal to $\ket{S}$.
Therefore any feasible $\rho$ must be of the form
\begin{equation}\label{eq:rho_w}
\rho(w)=w\,\ketbra{\psi\psi}{\psi\psi}+(1-w)\,\ketbra{S^\perp S^\perp}{S^\perp S^\perp},\quad 0\le w\le1.
\end{equation}
The objective equals
\begin{equation}\label{eq:obj_w}
\Tr\!\bigl[(\ketbra{S}{S}\otimes \mbb{I})\rho(w)\bigr]=w\,|\langle S|\psi\rangle|^2=w\,\frac{1+s_k t_k}{2}.
\end{equation}
The last equality uses $|\langle S|\psi\rangle|^2=(1+\mathbf s\cdot\mathbf t)/2=(1+s_k t_k)/2$.
The partial trace of $\rho(w)$ over $B$ yields
\begin{equation}
\Tr_B\rho(w)=w\,\ketbra{\psi}{\psi}+(1-w)\,\ketbra{S^\perp}{S^\perp}.
\end{equation}
Its Bloch vector is
\begin{equation}
\mathbf r(w)=w\,\mathbf t+(1-w)(-\mathbf s).
\end{equation}
The condition $\Tr_B(\rho)\in\stab_1$ is equivalent to $\|\mathbf r(w)\|_1\le1$.
Since $s_k\in\{-1,+1\}$, one checks
\begin{equation}
\|\mathbf r(w)\|_1=\bigl| -1+w(1+s_k t_k)\bigr|+w\sum_{j\neq k}|t_j|.
\end{equation}
Since $\ket{\psi}\notin\stab_1$, we have $1+s_k t_k>0$.
The absolute value changes sign at $w=(1+s_k t_k)^{-1}$.
Hence the stabilizer constraint is equivalent to
\begin{equation}
\begin{cases}
1-w(1+s_k t_k)+w\sum_{j\neq k}|t_j|\le1,
& 0\le w\le (1+s_k t_k)^{-1},\\[0.4em]
-1+w(1+s_k t_k)+w\sum_{j\neq k}|t_j|\le1,
& w\ge (1+s_k t_k)^{-1}.
\end{cases}
\end{equation}

First consider $\sum_{j\neq k}|t_j|>1+s_k t_k$.
In the first branch, the inequality is equivalent to $w\left(\sum_{j\neq k}|t_j|-(1+s_k t_k)\right)\le0$, thus no $w>0$ is feasible.
The second branch has no feasible $w$: its lower endpoint already violates the constraint, and its left-hand side increases with $w$.
Therefore $w=0$, and $f_\psi(\ket{S})=0$.

Then consider $\sum_{j\neq k}|t_j|\le1+s_k t_k$.
The first branch is always satisfied.
The second branch gives
\begin{equation}
w\le \frac{2}{1+s_k t_k+\sum_{j\neq k}|t_j|}.
\end{equation}
This bound is at least $(1+s_k t_k)^{-1}$ because $\sum_{j\neq k}|t_j|\le1+s_k t_k$, so it is compatible with the second branch.
Moreover, this upper bound is below $1$: indeed, the condition $\sum_{j\neq k}|t_j|\le1+s_k t_k$ forces $s_k t_k=|t_k|$, since otherwise it would imply $\|\mathbf t\|_1\le1$, contradicting $\ket{\psi}\notin\stab_1$; hence the denominator is $1+\|\mathbf t\|_1>2$.
Taking also $0\le w\le1$ into account, the full feasible interval is
\begin{equation}
0\le w\le \frac{2}{1+s_k t_k+\sum_{j\neq k}|t_j|}.
\end{equation}
The objective in \eqref{eq:obj_w} is increasing in $w$, so the optimum is attained at the right endpoint.
Hence
\begin{equation}
f_\psi(\ket{S})=\frac{1+s_k t_k}{1+s_k t_k+\sum_{j\neq k}|t_j|},
\end{equation}
which proves \eqref{eq:fpsi_piecewise}.

Choose $k^*\in\{x,y,z\}$ such that $|t_{k^*}|=\|\mathbf t\|_\infty$, and take the stabilizer state $\ket{S}$ whose Bloch vector has nonzero coordinate $k^*$ with sign $\mathrm{sgn}(t_{k^*})$.
Then $s_{k^*}t_{k^*}=\|\mathbf t\|_\infty$ and $\sum_{j\neq k^*}|t_j|=\|\mathbf t\|_1-\|\mathbf t\|_\infty$, so \eqref{eq:fpsi_piecewise} gives
\begin{equation}
f_\psi(\ket{S})=\frac{1+\|\mathbf t\|_\infty}{1+\|\mathbf t\|_1}.
\end{equation}
Conversely, for any coordinate $k$, the opposite sign gives zero since $\sum_{j\neq k}|t_j|>1-|t_k|$ for $\ket{\psi}\notin\stab_1$.
For the aligned sign, \eqref{eq:fpsi_piecewise} gives
\begin{equation}
f_\psi(\ket{S})=\frac{1+|t_k|}{1+\|\mathbf t\|_1}\le \frac{1+\|\mathbf t\|_\infty}{1+\|\mathbf t\|_1}.
\end{equation}
Hence \eqref{eq:fpsi_max_closed} follows.
\end{proof}

Combining \eqref{eq:succ_le_max_f} and Lemma~\ref{lem:fpsi_closed_form} yields
\begin{equation}
\Tr(\bra{0}\otimes\mbb{I}\,\mc{P}_A\otimes\mc{P}_B(\rho)\,\ket{0}\otimes\mbb{I})\le \frac{1+\|\mathbf t\|_\infty}{1+\|\mathbf t\|_1},
\end{equation}
which proves Proposition~\ref{prop:two_way_optimality}.
\end{proof}

\subsection{The golden ENM state: extremality, magic localization and distillation}\label{app:golden-details}

\paragraph{Exact RoM and stabilizer-noise robustness.}
In the Pauli basis, we have
\begin{equation}
\rho_{\mathrm{golden}}=\frac{1}{4}\Big[\mbb{I}\otimes\mbb{I}+\frac{1}{\sqrt{5}}\bigl(X\otimes\mbb{I}+Y\otimes\mbb{I}+Z\otimes X+Z\otimes Y+Z\otimes Z\bigr)\Big].
\label{eq:golden-pauli}
\end{equation}
Let
\begin{equation}
A_{\mathrm{golden}}\coloneqq X\otimes\mbb{I}+Y\otimes\mbb{I}+Z\otimes X+Z\otimes Y+Z\otimes Z .
\end{equation}
The five Pauli operators appearing in $A_{\mathrm{golden}}$ pairwise anticommute.
For any pure two-qubit stabilizer state, at most one of these five Pauli operators can belong, up to sign, to its stabilizer group; all the remaining expectations are zero.
Hence $\abs{\Tr(\phi A_{\mathrm{golden}})}\le1$ for every pure stabilizer state $\phi$, and by convexity the same holds for every $\phi\in\stab_2$.
Therefore $A_{\mathrm{golden}}$ is feasible for the dual RoM problem \eqref{eq:RoM_dual}, and
\begin{equation}
\mc{R}(\rho_{\mathrm{golden}})\ge\Tr(\rho_{\mathrm{golden}}A_{\mathrm{golden}})=\sqrt{5}.
\end{equation}
The opposite inequality follows from the known two-qubit upper bound $\mc{R}(\rho)\le\sqrt{5}$ for all two-qubit states~\cite{Howard2017application}.
Thus $\mc{R}(\rho_{\mathrm{golden}})=\sqrt{5}$.

Equivalently, the minimum amount of stabilizer noise needed to wash out the magic of $\rho_{\mathrm{golden}}$ is
\begin{equation}
\begin{aligned}
s_{\mathrm{golden}}&\coloneqq\min\Big\{s\,\Big|\,\frac{\rho_{\mathrm{golden}}+s\omega}{1+s}\in\stab,\ \omega\in\stab\Big\}\\
&=\frac{\mc{R}(\rho_{\mathrm{golden}})-1}{2}=\frac{\sqrt{5}-1}{2}=\varphi^{-1},
\end{aligned}
\label{eq:golden-noise}
\end{equation}
where $\varphi=(1+\sqrt{5})/2$ is the golden ratio.

\paragraph{Magic localization and distillation.}

The golden state also has an operational interpretation.
Measuring the first qubit of $\rho_\mathrm{golden}$ in the $Z$ basis gives outcomes $0$ and $1$ with equal probability.
Conditioned on the outcome, the state's magic is localized onto the second qubit, whose state is
\begin{equation}
\sigma(\tfrac{1}{\sqrt{5}},\tfrac{1}{\sqrt{5}},\tfrac{1}{\sqrt{5}})\quad\text{or}\quad\sigma(-\tfrac{1}{\sqrt{5}},-\tfrac{1}{\sqrt{5}},-\tfrac{1}{\sqrt{5}}),
\end{equation}
respectively.
If the outcome is ignored, the second qubit remains the stabilizer marginal $(\rho_\mathrm{golden})_B=\mbb{I}/2$.
Similarly, measuring the second qubit in the $Z$ basis localizes the state's magic onto the first qubit, whose state is
\begin{equation}
\sigma(\tfrac{1}{\sqrt{5}},\tfrac{1}{\sqrt{5}},\tfrac{1}{\sqrt{5}})\quad\text{or}\quad\sigma(\tfrac{1}{\sqrt{5}},\tfrac{1}{\sqrt{5}},-\tfrac{1}{\sqrt{5}}),
\end{equation}
while the unconditional state on the first qubit is the stabilizer state $(\rho_{\mathrm{golden}})_A=\sigma(\tfrac{1}{\sqrt{5}},\tfrac{1}{\sqrt{5}},0)$.
Thus a local stabilizer measurement plus a classical outcome unlocks a nonstabilizer single-qubit state on the other side.

All these conditional states are Clifford-equivalent noisy face states.
Indeed,
\begin{equation}
\sigma(\tfrac{1}{\sqrt{5}},\tfrac{1}{\sqrt{5}},\tfrac{1}{\sqrt{5}})=\sigma(\tfrac{r_{\mathrm g}}{\sqrt{3}},\tfrac{r_{\mathrm g}}{\sqrt{3}},\tfrac{r_{\mathrm g}}{\sqrt{3}}),\quad r_{\mathrm g}=\sqrt{\frac{3}{5}}.
\end{equation}
$r_{\mathrm g}=\sqrt{3/5}$ is above the Bravyi--Kitaev five-qubit distillation threshold $r_{\mathrm{th}}=\sqrt{3/7}$ for face-type magic states~\cite{Bravyi2005universal,rall2017signedquantumweightenumerators}.
Consequently, after the measurement outcome is communicated and a Clifford correction is applied to align the signs, sufficiently many copies of $\rho_{\mathrm{golden}}$ can be converted by local stabilizer operations and classical communication into arbitrarily high-fidelity one-qubit face magic states.
These distilled states can then be consumed by standard magic-state injection to implement a non-Clifford gate.
Without the classical measurement outcomes, however, each party only sees a stabilizer marginal and has no local access to the hidden magic.

\section{Separable ENM channels: Choi characterization and reversible embedding}\label{app:BipartiteChannels}

\subsection{Proof of Fact~\ref{fact:enm-channel-choi}}\label{app:enm-channel-choi}

By \cite[Lemma~4.2]{Seddon2019quantifying}, a CPTP map $\mc{G}$ is CSP if and only if $\Lambda_{\mc{G}}\in\stab$.
The reduced maps satisfy
\begin{align}
\Lambda_{\mc{E}_A}&=\Tr_{B'B}(\Lambda_{\mc{E}}),\\
\Lambda_{\mc{E}_B}&=\Tr_{A'A}(\Lambda_{\mc{E}}).
\end{align}
Therefore $\mc{E}_A,\mc{E}_B\in\mathrm{CSP}$ if and only if the two marginals of $\Lambda_{\mc{E}}$ on the partition $A'A:B'B$ are stabilizer states, which is precisely the locally magic-free condition for $\Lambda_{\mc{E}}$.
The same Choi criterion gives $\mc{E}\notin\mathrm{CSP}$ if and only if $\Lambda_{\mc{E}}\notin\stab$, so $\mc{E}$ is ENM if and only if $\Lambda_{\mc{E}}$ is ENM.

By \cite[Proposition~6.22]{watrous2018theory}, the CPTP map $\mc{E}$ is separable if and only if $\Lambda_{\mc{E}}$ is separable with respect to the same partition.
Combining these equivalences proves the claim.

\subsection{Proof of Theorem~\ref{thm:separableENMchannel}}\label{app:proof-separable-enm-channel}

For a unitary $U$, write $[U](\rho)=U\rho U^\dagger$.
Let $\mc{E}\notin\mathrm{CSP}$ be an $n$-qubit channel, and suppose probabilities $\{p_i\}_{i=0}^{M-1}$ and Clifford unitaries $\{W_i\}_{i=0}^{M-1}$ satisfy $\sum_{i=0}^{M-1}p_i[W_i]\circ\mc{E}\in\mathrm{CSP}$.
Set $m=\lceil\log_2M\rceil$ and $X(i)\coloneqq\bigotimes_{j=1}^mX^{i_j}$, where $(i_1,\cdots,i_m)$ is the $m$-bit expansion of $i$.
The channel under consideration is
\begin{equation}\mc{F}\coloneqq \sum_{i=0}^{M-1}p_i[X(i)]\otimes\big([W_i]\circ\mc{E}\big).\end{equation}
Each summand in this expression is a product channel and the coefficients $\{p_i\}$ form a probability distribution, so $\mc{F}$ is LOSR and hence separable.
Its reduced channels on the label register $R$ and system register $S$ are
\begin{align}
\mc{F}_R&=\sum_{i=0}^{M-1}p_i[X(i)],\\
\mc{F}_S&=\sum_{i=0}^{M-1}p_i[W_i]\circ\mc{E}.
\end{align}
The first is a convex mixture of Pauli channels and hence CSP, while the second is CSP by hypothesis.
Thus $\mc{F}$ is locally magic-free.

We first construct the stabilizer superchannel $\Theta_1$.
When $\Theta_1(\mc{E})$ is applied to an input state $\rho_{RS}$ of an $m$-qubit label register $R$ and an $n$-qubit system $S$, the protocol is:
\begin{enumerate}
\item Classically sample $i\in\{0,\cdots,M-1\}$ according to $\{p_i\}$.
\item Apply $X(i)$ to $R$.
\item Apply $\mc{E}$ to $S$, followed by the Clifford unitary $W_i$.
\end{enumerate}
This gives $\Theta_1(\mc{E})=\mc{F}$.
The protocol uses only classical randomness and Clifford unitaries, so $\Theta_1$ is a stabilizer superchannel.

To apply the decoding superchannel $\Theta_2(\mc{F})$ to an input state $\rho_S$, proceed as follows:
\begin{enumerate}
\item Prepare $R$ in $\ket{0}^{\otimes m}$.
\item Apply $\mc{F}$ to $\ketbra{0}{0}^{\otimes m}_R\otimes\rho_S$.
\item Measure $R$ in the computational basis and, conditioned on outcome $i$, apply $W_i^\dagger$ to $S$.
\item Discard $R$.
\end{enumerate}
This gives $\Theta_2(\mc{F})=\mc{E}$.
Stabilizer-state preparation, computational-basis measurement, classical control and Clifford unitaries are all stabilizer operations, so $\Theta_2$ is a stabilizer superchannel.

If $\mc{F}$ were CSP, then the stabilizer superchannel $\Theta_2$ would imply $\mc{E}=\Theta_2(\mc{F})\in\mathrm{CSP}$, contradicting the hypothesis.
Hence $\mc{F}$ is globally non-CSP and therefore ENM.

\section{Activation key protocol}\label{app:ActivationKey}

\subsection{Activation key protocol as a separable ENM channel}\label{app:activation key-single-gate-decoder}

Let $K$ and $S$ denote the key and system registers, respectively, and set $T^\perp\coloneqq ZT$.
The $T$ gate activation key protocol is described by
\begin{equation}\mc{F}_T\coloneqq\frac{1}{2}[\mbb{I}_K\otimes T_S]+\frac{1}{2}[X_K\otimes T^\perp_S].\end{equation}
Each branch is a product channel, and the two branches are coordinated only by shared classical randomness.
Hence $\mc{F}_T$ is LOSR and therefore separable.
$\mc{F}_T$ is locally magic-free.

To show that the joint channel nevertheless contains recoverable magic, let $\mc{G}$ be a channel acting on $K\otimes S$, and let $\rho$ be a state on $S$.
Define the superchannel
\begin{align}
\Theta_{\mathrm{dec}}(\mc{G})(\rho)\coloneqq\sum_{b=0,1}Z_S^b\operatorname{Tr}_K\Big[\big(\ket{b}\!\bra{b}_K\otimes\mbb{I}_S\big)\mc{G}\big(\ket{0}\!\bra{0}_K\otimes\rho\big)\Big]Z_S^b.
\end{align}
It prepares $K$ in $\ket{0}$, measures the output key in the computational basis, applies $Z^b$ conditioned on outcome $b$, and discards $K$.
All these pre- and post-processing steps are stabilizer operations, so $\Theta_{\mathrm{dec}}$ is a stabilizer superchannel.
Direct substitution gives
\begin{equation}
\Theta_{\mathrm{dec}}(\mc{F}_T)=[T].\label{eq:activation-decoding-T}
\end{equation}
Operationally, withholding the key leaves the user with complete dephasing, whereas communicating $b$ and applying $Z^b$ activates the intended $T$ gate.

\subsection{Hiding common non-Clifford gates}\label{app:activation key-gate-hiding}

We now verify the Clifford-randomization condition of Theorem~\ref{thm:separableENMchannel} for several standard magic gates beyond the single-$T$ construction above.
We will repeatedly use the following criterion for diagonal channels~\cite{Seddon2019quantifying}.

\begin{lem}
\label{lem:diagonal-channel-phase-state}
Let $\mc D$ be an $n$-qubit channel whose Kraus operators are diagonal in the computational basis.
Then $\mc D\in\mathrm{CSP}$ if and only if
\begin{equation}
\mc D(\ketbra{+}{+}^{\otimes n})\in\stab_n .
\end{equation}
\end{lem}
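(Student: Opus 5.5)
We prove the two directions separately, in both cases using the Choi characterization of CSP maps~\cite[Lemma~4.2]{Seddon2019quantifying} and the Clifford invariance of $\stab$.

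\textbf{Necessity.} This direction is immediate. If $\mc D\in\mathrm{CSP}$, then $\mc D$ maps stabilizer states to stabilizer states, and $\ketbra{+}{+}^{\otimes n}\in\stab_n$. Hence $\mc D(\ketbra{+}{+}^{\otimes n})\in\stab_n$.

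\textbf{Sufficiency.} Here we relate the Choi state to the phase state by a Clifford unitary. Write $\ket{\Phi^+}=2^{-n/2}\sum_x\ket{x}\ket{x}$. Applying transversal CNOTs from the first register to the second gives
\begin{equation}
\prod_{j}\mathrm{CNOT}_{j\to n+j}\bigl(\ket{+}^{\otimes n}\otimes\ket{0}^{\otimes n}\bigr)=\ket{\Phi^+}.
\end{equation}
Since the Kraus operators $K_a=\sum_x k_a(x)\ketbra{x}{x}$ are diagonal, we have $(\mbb{I}\otimes K_a)\ket{\Phi^+}=2^{-n/2}\sum_x k_a(x)\ket{x}\ket{x}$. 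This is the same as $C(K_a\ket{+}^{\otimes n}\otimes\ket{0}^{\otimes n})$, where $C$ denotes the transversal CNOT.

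The key step is that the diagonal Kraus operators commute through the copying CNOT. More precisely, $C(K_a\otimes\mbb{I})C^\dagger=K_a\otimes\mbb{I}$ restricted to the relevant branch, and we can verify this directly on the basis states $\ket{x}\ket{0}$. Summing over $a$ then yields
\begin{equation}
\Lambda_{\mc D}=C\bigl(\mc D(\ketbra{+}{+}^{\otimes n})\otimes\ketbra{0}{0}^{\otimes n}\bigr)C^\dagger .
\end{equation}
The correlation between the two registers is copied classically, so the phase information carried by the Kraus operators sits entirely in the first register before copying.

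\textbf{Conclusion.} If $\mc D(\ketbra{+}{+}^{\otimes n})\in\stab_n$, then its tensor product with $\ket{0}^{\otimes n}$ lies in $\stab_{2n}$. Conjugation by the Clifford $C$ preserves $\stab_{2n}$, so $\Lambda_{\mc D}\in\stab_{2n}$, and the Choi criterion gives $\mc D\in\mathrm{CSP}$.

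The only delicate point is the identity between $\Lambda_{\mc D}$ and the CNOT image of the output phase state. We will check it on matrix elements $\ketbra{x}{y}$, where both sides equal $2^{-n}\sum_a k_a(x)\overline{k_a(y)}\ketbra{xx}{yy}$.
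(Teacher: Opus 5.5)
Your proof is correct and is essentially the paper's argument: the paper also writes $\Lambda_{\mc D}=V\,\mc D(\ketbra{+}{+}^{\otimes n})\,V^\dagger$ with the stabilizer isometry $V\ket{x}=\ket{x}\ket{x}$ (append $\ket{0}^{\otimes n}$, then apply transversal CNOTs), which is exactly your identity, and then invokes the Choi characterization of CSP maps. The only difference is that you get necessity directly from the definition of CSP rather than by inverting $V$, which is equally valid; also, your relation $C(K_a\otimes\mbb{I})C^\dagger=K_a\otimes\mbb{I}$ holds exactly, with no restriction to a branch, and is not needed once you check the matrix elements directly.
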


\begin{proof}
Write $K_\alpha=\sum_x k_\alpha(x)\ketbra{x}{x}$ for the diagonal Kraus operators of $\mc D$, and set $c_{xy}\coloneqq\sum_\alpha k_\alpha(x)\overline{k_\alpha(y)}$.
Then
\begin{align}
\mc D(\ketbra{+}{+}^{\otimes n})&=2^{-n}\sum_{x,y}c_{xy}\ketbra{x}{y},\\
\Lambda_{\mc D}&=2^{-n}\sum_{x,y}c_{xy}\ketbra{xx}{yy}=V\,\mc D(\ketbra{+}{+}^{\otimes n})\,V^\dagger,
\end{align}
where $V\ket{x}=\ket{x}\ket{x}$.
The isometry $V$ is a stabilizer isometry: it is obtained by appending $\ket{0}^{\otimes n}$ and applying CNOT gates.
Hence $V\rho V^\dagger$ is a stabilizer state if and only if $\rho$ is.
By \cite[Lemma~4.2]{Seddon2019quantifying}, a channel is CSP if and only if its Choi state is stabilizer, which proves the claim.
\end{proof}

\paragraph{$T\otimes T$.}

Define the reduced two-qubit channel
\begin{equation}
\mc D_{T,T}\coloneqq\frac{1}{2}[T^{\otimes 2}]+\frac{1}{2}[(T^\perp)^{\otimes 2}].
\end{equation}
This is a diagonal channel.
By Lemma~\ref{lem:diagonal-channel-phase-state}, it suffices to evaluate its phase state:
\begin{equation}
\mc D_{T,T}\!\left(\ketbra{+}{+}^{\otimes 2}\right)
=
\frac{1}{2}\ketbra{T}{T}^{\otimes 2}
+
\frac{1}{2}\ketbra{T^\perp}{T^\perp}^{\otimes 2}
=
\frac{1}{4}\begin{pmatrix}
1&0&0&-i\\
0&1&1&0\\
0&1&1&0\\
i&0&0&1
\end{pmatrix},
\end{equation}
which is a stabilizer state, since
\begin{equation}
\frac{1}{4}\begin{pmatrix}
1&0&0&-i\\
0&1&1&0\\
0&1&1&0\\
i&0&0&1
\end{pmatrix}
=\frac{1}{2}\ketbra{\Phi_i}{\Phi_i}+\frac{1}{2}\ketbra{\Psi^+}{\Psi^+},
\label{eq:multipartite-T-two-body}
\end{equation}
where $\ket{\Phi_i}\coloneqq(\ket{00}+i\ket{11})/\sqrt{2}$ and $\ket{\Psi^+}\coloneqq(\ket{01}+\ket{10})/\sqrt{2}$ are stabilizer states.
Hence $\mc D_{T,T}\in\mathrm{CSP}$.
The corresponding key-augmented separable ENM channel is
\begin{equation}
\mc{F}_{T,T}\coloneqq\frac{1}{2}[\mbb{I}\otimes T^{\otimes 2}]+\frac{1}{2}[X\otimes(T^\perp)^{\otimes 2}].
\end{equation}

\paragraph{$\mathrm{CC}Z$ gates.}

For the three-qubit $\mathrm{CC}Z$ gate, one local $Z$ correction on any participating qubit is enough.
Let
\begin{equation}
\mc D_{\mathrm{CC}Z}\coloneqq\frac{1}{2}[\mathrm{CC}Z]+\frac{1}{2}[(Z\otimes\mbb{I}\otimes\mbb{I})\mathrm{CC}Z]=[\mathrm{CC}Z]\circ\frac{1}{2}\big([\mbb{I}^{\otimes 3}]+[Z\otimes\mbb{I}\otimes\mbb{I}]\big).
\end{equation}
This is diagonal, and its phase state is
\begin{align}
\mc D_{\mathrm{CC}Z}\!\left(\ketbra{+}{+}^{\otimes 3}\right)&=[\mathrm{CC}Z]\left(\frac{\mbb{I}}{2}\otimes\ketbra{+}{+}^{\otimes 2}\right)\\
&=\frac{1}{2}\ketbra{0}{0}\otimes\ketbra{+}{+}^{\otimes 2}+\frac{1}{2}\ketbra{1}{1}\otimes[\mathrm{C}Z]\left(\ketbra{+}{+}^{\otimes 2}\right).
\end{align}
Both branches are stabilizer states; hence the phase state is stabilizer and $\mc D_{\mathrm{CC}Z}\in\mathrm{CSP}$ by Lemma~\ref{lem:diagonal-channel-phase-state}.
Therefore Theorem~\ref{thm:separableENMchannel} yields
\begin{equation}
\mc{F}_{\mathrm{CC}Z}\coloneqq\frac{1}{2}[\mbb{I}\otimes\mathrm{CC}Z]+\frac{1}{2}[X\otimes((Z\otimes\mbb{I}\otimes\mbb{I})\mathrm{CC}Z)].
\end{equation}

\paragraph{$\mathrm{C}S$ gates.}

$\mathrm{C}S=\diag(1,1,1,i)$.
Let
\begin{equation}
\mc D_{\mathrm{C}S}\coloneqq\frac{1}{2}[\mathrm{C}S]+\frac{1}{2}[(Z\otimes\mbb{I})\mathrm{C}S]=[\mathrm{C}S]\circ\frac{1}{2}\big([\mbb{I}^{\otimes 2}]+[Z\otimes\mbb{I}]\big).
\end{equation}
Then
\begin{equation}
\mc D_{\mathrm{C}S}\!\left(\ketbra{+}{+}^{\otimes 2}\right)=\frac{1}{2}\ketbra{0}{0}\otimes\ketbra{+}{+}+\frac{1}{2}\ketbra{1}{1}\otimes\ketbra{+_y}{+_y},
\end{equation}
where $\ket{+_y}\coloneqq S\ket{+}=(\ket{0}+i\ket{1})/\sqrt{2}$.
This is a stabilizer state, so $\mc D_{\mathrm{C}S}\in\mathrm{CSP}$.
The corresponding separable ENM channel is
\begin{equation}
\mc F_{\mathrm{C}S}\coloneqq\frac{1}{2}[\mbb{I}\otimes\mathrm{C}S]+\frac{1}{2}[X\otimes((Z\otimes\mbb{I})\mathrm{C}S)].
\end{equation}

\paragraph{Controlled-phase rotations.}

Consider an arbitrary controlled-phase rotation $\mathrm{C}R(\theta)\coloneqq\diag(1,1,1,e^{i\pi\theta})$.
Such gates appear in the standard circuit decomposition of the quantum Fourier transform and hence in Shor's algorithm~\cite{Shor1994Algorithms}.

In this case we use the Clifford correction $\mathrm{C}Z$ and define
\begin{equation}
\mc D_{\mathrm{C}R(\theta)}\coloneqq\frac{1}{2}[\mathrm{C}R(\theta)]+\frac{1}{2}[\mathrm{C}Z\cdot\mathrm{C}R(\theta)]=[\mathrm{C}R(\theta)]\circ\frac{1}{2}\big([\mbb{I}^{\otimes 2}]+[\mathrm{C}Z]\big).
\end{equation}
The phase state is independent of $\theta$:
\begin{align}
\mc D_{\mathrm{C}R(\theta)}\!\left(\ketbra{+}{+}^{\otimes 2}\right)&=[\mathrm{C}R(\theta)]\left(\frac{1}{2}\ketbra{++}{++}+\frac{1}{2}[\mathrm{C}Z](\ketbra{++}{++})\right)\\
&=\frac{1}{2}\ketbra{++}{++}+\frac{1}{2}[\mathrm{C}Z](\ketbra{++}{++}),
\end{align}
because the averaged state has no coherences between $\ket{11}$ and the other computational-basis states.
It is a convex combination of stabilizer states, so $\mc D_{\mathrm{C}R(\theta)}\in\mathrm{CSP}$ by Lemma~\ref{lem:diagonal-channel-phase-state}.
Whenever $[\mathrm{C}R(\theta)]\notin\mathrm{CSP}$, Theorem~\ref{thm:separableENMchannel} therefore gives
\begin{equation}
\mc F_{\mathrm{C}R(\theta)}\coloneqq\frac{1}{2}[\mbb{I}\otimes\mathrm{C}R(\theta)]+\frac{1}{2}[X\otimes\mathrm{C}Z\cdot\mathrm{C}R(\theta)]
\end{equation}
as a separable ENM channel for $\mathrm{C}R(\theta)$.

\paragraph{Pauli rotation gates.}

Let $P$ be an $n$-qubit Pauli string and
\begin{equation}
R_P(\theta)\coloneqq e^{-i\theta P/2}=\cos(\theta/2)\mbb{I}-i\sin(\theta/2)P.
\end{equation}
Pauli rotations include common parametrized gates used in variational circuits, such as QAOA layers~\cite{Cerezo2021variational,Bharti2022NISQ}.
Averaging over the Pauli correction $P$ gives
\begin{align}
\mc D_{R_P(\theta)}&\coloneqq\frac{1}{2}[R_P(\theta)]+\frac{1}{2}[P\cdot R_P(\theta)]\\
&=\frac{1}{2}\big([\mbb{I}]+[P]\big)\circ[R_P(\theta)]=\frac{1}{2}\big([\mbb{I}]+[P]\big),
\end{align}
where the last equality holds because $R_P(\theta)$ acts trivially after dephasing in the eigenspaces of $P$.
The right-hand side is a Pauli dephasing channel and hence belongs to $\mathrm{CSP}$.
Whenever $[R_P(\theta)]\notin\mathrm{CSP}$, Theorem~\ref{thm:separableENMchannel} gives the separable ENM channel
\begin{equation}
\mc F_{R_P(\theta)}\coloneqq\frac{1}{2}[\mbb{I}\otimes R_P(\theta)]+\frac{1}{2}[X\otimes P\cdot R_P(\theta)].
\end{equation}

\subsection{Correlated gate masking in layered circuits}

The next proposition formalizes an elementary closure property of CSP channels.
In a layered circuit, suppose that all but a selected set of local gates are fixed CSP channels, while the selected gates are correlated through a common classical random variable.
Importantly, these selected gates may appear in different layers and may even be causally connected through the circuit.
If the averaged joint channel on these selected locations is CSP, then the entire circuit is CSP.
Thus, to prove that a randomized circuit is magic-free, it suffices to check the correlated block of randomized gates; the surrounding CSP gates cannot create magic.

\begin{figure}[t]
\centering
\includegraphics[width=0.67\textwidth]{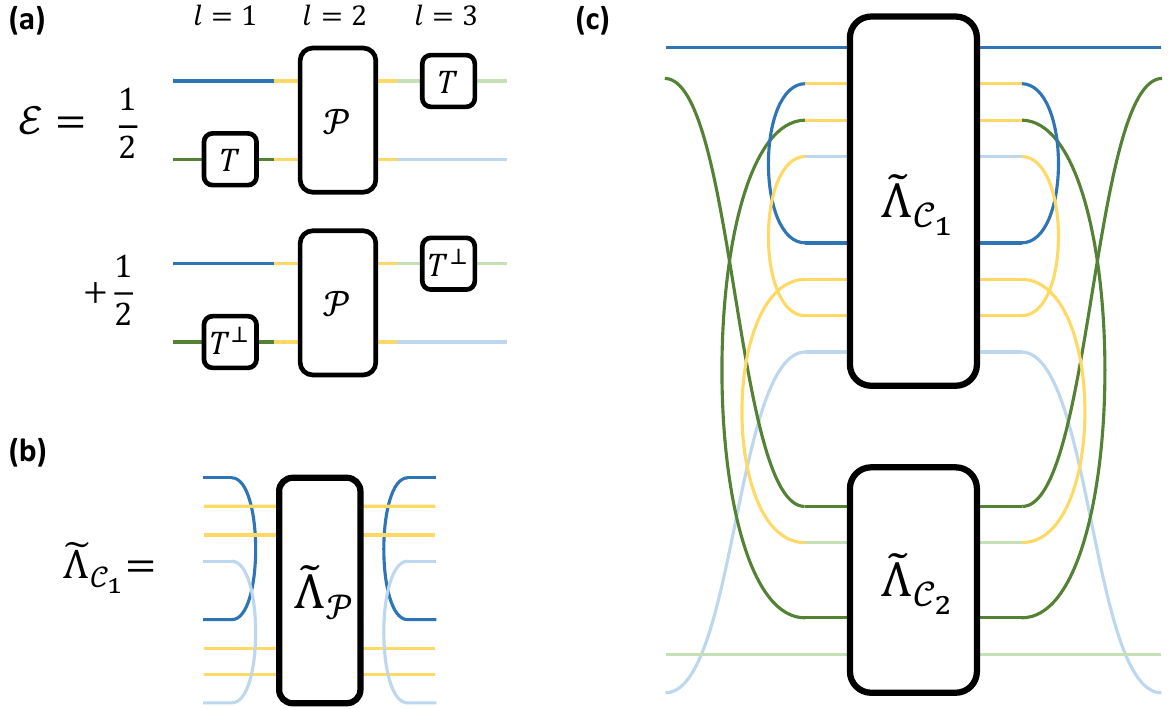}
\caption{Graphical proof of Proposition~\ref{prop:Choi_stab_thus_no_capacity}, for the case $\mc{E}=\frac{1}{2}([T]\otimes\mc{I})\circ\mc{P}\circ(\mc{I}\otimes[T])+\frac{1}{2}([T^\perp]\otimes\mc{I})\circ\mc{P}\circ(\mc{I}\otimes[T^\perp])$, where $\mc{P}\in\mathrm{CSP}$.
Here $N=2$, $L=3$, $K=2$; $\{p_i\}_{i=1}^N=\{\frac{1}{2},\frac{1}{2}\}$; $\mc{E}^{(1)}_{1,1}=\mc{E}^{(2)}_{1,1}=\mc{I}$, $\mc{E}^{(1)}_{1,2}=[T]$, $\mc{E}^{(1)}_{2,1}=\mc{E}^{(2)}_{2,1}=\mc{P}$, $\mc{E}^{(1)}_{3,1}=[T]$, $\mc{E}^{(1)}_{3,2}=\mc{E}^{(2)}_{3,2}=\mc{I}$, $\mc{E}^{(2)}_{1,2}=[T^\perp]$, $\mc{E}^{(2)}_{3,1}=[T^\perp]$.
For a CPTP map $\mc{F}$, we let $\widetilde{\Lambda}_{\mc{F}}
\coloneqq
(\mc{I}\otimes\mc{F})
\bigl(\ket{\widetilde{\Phi}^+}\!\bra{\widetilde{\Phi}^+}\bigr)$ where $\ket{\widetilde{\Phi}^+}=\sum_{i}\ket{ii}$ be the unnormalized Choi matrix of $\mc{F}$.
We have $\Lambda_\mc{F}=\widetilde{\Lambda}_{\mc{F}}/\Tr(\widetilde{\Lambda}_{\mc{F}})$.
(a) shows the circuit structure of $\mc{E}$.
(b) is the tensor network representation of the unnormalized Choi matrix $\widetilde{\Lambda}_{\mc{C}_1}$ of $\mc{C}_1=\mc{I}\otimes\mc{P}\otimes\mc{I}$.
$\widetilde{\Lambda}_{\mc{C}_2}$ can be drawn similarly, where $\mc{C}_2=\frac{1}{2}[T]\otimes[T]+\frac{1}{2}[T^\perp]\otimes[T^\perp]\in\mathrm{CSP}$.
(c) is the unmormailized Choi matrix $\widetilde{\Lambda}_{\mc{E}}$ of $\mc{E}$, from which we know how to obtian $\Lambda_{\mc{E}}$ by stabilizer operations on $\Lambda_{\mc{C}_1}\otimes\Lambda_{\mc{C}_2}$.}
\label{fig:Choi_stab}
\end{figure}

\begin{prop}\label{prop:Choi_stab_thus_no_capacity}
Let $\{p_i\}_{i=1}^N$ be a probability distribution.
Consider a channel of the form
\begin{equation}
\mc{E}=\sum_{i=1}^Np_i\,\Big(\bigotimes_{j=1}^{m_L}\mc{E}^{(i)}_{L,j}\Big)\circ\cdots\circ\Big(\bigotimes_{j=1}^{m_1}\mc{E}^{(i)}_{1,j}\Big),
\end{equation}
where each layer $l$ is a tensor product of local CPTP maps $\mc{E}^{(i)}_{l,j}$ acting on disjoint subsystems.
For a fixed position $(l,j)$, all $\mc{E}^{(1)}_{l,j},\cdots,\mc{E}^{(N)}_{l,j}$ act on the same subsystem.
Suppose there exists a set $\{(l_k,j_k)\}_{k=1}^K
\subseteq\{(l,j):1\le l\le L,\ 1\le j\le m_l\}$ such that
\begin{enumerate}
\item For all $(l,j)\notin\{(l_k,j_k)\}_{k=1}^K$, we have $\mc{E}^{(1)}_{l,j}=\cdots=\mc{E}^{(N)}_{l,j}=:\mc{E}_{l,j}\in\mathrm{CSP}$.
\item The $K$-partite channel $\sum_{i=1}^N p_i\,\mc{E}_{l_1,j_1}^{(i)} \otimes \cdots \otimes \mc{E}_{l_K,j_K}^{(i)}\in\mathrm{CSP}$.
\end{enumerate}
Then the overall channel $\mc{E}$ is CSP.
\end{prop}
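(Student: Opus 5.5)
Our plan is to work at the level of Choi states, as Figure~\ref{fig:Choi_stab} suggests, and use the criterion of \cite[Lemma~4.2]{Seddon2019quantifying}: a CPTP map is CSP iff its normalized Choi state lies in $\stab$. We define two auxiliary channels. The first, $\mc{C}_1$, is the tensor product of all \emph{fixed} gates $\mc{E}_{l,j}$ with $(l,j)\notin\{(l_k,j_k)\}$. The second, $\mc{C}_2\coloneqq\sum_i p_i\,\mc{E}^{(i)}_{l_1,j_1}\otimes\cdots\otimes\mc{E}^{(i)}_{l_K,j_K}$, is the correlated block, which is CSP by hypothesis~2. Since tensor products of CSP maps are CSP (apply each factor in turn with the others absorbed into the ancilla), $\mc{C}_1\in\mathrm{CSP}$. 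Hence both $\Lambda_{\mc{C}_1}$ and $\Lambda_{\mc{C}_2}$ are stabilizer states, and so is $\Lambda_{\mc{C}_1}\otimes\Lambda_{\mc{C}_2}$.

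The central step is to show that $\Lambda_{\mc{E}}$ is obtained from $\Lambda_{\mc{C}_1}\otimes\Lambda_{\mc{C}_2}$ by a stabilizer operation with postselection. Concretely, for each wire connecting the output of one gate to the input of a gate in a later layer, we contract the corresponding output leg of one Choi matrix with the input (reference) leg of the other. This is done by projecting that pair onto $\ket{\Phi^+}$, a Bell measurement postselected on the $\Phi^+$ outcome. Because the unnormalized Choi matrices compose via this link product, the result is exactly $\widetilde{\Lambda}_{\mc{E}}$, up to normalization. Here $\widetilde{\Lambda}_{\mc{E}}$ is by multilinearity the $p_i$-weighted sum over $i$ of the linked products of the $\widetilde{\Lambda}_{\mc{E}^{(i)}_{l,j}}$, and the averaging over $i$ happens entirely inside $\mc{C}_2$. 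The linking map is fixed and independent of $i$, so linearity lets us pull the convex combination through. We also rearrange the remaining open legs into the order $A'A$ of $\Lambda_{\mc{E}}$ by swaps, which are Clifford. The success probability is nonzero because all maps are trace preserving; it equals $d^{-2w}$ for $w$ internal wires, with dimensions tracked.

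To conclude, we use that postselected Pauli (Bell) measurements map stabilizer states to stabilizer states after renormalization. This holds for pure stabilizer states by the stabilizer formalism, and extends to mixtures by convexity: the normalized output of a mixture is a mixture of the normalized outputs of the branches with nonzero probability. Therefore $\Lambda_{\mc{E}}\in\stab$, and $\mc{E}\in\mathrm{CSP}$.

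The main obstacle is the bookkeeping in the link-product identity. We must check carefully that contracting Choi legs with $\bra{\Phi^+}$ reproduces sequential composition across arbitrary layers, including causal connections where a randomized gate's output feeds through fixed gates into another randomized gate. The legs of $\mc{C}_2$ belonging to different layers must be linked to $\mc{C}_1$ in the right pattern. This contraction is order-independent, because a tensor-network contraction is associative; that fact is exactly what makes the causal structure irrelevant. We would state the link identity $\widetilde{\Lambda}_{\mc{G}\circ\mc{F}}=(\mbb{I}\otimes\bra{\widetilde{\Phi}^+}\otimes\mbb{I})(\widetilde{\Lambda}_{\mc{F}}\otimes\widetilde{\Lambda}_{\mc{G}})(\mbb{I}\otimes\ket{\widetilde{\Phi}^+}\otimes\mbb{I})$ and iterate it over all internal wires.
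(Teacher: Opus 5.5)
Your proposal is correct and matches the paper's proof. Both invoke the Choi criterion of \cite[Lemma~4.2]{Seddon2019quantifying}, split the circuit into the fixed block $\mc{C}_1$ and the correlated block $\mc{C}_2$, and recover $\Lambda_{\mc{E}}$ from the stabilizer state $\Lambda_{\mc{C}_1}\otimes\Lambda_{\mc{C}_2}$ by qubit swaps and Bell measurements postselected on $\ket{\Phi^+}$. Your explicit link-product identity, the linearity argument that pulls the average over $i$ through the contraction, and the check that the postselection succeeds with nonzero probability fill in details the paper leaves to Fig.~\ref{fig:Choi_stab}.
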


\begin{proof}
By \cite[Lemma 4.2]{Seddon2019quantifying}, a CPTP map $\mc F$ is CSP iff its Choi state $\Lambda_{\mc F}$ is a stabilizer state.
Thus it suffices to show that $\Lambda_{\mc E}\in\mathrm{STAB}$.

Denote $\mc{C}_1\coloneqq\bigotimes_{(l,j)\notin\{(l_k,j_k)\}_{k=1}^K}\mc{E}_{l,j}$ and $\mc{C}_2=\sum_{i=1}^N p_i\,\mc{E}_{l_1,j_1}^{(i)} \otimes \cdots \otimes \mc{E}_{l_K,j_K}^{(i)}$.
Since $\Lambda_{\mc{C}_1}$ and $\Lambda_{\mc{C}_2}$ are stabilizer states, and $\Lambda_{\mc{E}}$ can be obtained from $\Lambda_{\mc{C}_1}\otimes\Lambda_{\mc{C}_2}$ by
\begin{enumerate}
\item swapping several qubits,
\item measuring several qubit pairs in the Bell basis and post-selecting in the outcome $\ket{\Phi^+}=\frac{1}{\sqrt{2}}(\ket{00}+\ket{11})$,
\end{enumerate}
we know $\Lambda_{\mc{E}}$ must be a stabilizer state.
See Fig.~\ref{fig:Choi_stab} for an example.

Therefore $\mc{E}\in\mathrm{CSP}$.
\end{proof}

\section{Magic scaling of multipartite separable ENM states}\label{app:multipartite-scaling}

We now consider the many-body magic of the following two families of
separable states, which are ENM for $n\ge2$ and $n\ge3$, respectively:
\begin{equation}
\begin{aligned}
\rho_\psi^{(n)}&\coloneqq \frac{\|\mathbf t\|_1-1}{\|\mathbf t\|_1+1}\ketbra{\phi^\perp}{\phi^\perp}^{\otimes n}+\frac{2}{\|\mathbf t\|_1+1}\ketbra{\psi}{\psi}^{\otimes n},\\
\rho_T^{(n)}&\coloneqq\frac{1}{2}\ketbra{T}{T}^{\otimes n}+\frac{1}{2}\ketbra{T^\perp}{T^\perp}^{\otimes n}.
\end{aligned}
\end{equation}
For an $n$-qubit state $\rho$ and $\alpha>0$, $\alpha\neq1$, For an $n$-qubit state $\rho$, we use the following Pauli-moment
extension of the stabilizer Rényi entropy (SRE)~\cite{Lorenzo2022stabilizer} to mixed states:
\begin{equation}
M_\alpha(\rho)\coloneqq \frac{1}{1-\alpha}\log_2\frac{1}{2^n}\sum\nolimits_{P\in\mc{P}_n^+}\abs{\Tr(P\rho)}^{2\alpha},
\end{equation}
where $\mc{P}_n^+$ is the set of $n$-qubit Pauli operators with phase $+1$.
For $1/2\le\alpha<1$, it satisfies
\begin{equation}\label{eq:mixed-SRE-RoM-bound}
M_\alpha(\rho)\le2\log_2\mc{R}(\rho).
\end{equation}
Because SRE does not require solving an optimization problem, it is generally more tractable~\cite{PhysRevB.107.035148,Lami2023Nonstabilizerness,Tarabunga2024MPS} and admits analytical expressions in the cases below.

The calculations below show that, for fixed $1/2\le\alpha<1$, both families obey the common asymptotic scaling law
\begin{equation}
M_\alpha(\rho^{(n)})=nM_\alpha(\ketbra{\chi}{\chi})+C_{\alpha,\chi}+o(1),
\end{equation}
where $(\rho^{(n)},\ket{\chi})$ denotes either $(\rho_\psi^{(n)},\ket{\psi})$ or $(\rho_T^{(n)},\ket{T})$, and $C_{\alpha,\chi}$ is independent of $n$.
Thus each mixed $n$-qubit state has the same extensive coefficient as the corresponding underlying one-qubit pure state, with mixing contributing only a constant.

\subsection{Scaling of $\rho_\psi^{(n)}$}

Let $\ket{\psi}$ be a pure one-qubit nonstabilizer state with Bloch vector $\mathbf t=(t_x,t_y,t_z)$ and RoM $\|\mathbf t\|_1$.
Choose $k\in\{x,y,z\}$ such that $|t_k|=\|\mathbf t\|_\infty$, and let $\ket{\phi}$ be the stabilizer state whose Bloch vector is $\operatorname{sgn}(t_k)\hat{\mathbf k}$.
Write
\begin{equation}
p\coloneqq \frac{2}{\|\mathbf t\|_1+1},\quad q\coloneqq \frac{\|\mathbf t\|_1-1}{\|\mathbf t\|_1+1}=1-p,
\end{equation}
so that
\begin{equation}
\rho_\psi^{(n)}=q\ketbra{\phi^\perp}{\phi^\perp}^{\otimes n}+p\ketbra{\psi}{\psi}^{\otimes n}.
\end{equation}
For $P\in\mc{P}_n^+$, let $n_a(P)$ denote the number of sites on which $P$ acts as $a\in\{X,Y,Z\}$.
Tensor-product factorization gives $\Tr(P\ketbra{\psi}{\psi}^{\otimes n})=t_x^{n_x(P)}t_y^{n_y(P)}t_z^{n_z(P)}$.
The Bloch vector of $\ket{\phi^\perp}$ is $-\operatorname{sgn}(t_k)\hat{\mathbf k}$.
Consequently, $\Tr(P\ketbra{\phi^\perp}{\phi^\perp}^{\otimes n})$ vanishes if $P$ contains a Pauli operator along any direction $\ell\neq k$, whereas it equals $[-\operatorname{sgn}(t_k)]^{n_k(P)}$ if every site carries either $I$ or the Pauli operator along $k$.
Combining the two product-state expectations with weights $p$ and $q$, and using $t_k=\operatorname{sgn}(t_k)|t_k|$ in the latter case, gives
\begin{equation}
\Tr(P\rho_\psi^{(n)})=
\begin{cases}
p\,t_x^{n_x(P)}t_y^{n_y(P)}t_z^{n_z(P)},
& n_\ell(P)>0\text{ for some }\ell\neq k,\\[2mm]
\bigl(\operatorname{sgn}(t_k)\bigr)^{n_k(P)}
\left[p|t_k|^{n_k(P)}+q(-1)^{n_k(P)}\right],
& n_\ell(P)=0\text{ for all }\ell\neq k.
\end{cases}
\end{equation}

Fix $1/2\le\alpha<1$ and define
\begin{equation}
A_\alpha\coloneqq 1+\sum_{a=x,y,z}|t_a|^{2\alpha},\quad B_\alpha\coloneqq 1+|t_k|^{2\alpha}.
\end{equation}
We evaluate the contributions from the two cases separately.

In the first case, the $\ket{\phi^\perp}$ term vanishes and each Pauli string contributes $p^{2\alpha}\prod_{a=x,y,z}|t_a|^{2\alpha n_a(P)}$ to the Pauli moment.
If we first sum over all Pauli strings without imposing the condition ``$n_\ell(P)>0\text{ for some }\ell\neq k$'', we have
\begin{equation}
\sum_{P\in\mc{P}_n^+}\prod_{a=x,y,z}|t_a|^{2\alpha n_a(P)}=\left(1+\sum_{a=x,y,z}|t_a|^{2\alpha}\right)^n=A_\alpha^n.
\end{equation}
The complement of the condition ``$n_\ell(P)>0$ for some $\ell\neq k$'' is ``$n_\ell(P)=0$ for every $\ell\neq k$''.
For a string in this complement, every $P_j$ is either $I$ or the Pauli operator along $k$, so
\begin{equation}
\sum_{\substack{P\in\mc{P}_n^+\\n_\ell(P)=0\text{ for all }\ell\neq k}}\prod_{a=x,y,z}|t_a|^{2\alpha n_a(P)}=\left(1+|t_k|^{2\alpha}\right)^n=B_\alpha^n.
\end{equation}
Subtracting this complementary contribution from the unrestricted sum gives
\begin{align}
&p^{2\alpha}\sum_{\substack{P\in\mc{P}_n^+\\n_\ell(P)>0\text{ for some }\ell\neq k}}\prod_{a=x,y,z}|t_a|^{2\alpha n_a(P)}\\
=&p^{2\alpha}\left[\left(1+\sum_{a=x,y,z}|t_a|^{2\alpha}\right)^n-\left(1+|t_k|^{2\alpha}\right)^n\right]\\
=&p^{2\alpha}\left(A_\alpha^n-B_\alpha^n\right).
\end{align}

In the second case, every site carries either $I$ or the Pauli operator along $k$.
For fixed $m\in\{0,\ldots,n\}$, a string with $n_k(P)=m$ is uniquely determined by choosing the $m$ sites that carry the Pauli operator along $k$; all remaining sites carry $I$.
There are therefore exactly $\binom{n}{m}$ such strings:
\begin{equation}
\sum_{\substack{P\in\mc{P}_n^+\\
n_k(P)=m\\
n_\ell(P)=0\text{ for all }\ell\neq k}}1
=\binom{n}{m}.
\end{equation}
Since $|t_k|=\|\mathbf t\|_\infty>0$, we have $\operatorname{sgn}(t_k)\in\{\pm1\}$.
For all strings with the same $m$, the sign prefactor in the expectation value has unit modulus and hence disappears after taking the absolute value:
\begin{align}
\abs{\Tr(P\rho_\psi^{(n)})}^{2\alpha}
&=\abs{\operatorname{sgn}(t_k)}^{2\alpha m}
\left|p|t_k|^m+q(-1)^m\right|^{2\alpha}\\
&=\left|p|t_k|^m+q(-1)^m\right|^{2\alpha}.
\end{align}
Multiplying this common value by $\binom{n}{m}$ and summing over $m$ gives the contribution of the second case:
\begin{equation}
\sum_{m=0}^{n}\binom{n}{m}\left|p|t_k|^m+q(-1)^m\right|^{2\alpha}.
\end{equation}
The endpoint $m=0$ includes the identity string.
The two cases are disjoint and together exhaust $\mc{P}_n^+$, so adding their contributions yields the exact Pauli moment
\begin{equation}
\sum_{P\in\mc{P}_n^+}\abs{\Tr(P\rho_\psi^{(n)})}^{2\alpha}=p^{2\alpha}\left(A_\alpha^n-B_\alpha^n\right)+\sum_{m=0}^{n}\binom{n}{m}\left|p\,|t_k|^{m}+q(-1)^m\right|^{2\alpha}.\label{eq:SRE-rhopsi-exact}
\end{equation}
We now identify the leading exponential term.
Because $\ket{\psi}$ is pure and nonstabilizer, $\sum_a|t_a|^2=1$ and at least two components are nonzero.
Since $\alpha<1$, we have
\begin{equation}
\sum_{a=x,y,z}|t_a|^{2\alpha}>\sum_{a=x,y,z}|t_a|^2=1.
\end{equation}
Moreover, at least one component away from $k$ is nonzero, so $A_\alpha>2$ and $A_\alpha>B_\alpha$.

Since $2\alpha\ge1$, the inequality\footnote{Set $r=2\alpha\ge1$. The triangle inequality gives $|u+v|^r\le(|u|+|v|)^r$, while convexity of $x\mapsto x^r$ on $[0,\infty)$ gives $(a+b)^r\le2^{r-1}(a^r+b^r)$. Taking $a=|u|$ and $b=|v|$ proves the stated bound.} $|u+v|^{2\alpha}\le2^{2\alpha-1}(|u|^{2\alpha}+|v|^{2\alpha})$ gives
\begin{equation}
\sum_{m=0}^{n}\binom{n}{m}\left|p|t_k|^m+q(-1)^m\right|^{2\alpha}\le 2^{2\alpha-1}\left[p^{2\alpha}\sum_{m=0}^n\binom{n}{m}|t_k|^{2\alpha m}+q^{2\alpha}\sum_{m=0}^n\binom{n}{m}\right]=2^{2\alpha-1}\left[p^{2\alpha}B_\alpha^n+q^{2\alpha}2^n\right].
\end{equation}
Both $B_\alpha$ and $2$ are strictly smaller than $A_\alpha$.
Setting $r_\alpha\coloneqq \max\{B_\alpha,2\}/A_\alpha<1$, Eq.~\eqref{eq:SRE-rhopsi-exact} therefore yields
\begin{equation}
\sum_{P\in\mc{P}_n^+}\abs{\Tr(P\rho_\psi^{(n)})}^{2\alpha}=p^{2\alpha}A_\alpha^n\left[1+O(r_\alpha^n)\right].
\end{equation}
Therefore,
\begin{align}
M_\alpha(\rho_\psi^{(n)})
&=\frac{1}{1-\alpha}\log_2\left\{\frac{1}{2^n}p^{2\alpha}A_\alpha^n\left[1+O(r_\alpha^n)\right]\right\}\\
&=\frac{\log_2 A_\alpha-1}{1-\alpha}\,n
+\frac{2\alpha}{1-\alpha}\log_2p
+\frac{1}{1-\alpha}\log_2\left[1+O(r_\alpha^n)\right].
\end{align}
Since $r_\alpha<1$ and $\alpha$ is fixed, $\log_2[1+O(r_\alpha^n)]=O(r_\alpha^n)$.
Moreover,
\begin{align}
M_\alpha(\ketbra{\psi}{\psi})
&=\frac{1}{1-\alpha}\log_2\left[\frac{1}{2}\left(1+\sum_{a=x,y,z}|t_a|^{2\alpha}\right)\right]\\
&=\frac{\log_2 A_\alpha-1}{1-\alpha}.
\end{align}
Recall $p=2/(\|\mathbf t\|_1+1)$, we therefore obtain
\begin{equation}
M_\alpha(\rho_\psi^{(n)})
=n\,M_\alpha(\ketbra{\psi}{\psi})
+\frac{2\alpha}{1-\alpha}\log_2\frac{2}{\|\mathbf t\|_1+1}
+O(r_\alpha^n).
\end{equation}
Thus, although every one-qubit marginal of $\rho_\psi^{(n)}$ is stabilizer, its Pauli-moment functional $M_\alpha$ is extensive and has the same leading coefficient as that of $\ket{\psi}^{\otimes n}$; mixing changes only the constant term, up to exponentially small corrections.

\subsection{Scaling of $\rho_T^{(n)}$}

For a Pauli string $P\in\mc{P}_n^+$, let
\begin{equation}
w_{xy}(P)\coloneqq \#\{\text{sites with }X\text{ or }Y\},\quad w_{z}(P)\coloneqq \#\{\text{sites with }Z\}.
\end{equation}
Since the $X$ and $Y$ expectation values of $\ket{T}$ and $\ket{T^\perp}$ are $\pm1/\sqrt{2}$, while their $Z$ expectation values vanish, we have
\begin{equation}
\Tr(P\rho_T^{(n)})
=
\begin{cases}
2^{-w_{xy}(P)/2},&\text{ when }w_z(P)=0\text{ and }w_{xy}(P)\text{ even}.\\
0,&\text{ otherwise}.
\end{cases}
\end{equation}
Therefore,
\begin{align}
\sum_{P\in\mc{P}_{n}^{+}}
\big|\Tr(P\rho_T^{(n)})\big|^{2\alpha}
&=\sum_{\substack{k=0\\ k\text{ even}}}^{n}
\binom{n}{k}2^{k}\bigl(2^{-k}\bigr)^{\alpha}
=\sum_{k=0}^{n}
\binom{n}{k}\frac{1+(-1)^k}{2}2^{k(1-\alpha)}\\
&=\tfrac12\big[(1+2^{1-\alpha})^{n}+(1-2^{1-\alpha})^{n}\big].
\end{align}
This implies
\begin{equation}
M_{\alpha}(\rho_T^{(n)})=\frac{1}{1-\alpha}\Big[\log_{2}\big[(1+2^{1-\alpha})^{n}+(1-2^{1-\alpha})^{n}\big]-(n+1)\Big].
\end{equation}
Since
\begin{equation}
(1+2^{1-\alpha})^{n}+(1-2^{1-\alpha})^{n}=(1+2^{1-\alpha})^{n}\Big[1+\Big(\frac{1-2^{1-\alpha}}{1+2^{1-\alpha}}\Big)^n\Big],
\end{equation}
Notice that $r\coloneqq \abs{\frac{1-2^{1-\alpha}}{1+2^{1-\alpha}}}<1$ for all $\alpha$.
Therefore, we have
\begin{align}
M_{\alpha}(\rho_T^{(n)})=&\frac{1}{1-\alpha}\Big[n\log_{2}(1+2^{1-\alpha})+\log_2\Big[1+\Big(\frac{1-2^{1-\alpha}}{1+2^{1-\alpha}}\Big)^n\Big]-(n+1)\Big]\\
=&\frac{\log_{2}(1+2^{1-\alpha})-1}{1-\alpha}n-\frac{1}{1-\alpha}+\frac{1}{1-\alpha}\log_2\Big[1+\Big(\frac{1-2^{1-\alpha}}{1+2^{1-\alpha}}\Big)^n\Big]\\
=&\frac{\log_{2}(1+2^{1-\alpha})-1}{1-\alpha}n-\frac{1}{1-\alpha}+O\left(r^n\right).
\end{align}
The leading coefficient is positive and equals the single-copy SRE,
\begin{equation}
M_{\alpha}(\ketbra{T}{T})=\frac{\log_{2}(1+2^{1-\alpha})-1}{1-\alpha}>0.
\end{equation}
Hence
\begin{equation}
M_{\alpha}(\rho_T^{(n)})=nM_\alpha(\ketbra{T}{T})-\frac{1}{1-\alpha}+O\left(r^n\right).
\end{equation}

\section{Experimental characterization of ENM states and protocols}\label{app:experiments}

\subsection{Activation key experiment: complete adjacent-pair process tomography}
\label{app:activation key-tomography}

To complement the two representative induced channels shown in Fig.~\ref{fig:activationkey}(c), we report the reconstructed channels for every adjacent pair in the 10-qubit chain.
Fig.~\ref{fig:activation key-all-channels} shows the real and imaginary parts of the process matrix for each of the nine pairs.
$\text{L}\mathcal{R}$ represents the log-robustness of magic, all values of which are close to 0.

\begin{figure}[h!]
\centering
\includegraphics[width=0.98\textwidth]{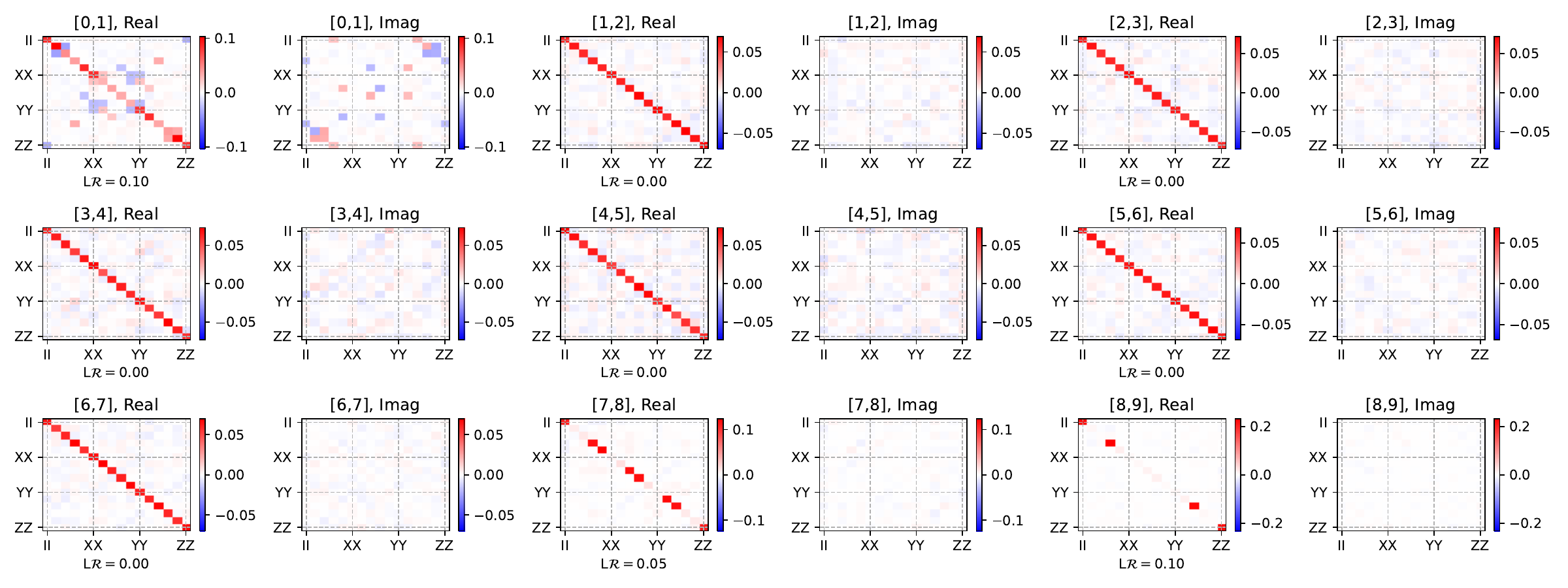}
\caption{Complete experimental adjacent-pair process tomography for the locked 10-qubit circuit in Fig.~\ref{fig:activationkey}.
For each pair $[j,j+1]$, the left and right heat maps show the real and imaginary parts of the reconstructed process matrix $\chi$, respectively.
$\text{L}\mathcal{R}$ indicates the log-robustness of magic.}
\label{fig:activation key-all-channels}
\end{figure}

\subsection{Separable ENM state experiments}\label{app:state-experiments}

\subsubsection{Two-way $T$- and $F$-type ENM states}

We prepare the two-way $T$- and $F$-type separable ENM states and conditionally extract the target magic state on either subsystem by measuring the other subsystem.
Across the five physical-qubit pairs (Ua,Da), (Ub,Db), (Uc,Dc), (Ud,Dd), and (Ue,De), the reconstructed ENM-state fidelities range from approximately $0.982$ to $0.992$.
The conditional extraction fidelities range from approximately $0.974$ to $0.993$ for $\ket{T}$ and from approximately $0.970$ to $1.000$ for $\ket{F}$.
Fig.~\ref{fig:two-way-state-experiment} shows the preparation circuits, physical-qubit layouts, and complete bidirectional-extraction data.

\begin{figure}[h!]
\centering
\includegraphics[width=0.92\textwidth]{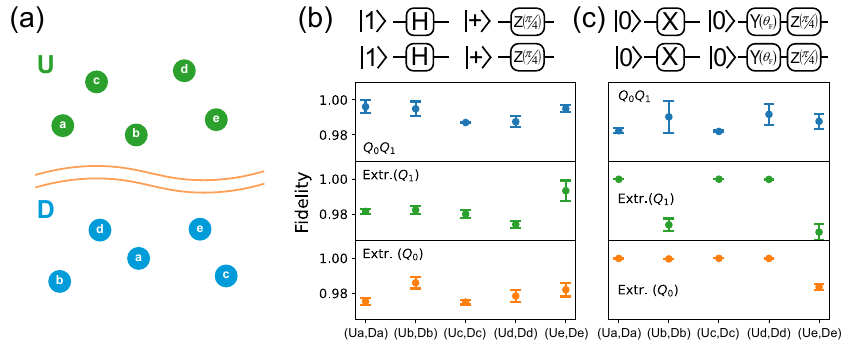}
\caption{
Experimental demonstration of separable ENM states on a multi-qubit superconducting processor.
(a) On a multiqubit chip, we selected 5 qubits from the upper half and 5 qubits from the lower half to experimentally generate the two-way T- and F-type separable entirely nonlocal magic (ENM) states.
The selected pairs are spatially separated and need not be directly connected on the device, illustrating that their classically correlated preparation does not rely on hardware connectivity.
(b) and (c) Experimental results for the T-type and F-type ENM states, respectively.
The blue data points represent the measured fidelity of the joint two-qubit mixed state ($Q_0 Q_1$).
The green (orange) data points show the fidelity of the conditional postmeasurement state on $Q_1$ ($Q_0$) with the corresponding ideal $\ket{T}$ or $\ket{F}$ state after measuring $Q_0$ ($Q_1$).
For the $T$-type state, extraction is conditioned on the $+1$ outcome of an $X$-basis measurement, implemented by applying $H$ before computational-basis readout and retaining outcome $0$; for the $F$-type state, extraction is conditioned on outcome $0$ of a $Z$-basis measurement.
Error bars denote statistical uncertainties.}
\label{fig:two-way-state-experiment}
\end{figure}

\subsubsection{Golden ENM state}

We prepare $\rho_{\mathrm{golden}}=\frac{1}{4}I\otimes I+\frac{1}{4\sqrt{5}}(X\otimes I+Y\otimes I+Z\otimes X+Z\otimes Y+Z\otimes Z)$ by sampling two product-state preparation circuits with equal probability, so the state is separable by construction.
Over ten repeated data sets, with each reported point obtained by averaging five runs, the reconstructed-state fidelity lies between approximately $0.993$ and $0.998$.
The reconstructed global logarithmic RoM lies between approximately $0.790$ and $0.802$, close to the two-qubit optimum $\ln{\sqrt{5}}\simeq0.805$, while the logarithmic RoM of each single-qubit marginal is reconstructed at the stabilizer value $0$; see Fig.~\ref{fig:golden-state-experiment}.

\begin{figure}[h!]
\centering
\includegraphics[width=0.60\textwidth]{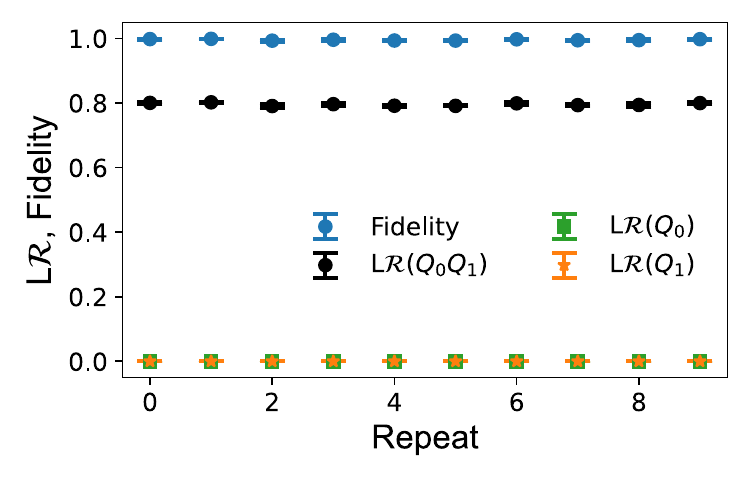}
\caption{Experimental realization of the golden ENM state.
The blue points show the reconstructed-state fidelity for ten repeated data sets.
The black, green, and orange points show the log-RoM of the joint
two-qubit state and its two single-qubit marginals ($Q_0$ and $Q_1$),
respectively.
The global log-RoM approaches the theoretical maximum
$\ln\sqrt{5}\simeq0.805$, whereas both marginal log-RoMs remain near
the stabilizer value of $0$.
Each point is obtained by averaging over five experimental runs.
Error bars indicate statistical uncertainties.}
\label{fig:golden-state-experiment}
\end{figure}

\end{document}